\begin{document}
\title{Incorporating indel channels into \\ average-case analysis of seed-chain-extend\thanks{This research was funded in part by the National Science Foundation (BIO/DBI grant 2531433)}}
\author{Spencer Gibson\inst{1}\orcidID{0000-0001-8849-6925} \and
Yun William Yu\inst{2}\orcidID{0000-0002-8275-9576}}
\titlerunning{Average-case analysis of seed-chain-extend with indels}
\authorrunning{S. Gibson and Y.W. Yu}
%
\institute{Department of Biomedical Engineering, Carnegie Mellon University, Pittsburgh, PA, USA \and
Ray and Stephanie Lane Computational Biology Department, Carnegie Mellon University, Pittsburgh, PA, USA \\
\email{ywyu@cmu.edu}}
\maketitle              
\begin{abstract}
    Given a sequence $s_1$ of $n$ letters drawn i.i.d. from an alphabet of size $\sigma$ and a mutated substring $s_2$ of length $m < n$, we often want to recover the mutation history that generated $s_2$ from $s_1$. Modern sequence aligners are widely used for this task, and many employ the seed-chain-extend heuristic with $k$-mer seeds. Previously, Shaw and Yu showed that optimal linear-gap cost chaining can produce a chain with $1 - O\left(\frac{1}{\sqrt{m}}\right)$ recoverability, the proportion of the mutation history that is recovered, in $O\left(mn^{2.43\theta} \log n\right)$ expected time, where $\theta < 0.206$ is the mutation rate under a substitution-only channel and $s_1$ is assumed to be uniformly random. However, a gap remains between theory and practice, since real genomic data includes insertions and deletions (indels), and yet seed-chain-extend remains effective. In this paper, we generalize those prior results by introducing mathematical machinery to deal with the two new obstacles introduced by indel channels: the dependence of neighboring anchors and the presence of anchors that are only partially correct. We are thus able     
    to prove that the expected recoverability of an optimal chain is $\ge 1 - O\Bigl(\frac{1}{\sqrt{m}}\Bigr)$ and the expected runtime is $O(mn^{3.15 \cdot \theta_T}\log n)$, when the total mutation rate given by the sum of the substitution, insertion, and deletion mutation rates ($\theta_T = \theta_i + \theta_d + \theta_s$) is less than $0.159$.

\vspace{1em}
\textbf{Code availability}: The code used in the experimental results section, including all simulation software, analysis code, and produced datasets, can be found at \url{github.com/Lazarus42/seed_chainer_indels}.

\keywords{Sequence alignment \and seed-chain-extend \and indel channels}
\end{abstract}

\newpage
\setcounter{page}{1}
\section{Introduction}
String alignment---i.e., determining the best way to match positions of two similar strings $s_1$ and $s_2$ under some cost function---has always been one of the central primitives in computational biology, essential for downstream biological analyses like comparing relatedness of genomes or mapping sequenced reads \cite{britten2002divergence, koonin2000impact, nurk2022complete, siren2021pangenomics}. It is closely related to the edit distance problem \cite{berger2020levenshtein, Chvátal_Sankoff_1975, kiwi2005expected, reinert2000probabilistic, ukkonen1983approximate}, as the choice of matching positions in the alignment corresponds to a series of insertions, deletions, and substitutions to transform $s_2$ into $s_1$.
To this end, Needleman-Wunsch \cite{needleman1970general} and Smith-Waterman \cite{smith1981identification} gave dynamic programming exact solutions to the global and local alignment problems in quadratic $O(mn)$ time, where $|s_1|=n$ and $|s_2|=m$. Although more efficient algorithms exist, e.g., the Four Russians Method has a time complexity of $O(mn/\log(n))$ \cite{four_russians}, it turns out that in the worst case, we cannot do polynomially better---Backurs and Indyk showed in 2015 that ``edit distance cannot be computed in strongly subquadratic time (unless SETH is false)'' \cite{backurs2015edit}.

Of course, there's more to the story.
BLAST \cite{altschul1990basic}, one of the most highly cited papers of all time \cite{van2014top}, gives a linear-time heuristic for local alignment, at the cost of optimality, and is still to this day one of the primary workhorses of bioinformatics, whereas Smith-Waterman has been relegated to being just a subroutine within heuristic software. More broadly, there are many exact sequence alignment/edit distance algorithms that have subquadratic time complexity under certain conditions—Ukkonen's method \cite{ukkonen1985algorithms} for edit distance runs in $O(s \cdot \min(m,n))$ in the worse case, where $s$ is the edit distance between the two strings, and Myer's algorithm \cite{myers1986nd} has $O(m + n + d^2)$ average-case time complexity where $d$ is the minimum edit script between the two strings. Other than exact algorithms, there are numerous heuristics that optimize sequence alignment for specific tradeoffs \cite{chaisson2012mapping, groot2022exact, ivanov2022fast, langmead2012fast}. 

One heuristic of particular interest is ``seed-chain-extend,'' which is used in modern software such as Minimap 2 \cite{li2018minimap2}. The seed-chain-extend heuristic has three stages. In \textit{seeding}, $k$-mer ``seeds'' are selected on both $s_1$ and $s_2$, and shared k-mers are marked as ``anchors'' between the two strings. Afterwards, concordant anchors are \textit{chained} together to form the skeleton of an alignment. Lastly, the space between anchors is filled in using standard worst-case quadratic-time dynamic programming in a process known as \textit{extension}.
Seed-chain-extend empirically showcases near quasilinear runtime on the similar genomic strings it is typically applied to, but is not guaranteed to find optimal alignments.

For a long time, bioinformaticians have contented ourselves to this gap between theory and practice.
In the last five years, though, theoreticians have made several new breakthroughs by defining a generative model of string evolution and revisiting average-case analysis. Analysis of string algorithms \cite{bukh2020length, lember2009lcs}, particularly average-case analysis \cite{szpankowski2011average}, historically made extensive use of generating functions; however, more recent approaches instead used tail-bounds to bound bad events, more akin to the analysis of some randomized probabilistic sketches \cite{yu2022hyperminhash}. Ganesh and Sy's 2020 breakthrough was to show that under their random mutation model, a modified dynamic programming algorithm will compute edit distance in $O(n \log n)$ time between a random string $s_1$ and a mutated string $s_2$ of near equal length \cite{ganesh2020near} with high probability.
However, their analysis worked because of concordance of their DP algorithm with their mutation model, so extending it to more practical but sophisticated heuristics like seed-chain-extend was nontrivial.

In 2023, our research group made substantial progress on proving similar results for seed-chain-extend \cite{shaw2023proving}.
Unfortunately, the machinery and techniques we introduced in that paper were insufficiently powerful to address insertions and deletions, so we had to restrict our theoretical results to a substitution-only mutation model.
Indels tend to complicate analyses and dependence structures, so most bioinformatics theoreticians either avoid directly working with them \cite{edgar2021syncmers,ondov2016mash,shaw2022theory,yu2015quality}, or adjust their algorithm and model to directly capture them \cite{ganesh2020near}.
We were able to run empirical benchmarks with indels that closely tracked our substitution-only theory (including accurate predictions of exponents), so we believed without proof the theorems were also correct for indels \cite{shaw2023proving}.
In this sequel, we make progress on that remaining gap between theory and practice, generalizing our prior machinery and techniques to handle indels.

\section{Problem Introduction and Strategy Overview}

\subsection{The seed-chain-extend algorithm}
Given strings $S$, $S'$, we can ask about the likely sequence of edits transforming $S$ into $S'$. Seed-chain-extend answers this question with three steps: seeding, chaining, and extension. 
Seeding is the process by which k-mers (``seeds") are selected on both $S$ and $S'$ and matching seeds are called anchors. Formally, an anchor of length $k$ occurs at $(i,j)$, i.e., starting at position $i$ in $S$ and position $j$ in $S'$, if $S[i:i+k-1] = S'[j:j+k-1]$ (indexing is right-inclusive). 
Second is chaining: a chain $\mathcal{C} = ((i_1, j_1), \dots, (i_u, j_u))$ is a sequence of anchors where $i_\ell < i_{\ell + 1}$ and $j_{\ell} \le j_{\ell + 1}$ for all $1 \le \ell < u$. The chaining score used to determine the optimal chain in this paper is the linear gap cost. For the given chain $\mathcal{C}$, this would be $u - \xi (i_u - i_1 + j_u - j_1)$, or, equivalently, $u - \sum_{\ell=2}^u \xi (i_{\ell + 1} - i_{\ell} + j_{\ell + 1} - j_{\ell})$.
Given the optimal chain $\mathcal{C} = ((i_1, j_1), \dots, (i_u, j_u))$, we use classical dynamic programming to extend between consecutive anchors based on any alignment score. In this work, we assume that the reference string $S$ is already seeded, while the query string $S'$ is not.

\subsection{Challenges to analysis of seed-chain-extend}
Several difficulties arise in proving average-case results.
First, chaining and extension have different optimization objectives.
When considering overall performance of the heuristic, it is possible for failures to happen at any stage.
By failure, we mean anything that leads to bad downstream events---not finding the correct string alignment or taking too long to find that alignment.
A bad chain can result from a failure in chaining, or just a bad selection of anchors in seeding. Similarly, extension failure can be a result of bad chaining, or because the extension procedure does not itself find the right alignment, despite the chaining being ``good''.
Also note that the alignment problem under a mutation model diverges somewhat from the edit distance problem.
The best scoring alignment corresponds to some edit distance between the strings, but arguably the ``correct'' alignment (at least from a biological perspective) is the one that reflects the mutations that happened to transform $s_1$ into $s_2$.

To resolve these difficulties, in the prequel \cite{shaw2023proving}, we introduced the concept of ``recoverability'', which decoupled chaining accuracy from extension accuracy. Extension is only performed in the gaps between anchors on the chain, so roughly speaking, recoverability measures how much of the correct alignment can possibly be recovered given a chain.
By structuring our problem thus, we can focus on just the seeding and chaining---i.e. how good is the chain as a starting point for the extension phase.
Importantly, recoverability of a chain is a theoretical measure of the goodness of the chain, as opposed to the optimization criterion used to generate the chain, such as linear-gap cost chaining.

\subsection{Difficulty from indels}
Unfortunately, our definition of recoverability in the prequel relied on the substitution-only error model, where the correct alignment of a mutated substring $s_2$ to $s_1$ is a diagonal in the alignment matrix (the ``homologous diagonal'').
Recoverability was defined as the proportion of the homologous diagonal covered by anchors on the optimal chain and the dynamic programming extension blocks between anchors.
With indels, the correct alignment is no longer a straight diagonal.
In this sequel, we must generalize the homologous diagonal to a ``homologous path''.

In the same vein, indels also mess up the matching of indices between $s_1$ and $s_2$. This is not only notationally very tricky to reconcile, but also make it hard to discuss the dependence structure of positions in anchors, which is necessary for the limited-dependence Chernoff bounds we used in the prequel.

Finally, having indels allows for ``no operations'' (``no-ops'') -- sequences of mutations that preserve the string -- e.g. deletion and insertion of the same letter.
This creates a problem with defining recoverability of a homologous path---the alignment specified by the homologous path will have a kink that cannot be found via $k$-mer matching.
The ``correct'' alignment will not be found by any reasonable alignment method. This is not a problem for recoverability in extension regions, as it still could be an alignment produced by the extension block, but it cannot be found as an anchor, which only gets exact matches.
In addition to exact no-ops, where the correct and lowest distance alignments no longer match, indels can also create alternate equivalent edit distance paths, of which only one can be the path history, such as if an insertion duplicates existing letters
\vspace{-0.5em}\begin{alltt}
         Exact no-op                                  Partial no-op
     AC-GT         ACGT                          ACG--T            A--CGT
     ||  |   vs.   ||||                          |||  |     vs.    |  |||
     ACG-T         ACGT                          ACGCGT            ACGCGT
\end{alltt}
\vspace{-0.5em}
It turns out that this problem cannot be resolved satisfactorily with the prequel's definition of recoverability, so we will have to introduce a more sophisticated definition of recoverability.

\subsection{Proof structure and motivation}
The basic intuition behind the prequel \cite{shaw2023proving} is that given reasonably low mutation rates, the optimal chain under linear-gap-cost chaining will be close to correct in the sense that most of the anchors will lie on the homologous diagonal. A gap between anchors can either be homologous if the anchors flanking it are both on the homologous diagonal, or non-homologous if at least one of the two anchors is off the homologous diagonal. Non-homologous gaps can lead to ``breaks'', which are regions of the string where extension through the gaps does not cover the homologous diagonal, leading to a decrease in recoverability. However, with high probability, each break has size $<\sqrt{m}$ and the number of breaks is small, so the recoverability will be high. Additionally, the runtime can be bounded by extension time through all the homologous and non-homologous gaps, which are small in an optimal chain.

Roughly speaking, the intuitive reason the above strategy works is that substitutions are much more likely to break anchors than they are to create spurious anchors. So long as we remain in the regime where sufficiently many anchors can still be found, the optimal chain is close to the correct chain, and the recoverability will be high.
When indel channels are added to the mix, the above logic still basically holds. The major difference we handle is a new class of anchors, termed ``clipping'' anchors (Fig. \ref{fig:combined-recoverability-matchgraph}A), which lie partially on the homologous path. Clipping anchors contribute to recoverability in a similar way as homologous anchors: extending through gaps flanked by clipping anchors still contains the bulk majority of the path in the gap. However, they behave like spurious anchors in that the anchors themselves may not recover points on the homologous path. 

By generalizing all the theorems in the prequel \cite{shaw2023proving}, we conclude that the expected (prequel) recoverability of an optimal chain is $\ge 1 - O\bigl((\log n)^2\,(1-\theta)^k\bigr)$. The reader will find proofs for mathematically involved lemmas and theorems in Appendix \ref{sec:old-recoverability}. Notice that this is asymptotically worse than the $1- O(\frac{1}{\sqrt{m}})$ recoverability in the substitution-only setting, and does not match empirics, necessitating the recoverability generalization we do here.

\subsection{Preliminaries}\label{sec:preliminaries}

In this section, we present three important concepts for our analysis: the mutation model, which defines how the string $S'$ is given from $S$, the homologous path, which represents the edit history transforming $S$ into $S'$, and the recoverability of a chain given the edit history. Informally, the mutation model consists of passing a substring of $S$ through a channel. Independently, at each letter of this substring of $S$, there can be a substitution, a deletion, or a random string inserted to the left:

\begin{definition}
    [Mutation model] Let $S = x_1 x_2 \cdots x_{n}$, $n = |S|$, be a string where each letter \(x_i\) is sampled i.i.d.\ from an alphabet of size \(\sigma\).  The mutated substring $S'$, $m = |S'|$, is obtained by passing the substring $S[p+1:p+m']$ through a mutation channel where for each position $p + j$, $1 \le j \le m'$, the input symbol \(S[p+j]\) undergoes the following mutations independently:
    \vspace{-0.5em}
\begin{itemize}
  \item \textbf{Substitution} (probability \ \(\theta_s\)): \(S[p+j]\) is replaced by a different letter, chosen uniformly.
  \item \textbf{Deletion} (probability \ \(\theta_d\)): \(S[p+j]\) is deleted.
  \item \textbf{Insertion} (probability \ \(\theta_i\)): a random string of length $L \sim \mathrm{Geom}(1-\rho_i')$ is inserted at position \(p+j\).
\end{itemize}
\end{definition}

\begin{remark}
    The probability that at position $p + j$, where $p+1 \le p + j \le p+m'$, there is no insertion, deletion, or substitution is $(1 - \theta_d)(1 - \theta_s)(1 - \theta_i) = 1 + (\theta_i \theta_d + \theta_i \theta_s  + \theta_d \theta_s) - (\theta_d + \theta_s + \theta_i) \ge 1 - (\theta_d + \theta_s + \theta_i)$. We will refer to the ``total'' mutation rate as $\theta_T = \theta_d + \theta_s + \theta_i$.
\end{remark}

In the definition above, insertion is to the ``left'' of the position at which it occurs. For example, if $S = ACCTAG$, and the string $TTACA$ is inserted at position $2$, then $S'$ would be $AC(TTACA)CTAG$.

If the mutation model only allows for substitutions (\cite{shaw2023proving}), the edit history is simple: each position in $S'$ is matched with a single corresponding position in $S$. With this, the optimal alignment was defined simply to be the diagonal path matching every position $p+j$ in $S$ to $j$ in $S'$, i.e., the set of points $\{(p+j, j) \mid 1 \le j \le m' \}$. 

We extend the same intuition to the indel case: the optimal alignment should capture the edit history transforming $S$ into $S'$. We call this edit history the \textit{homologous  path}, which generalizes the homologous diagonal from the prequel \cite{shaw2023proving}. One reasonable way to define the edit history is following the canonical alignment from Ganesh and Sy \cite{ganesh2020near}, which is detailed below.

\begin{definition}[Homologous path (Inspired by Ganesh and Sy)]\label{def:canonical-alignment}
Let $S$ be the original length $n$ string with i.i.d. letters sampled from the alphabet of size $\sigma$, and $S'$ the string resulting from passing $S[p+1:p+m']$ through the mutation channel where edits $\mathcal{E}$ are applied.

The homologous path $P_H$ between $S$ and $S'$ is represented as a list that initially contains \((p,0)\). Let $(i,j)$ be the last element of $P_H$. We iteratively append points to $P_H$ based on the mutations that occur at position $i + 1$ in $S$, assuming $p + 1 \le i + 1 \le p + m'$, following $\mathcal{E}$:
\vspace{-0.5em}
\begin{itemize}
  \item If no insertion or deletion occurs at position \(i + 1\), i.e., a substitution occurs or there is no mutation, append the point $(i+1,j+1)$ to $P_H$.

  \item If an insertion of \(I\) letters occurs at position \(i + 1\) and no deletion occurs, append the points
  $(i,j+1),\,\dots,\,(i,j+I),\,(i+1,j+I+1)$ to $P_H$.

  \item If a deletion and no insertion occurs at position \(i + 1\), append the point $(i+1,j)$ to $P_H$.

  \item If both an insertion of \(I\) letters and a deletion occur at position \(i + 1\), append the points
  $(i,j+1),\,\dots,\,(i,j+I),\,(i+1,j+I)$ to $P_H$.
\end{itemize}
\end{definition}

As an example, consider the string $S$ = TACTTCGC, which undergoes the following mutations: T is inserted at position $4$, position $5$ is deleted, the letter at position $6$ is substituted to a T, and the letter at position $7$ is substituted to an A. These edits result in a string $S'$=TACTTTAC. The dependence structure (match graph) between $S$ and $S'$ is shown in Fig.~\ref{fig:combined-recoverability-matchgraph}C, and the homologous path is shown in the Appendix (Fig.~\ref{fig:homologous_path}).

\begin{figure}[tbp]
\centering

\begin{subfigure}[t]{0.65\textwidth}
    \centering
    \includesvg[width=\textwidth]{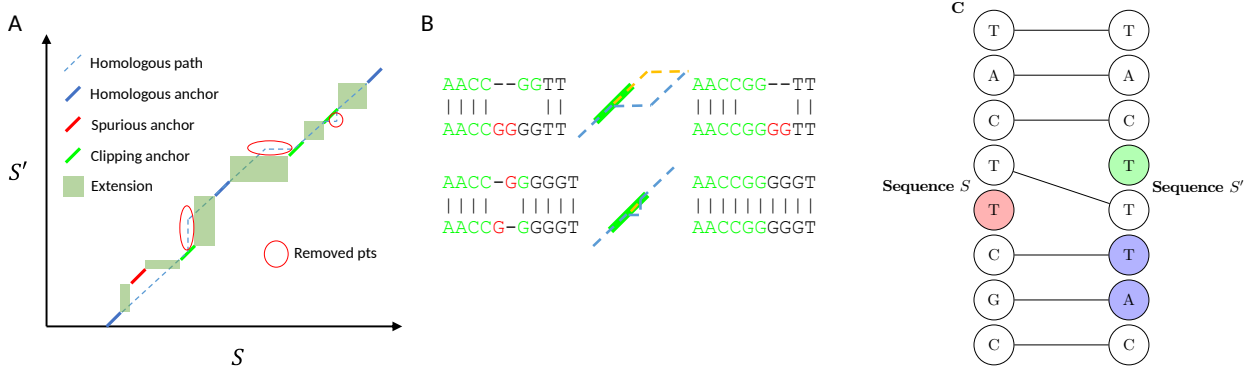}
\end{subfigure}%
\hfill%
\begin{subfigure}[t]{0.3\textwidth}
\centering
\resizebox{\textwidth}{!}{%
\begin{tikzpicture}[scale=0.9, every node/.style={font=\small}]
  \node[font=\bfseries] at (-0.8,-0.5) {\textbf{C}};

  \foreach \i/\base in {1/T,2/A,3/C,4/T,6/C,7/G,8/C} {
    \node[circle,draw,minimum size=8mm] (S\i) at (0,-\i) {\base};
  }
  \node[circle,draw,fill=red!30,minimum size=8mm] (S5) at (0,-5) {T};

  \node[left=10pt] at (0,-4.5) {\textbf{Sequence }$S$};

  \foreach \i/\base in {1/T,2/A,3/C,5/T} {
    \node[circle,draw,minimum size=8mm] (Sprime\i) at (3,-\i) {\base};
  }

  \node[circle,draw,fill=green!30,minimum size=8mm] (Sprime4) at (3,-4) {T};

  \node[circle,draw,fill=blue!30,minimum size=8mm] (Sprime6) at (3,-6) {T};
  \node[circle,draw,fill=blue!30,minimum size=8mm] (Sprime7) at (3,-7) {A};
  \node[circle,draw,minimum size=8mm] (Sprime8) at (3,-8) {C};

  \node[right=10pt] at (3,-4.5) {\textbf{Sequence }$S'$};

  \draw (S1) -- (Sprime1);
  \draw (S2) -- (Sprime2);
  \draw (S3) -- (Sprime3);
  \draw (S4) -- (Sprime5);
  \draw (S6) -- (Sprime6);
  \draw (S7) -- (Sprime7);
  \draw (S8) -- (Sprime8);
\end{tikzpicture}%
} 
\end{subfigure}

\caption{
\textbf{(A-B)} Generalized recoverability. (a) Homologous anchors lie entirely on the path, spurious anchors entirely off, and clipping anchors partially on/off. For the purposes of recoverability, we remove points that share an x- or y-coordinate with a clipped point. (b) Removing points corresponds to allowing alternate just-as-good paths where the clipping anchor is homologous.
\textbf{(C)} The match graph resulting from the mutation process that gives $S'$ from 
$S = \text{TACTTCGC}$, including a deletion (red), insertion (green), substitutions (blue), and matches (clear). Horizontal lines represent corresponding positions between the sequences. Specifically, in $S$, an insertion of the letter T occurs at position $4$, position $5$ is deleted and the characters at positions $6,7,$ and $8$ are mutated.
}
\label{fig:combined-recoverability-matchgraph}
\end{figure}

We end this section by defining constants and bounds on constants that will be used throughout.

\begin{definition}[Key definitions and assumptions]\label{sce_defs}
    The constant $\alpha = -\log_{\sigma}(1-\theta_T)$ represents the expected per-base contribution to the matching length between $S$ and $S'$. We will consistently refer to the lengths $n = |S|$, $m' = |S[p+1:p+m']|$, and $m = |S'|$.
    
    The length of the seeds, $k$, is given by $k = C\log_{\sigma}(n)$ for any $C > \frac{3}{1 - 2\alpha}$. We will write $\log = \log_{\sigma}$ for short. We can write $C\alpha = \frac{3\alpha}{1 - 2\alpha} + \delta$. Since $\sigma \ge 4$ and $\theta_T < 0.159$, we have that $\alpha < \frac{1}{8}$. We can always choose $\delta > 0$ small enough such that $C\alpha < \frac{1}{2}$. 

    Let $c_0 = \max(\frac{1}{2}\ln(\frac{9}{1 + 8\gamma}), \frac{21}{\beta}) \le 30$, where $\beta=\log_{\sigma}(e)$. The gap penalty constant in the seed chain extend linear-gap cost will be $\xi = \frac{1}{n}$, and $g(n) = \frac{50k}{8(1 - \theta_T)^k} \ln(n)$.
    
    The length of the generative region in $S$ is $m' = \Omega(n^{2C\alpha + \epsilon})$, for an arbitrarily small enough $\epsilon > 0$. Note such a choice is possible since $C\alpha < 1/2$ and $n^{2C\alpha + \epsilon} < n$. By our choice of variables, $\sigma^k = n^C$ and $(1-\theta_T)^k = n^{-C\alpha}$, which we will use repeatedly. Note also that $k = C\beta\text{ln}(n)$.

    Recall that the cumulative mutation rate is $\theta_T = \theta_s + \theta_d + \theta_i$. We require that $\theta_T < 0.159$.

\end{definition}

Note that $\frac{3\alpha}{1 - 2\alpha} = \frac{-3\log(1-\theta_T)}{2+\log(1-\theta_T)}$ is convex in $0 \le \theta_T < 0.159$. Then, because $C > \frac{3}{1 - 2\alpha}$, and $C\alpha = \frac{3\alpha}{1 - 2\alpha} + \delta$ for any $\delta > 0$, an upper bound is $C\alpha < 3.15 \cdot \theta_T$, slightly worse than the $2.43\theta$ upper bound in the substitution-only case~\cite{shaw2023proving}.
As a concrete example, if $\theta_T = 0.05$, the divergence between human and chimpanzee genomes, $C\alpha \approx \frac{-3\log(0.95)}{1 + 2\log(0.95)} \le 0.12$, so runtime of seed-chain-extend would be $O(n^{1.12}\log(n))$. If $n = |S| = |S'| \approx 3\times 10^9$, the size of the human genome, $n^{1.12} \approx 40 n$, and the runtime of seed-chain-extend is essentially $O(n\log n)$.

Intuitively, string alignment should recreate the simplest edit history for a given scoring function. Excluding ``no-ops'' from the set of recoverable points ensures that our definition does not penalize such simplicity. For example, if $(i,j)$ is a point in $P_H$ and $L > 0$ is maximal such that $S[i+1:i+L] = S'[j+1:j+L]$ and $(i+\ell, j+\ell) \notin P_H$ for $1 \le \ell \le L$, then any reasonable alignment method will align $(i+\ell, j+\ell)$. Since the edits causing $P_H$ to not include these points leaves the substrings of $S$ and $S'$ unchanged, all points in the offending region are removed when determining recoverability. 

\begin{definition}[Defining non-recoverable regions]
    For a point $(i,j) \in P_H$, define the right non-recoverable region $NR(i,j)$ and the left non-recoverable region $NL(i,j)$ as follows:

    First we will define the lengths $r(i,j)$ and $l(i,j)$ of the non-recoverable regions.

\noindent    $r(i,j) = \max\{t \ge 0 \mid S[i+1:i+t] = S'[j+1:j+t]$ and $(i+\ell, j+\ell) \notin P_H$ for $1 \le \ell \le t\}$ if such a $t$ exists, and $0$ otherwise.

\noindent     $l(i,j) = \max\{t \ge 0 \mid S[i-1:i-t] = S'[j-1:j-t]$ and $ (i-\ell, j-\ell) \notin P_H$ for $1 \le \ell \le t\}$ if such a $t$ exists, and $0$ otherwise.

    Then the right non-recoverable region is $NR(i,j) = \{(x,y) \in P_H \mid i< x \le i + r(i,j) \lor j < y \le j + r(i,j)\}$.

    The left non-recoverable region is $NL(i,j) = \{(x,y) \in P_H \mid i \ge x > i - l(i,j) \lor j \ge y > j - l(i,j)\}$.
\end{definition}

Having defined the set of non-recoverable points, our generalized recoverability can be defined by removing all non-recoverable points from the homologous path. The recoverability of a chain, $\mathcal{C} = ((i_1,j_1),\dots,(i_u,j_u))$, loosely speaking, is the fraction of the homologous path $P_H$, that lies in the extension of anchors in the chain and in gap extensions. We define recoverability to account for all \textit{possible} alignments given by the chain, which means that we count any portion of the homologous path in an extension as ``recovered'' since, in theory, it could be recovered by some extension algorithm. Recoverability will be the fraction of recoverable points that are, in fact, recovered by the chain. We formalize this intuition below. 

\begin{definition}[Generalized recoverability]
    Given a chain $\mathcal{C} = ((i_1,j_1),\dots,(i_u,j_u))$, we define the union of all possible alignments for the chain $\mathcal{C}$, $Align(\mathcal{C})$, as:
    $$Align(\mathcal{C}) = \bigcup_{\ell=1}^u \{ (i_\ell,j_\ell),\dots, (i_\ell +k-1, j_\ell + k - 1)  \} \,\ \cup \,\ \bigcup_{\ell=1}^{u-1} Ext(\ell).$$ Where $Ext(\ell) = \{i_\ell + k - 1,\dots,i_{\ell+1}\} \times \{j_\ell + k - 1,\dots,j_{\ell+1}\}$. If $i_\ell + k - 1> i_{\ell + 1}$ or $j_\ell + k - 1 > j_{\ell + 1}$, then $Ext(\ell) = \emptyset$. 
    
    Denote the set of non-recoverable points as $U = \bigcup_{(i,j) \in P_H}(NR(i,j)\cup NL(i,j))$. The recoverability of the chain, $R(\mathcal{C})$, is defined to be:
    $$R(\mathcal{C}) = \frac{|Align(\mathcal{C}) \cap P_H\setminus U|}{|P_H \setminus U|}$$
    
\end{definition}

Lastly, we introduce the three types of anchors that are relevant to our analysis: homologous, clipping, and spurious anchors (Fig.~\ref{fig:combined-recoverability-matchgraph}A). Visually, anchors always appear as diagonals of length $k$ in the alignment matrix. A homologous anchor lies entirely on the homologous path. Unlike the substitution-only case, under indel channels, anchors can touch the homologous path at a number of points without lying entirely on it-- these are called clipping anchors. Spurious anchors remain the same as in the prequel \cite{shaw2023proving}: anchors that lie entirely off the homologous path. We now formally define each anchor type.

\begin{definition}[Defining anchor types]\label{anchor-types}
    For notational ease, let $A = \{(i+t, j+t) \mid 0 \le t \le k-1\} $ and $B = \{(x,y) \in P_H \mid i \le x \le i+k-1 \ \land\ j \le y \le j+k-1\}$. If there exists an anchor at $(i,j)$, then $A$ is the set of points belonging to that anchor, and $B$ is the set of points on the homologous path between (inclusive) $(i,j)$ and $(i+k-1,j+k-1)$. The anchor at $(i,j)$ is \textbf{homologous} if $A = B$, \textbf{spurious} if $A \cap B = \emptyset$, and \textbf{clipping} otherwise.

\end{definition}

\subsection{Anchor-count concentration bounds}\label{sec:anchor-concentration}

The general strategy of our proof is to show that with high probability there are no spurious anchors and that there are not large uncovered gaps in the homologous path before the first anchor or after the last anchor in any optimal chain. In Appendix \ref{sec:uncovered-gap-bounds}, we analyzed the dependency structure between $S$ and $S'$, which enables us to derive concentration inequalities and tail bounds on the number of spurious anchors $N_S$.

In line with the general proof strategy, this section establishes two main results. The first is an expansion-contraction bound on k-mer blocks in $S$ and $S'$: k-mers in $S[p+1:p+m']$ cannot correspond to more than $O(k)$ points in $S'$, and k-mers in $S'$ cannot correspond to more than $O(k)$ points in $S$. When the expansion-contraction bound holds, which it will with high probability, the homologous path will be relatively well-behaved. The anchor independence lemmas derived in Appendix \ref{sec:uncovered-gap-bounds} will then allow us to bound the variance of the number of spurious anchors, $N_S$, and to use this bound to show that, with high probability, no spurious anchors occur \textit{at all}. 

We proceed with the first result: any $k$-mer in $S[p+1:p+m']$ cannot expand by more than $O(k)$ and blocks of length $\ell = \frac{21}{\beta}$ in $S[p+1:p+m']$ cannot contract to size less than $\frac{(1-\theta_d)\ell}{2}$ with high probability in the generation of $S'$. The proof of this result relies on two key lemmas in Appendix \ref{convergence_bounds}: the bounded expansion and bounded contraction lemmas. The idea behind their proofs is to show that the probability of expansion or contraction, for each relevant block, is $\le \frac{1}{n^2}$ through Chernoff bounds and moment-generating function inequalities. Applying a union bound over all blocks, of which there are at most $n$, gives a probability of $\le \frac{1}{n}$ that any bad event occurs.

\begin{lemma}[Defining the Expansion-Contraction ($\bm{EC}$) space]\label{lemma:EC_space}
    Let $t_0 = \frac{1}{2}\ln(\frac{9}{1 + 8\gamma})$ and $\ell = \frac{21k}{\beta}$. With probability $ \ge 1 - \frac{2}{n}$, no $k$-mer in $S[p+1:p+m']$ has more than $\frac{1}{t_0}(\frac{2}{\beta} + 1)k$ inserted base pairs in $S'$ and no $\ell$-block in $S[p+1:p+m']$ contracts to size $\le \frac{(1 - \theta_d)\ell}{2}$.
\end{lemma}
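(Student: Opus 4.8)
The plan is to reduce the lemma to two per-block tail bounds and then combine them with a union bound, matching the ``at most $n$ blocks, each failing with probability $\le 1/n^2$'' heuristic sketched above. Concretely, I would first establish (i) a \emph{bounded expansion} statement: for a fixed $k$-mer starting at some position $i$ in $S[p+1:p+m']$, the number of base pairs inserted within it during the generation of $S'$ exceeds $\frac{1}{t_0}(\frac{2}{\beta}+1)k$ with probability $\le \frac{1}{n^2}$; and (ii) a \emph{bounded contraction} statement: for a fixed $\ell$-block, its image in $S'$ has length $\le \frac{(1-\theta_d)\ell}{2}$ with probability $\le \frac{1}{n^2}$. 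Given these, the lemma follows at once: there are at most $m' \le n$ starting positions for $k$-mers and at most $m' \le n$ starting positions for $\ell$-blocks, so a union bound over both families gives total failure probability $\le \frac{m'}{n^2} + \frac{m'}{n^2} \le \frac{2}{n}$, i.e.\ the $EC$ event holds with probability $\ge 1 - \frac{2}{n}$. The union bound needs no independence between overlapping blocks, which is what lets us quantify over all $k$-mers and $\ell$-blocks simultaneously.

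For the expansion bound (i), I would write the number of inserted letters inside the $k$-mer as $X = \sum_{t=1}^{k} X_t$, where the $X_t$ are i.i.d., equal to $0$ with probability $1-\theta_i$ and distributed as $\mathrm{Geom}(1-\rho_i')$ with probability $\theta_i$. Since positions mutate independently, the MGF factorizes, $\mathbb{E}[e^{sX}] = (\mathbb{E}[e^{sX_1}])^k$, and a Chernoff step gives $\Pr[X > a] \le e^{-sa}(\mathbb{E}[e^{sX_1}])^k$ for any $s$ below the radius of convergence of the geometric MGF. I would evaluate this at $s = t_0 = \frac12\ln\!\bigl(\frac{9}{1+8\gamma}\bigr)$, which is chosen precisely so that (a) $t_0$ stays below $\ln(1/\rho_i')$ and the geometric MGF is finite, and (b) $\mathbb{E}[e^{t_0 X_1}]$ is bounded by a constant small enough that, after substituting the threshold $a = \frac{1}{t_0}(\frac{2}{\beta}+1)k$, the factor $e^{-t_0 a} = e^{-(\frac{2}{\beta}+1)k}$ dominates the growth of $(\mathbb{E}[e^{t_0 X_1}])^k$. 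Using $k = C\beta\ln n$ and $\frac{2}{\beta} = 2\ln\sigma$, the overall exponent is linear in $k = \Theta(\ln n)$ with the correct sign, so the per-$k$-mer probability is $\le n^{-2}$.

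The contraction bound (ii) is more routine. The length of the image of an $\ell$-block in $S'$ is at least the number of positions in the block that are \emph{not} deleted (insertions and substitutions only preserve or increase length), and that count is distributed as $\mathrm{Binomial}(\ell, 1-\theta_d)$ with mean $(1-\theta_d)\ell$. A multiplicative lower-tail Chernoff bound with deviation $\tfrac12$ yields $\Pr[\text{image} \le \tfrac{(1-\theta_d)\ell}{2}] \le \exp\!\bigl(-\tfrac{(1-\theta_d)\ell}{8}\bigr)$. Substituting $\ell = \frac{21k}{\beta} = 21C\ln n$ makes this $n^{-21C(1-\theta_d)/8}$; since $C > \frac{3}{1-2\alpha} > 3$ and $1-\theta_d > 1 - \theta_T > 0.841$ is bounded away from $0$, the exponent comfortably exceeds $2$, so this probability is $\le n^{-2}$ as needed.

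I expect the expansion bound (i), specifically part (b), to be the main obstacle: verifying that the particular $t_0 = \frac12\ln\!\bigl(\frac{9}{1+8\gamma}\bigr)$ keeps the compound-geometric MGF $\mathbb{E}[e^{t_0 X_1}]$ small enough that the Chernoff exponent is negative and large in magnitude. Geometric insertion lengths are heavier-tailed than the Bernoulli survival variables of the contraction argument, so the MGF is finite only on a bounded interval, and the constant bookkeeping relating $\gamma$, $\rho_i'$, $\theta_i$, and $t_0$ must be carried out carefully to land the threshold at exactly $\frac{1}{t_0}(\frac{2}{\beta}+1)k$. The contraction half and the final union bound are comparatively mechanical.
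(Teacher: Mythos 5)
Your proposal is correct and follows essentially the same route as the paper: the paper proves Lemma~\ref{lemma:EC_space} by a union bound over the two events established in Lemma~\ref{expansion-bound} (compound-geometric MGF/Chernoff bound at $t_0$, threshold $\frac{1}{t_0}(\frac{2}{\beta}+1)k$, per-$k$-mer failure $\le n^{-2}$) and Lemma~\ref{contraction-bound} (multiplicative Chernoff on $\mathrm{Binomial}(\ell,1-\theta_d)$ with $\delta=\tfrac12$), exactly as you outline. The one step you defer --- verifying that $\mathbb{E}[e^{t_0 X_1}]\le e$ for the chosen $t_0$ --- is precisely what the paper's Lemma~\ref{convergence_bounds} supplies, so your identification of that as the remaining bookkeeping is accurate.
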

\begin{proof}
    Let $E_1$ be the event that no $k$-mer in $S[p+1:p+m']$ has more than $\frac{1}{t_0}(\frac{2}{\beta} + 1)k$ inserted base pairs in $S'$ and let $E_2$ be the event that no $\ell$-block in $S[p+1:p+m']$ contracts to size $\le \frac{(1 - \theta_d)\ell}{2}$.
    
    By Lemmas \ref{expansion-bound} and \ref{contraction-bound}, we have $\Pr(E_1^c) \le \frac{1}{n}$ and $\Pr(E_2^c) \le \frac{1}{n}$. A simple union bound gives that $\Pr(E_1^c \lor E_2^c) \le \frac{2}{n}$, from which it immediately follows that $\Pr(E_1 \land E_2) \ge 1 - \frac{2}{n}$.
\end{proof}

We will refer to the space where the bounded contraction and expansion lemmas are jointly satisfied as $\bm{EC}$. Under the $EC$ space, we bound the variance of the number of spurious anchors. Using the variance of $N_S$, we then bound its value with high probability.

\begin{lemma}[Working in $\bm{EC}$]\label{var_bound}
    $\mathbb{E}(N_S^2) \le \mathbb{E}(N_S)^2 + 2T_0k^2\frac{mn}{\sigma^k}$ where $T_0 = 2\max((\frac{1}{t_0}(\frac{2}{\beta} + 1) + 3)^2, 4, \frac{21}{\beta})$. Thus, $\mathrm{var}(N_S) \le T_0k^2\frac{mn}{\sigma^k}$.
\end{lemma}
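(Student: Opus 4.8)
The plan is to treat $N_S$ as a sum of indicators and control its second moment by separating anchor pairs whose match-indicators factorize from the few that do not. Write $N_S = \sum_{(i,j)} X_{ij}$, where $X_{ij}$ is the indicator that a \emph{spurious} anchor occurs at $(i,j)$ and the sum runs over the $O(mn)$ candidate positions. Because a spurious anchor lies entirely off $P_H$, its reference $k$-mer $S[i:i+k-1]$ and its query $k$-mer $S'[j:j+k-1]$ are generated from disjoint sources, so they are independent and uniform; hence $\mathbb{E}(X_{ij}) \le \sigma^{-k}$ and $\mathbb{E}(N_S) \le mn\,\sigma^{-k}$. Expanding, $\mathbb{E}(N_S^2) = \sum_{(i,j),(i',j')} \mathbb{E}(X_{ij}X_{i'j'})$, and I would split this sum into \emph{independent} pairs, meaning those with $\mathbb{E}(X_{ij}X_{i'j'}) = \mathbb{E}(X_{ij})\,\mathbb{E}(X_{i'j'})$, and the remaining \emph{dependent} pairs (folding the diagonal $(i,j)=(i',j')$ into the latter).

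For the independent pairs I would extend the sum to all ordered pairs using nonnegativity, so that $\sum_{\mathrm{indep}} \mathbb{E}(X_{ij})\,\mathbb{E}(X_{i'j'}) \le \bigl(\sum_{(i,j)} \mathbb{E}(X_{ij})\bigr)^2 = \mathbb{E}(N_S)^2$. For each dependent pair I would use only the crude bound $\mathbb{E}(X_{ij}X_{i'j'}) \le \mathbb{E}(X_{ij}) \le \sigma^{-k}$ (since $X_{i'j'}\le 1$), so the dependent contribution is at most $\sigma^{-k}$ times the number of dependent pairs. It then suffices to show, under $EC$, that the number of dependent (including diagonal) pairs is at most $2T_0 k^2 mn$, which immediately yields $\mathbb{E}(N_S^2) \le \mathbb{E}(N_S)^2 + 2T_0 k^2 \tfrac{mn}{\sigma^k}$; the stated variance bound follows by subtracting $\mathbb{E}(N_S)^2$.

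The crux is the counting step, and it is where the anchor-independence lemmas of Appendix~\ref{sec:uncovered-gap-bounds} and the $EC$ event do the real work. The first observation to establish is that $X_{ij}$ and $X_{i'j'}$ factorize whenever the two anchors overlap in \emph{at most one} coordinate: if only the reference intervals overlap (or only the query intervals, or only a query origin meets the other anchor's reference side), a direct computation conditioning on the shared reference letters shows the joint match probability collapses back to $\sigma^{-2k}$, because the ``free'' matching variable of each anchor remains independent uniform. Genuine positive correlation arises only when the anchors overlap in \emph{both} the $S$ and $S'$ coordinates in a linked way (the indel generalization of ``nearly the same diagonal''). The role of $EC$ is to bound how wide these overlap windows can be: by the bounded-expansion lemma (\ref{expansion-bound}) a reference $k$-mer spreads over at most $\bigl(\tfrac{1}{t_0}(\tfrac{2}{\beta}+1)+3\bigr)k$ positions of $S'$, and by the bounded-contraction lemma (\ref{contraction-bound}) an $\ell$-block with $\ell = \tfrac{21k}{\beta}$ cannot shrink below $\tfrac{(1-\theta_d)\ell}{2}$, so each anchor's footprint is an $O(k)$-wide window in both coordinates. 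Fixing $(i,j)$, a genuinely dependent partner $(i',j')$ must then lie within an $O(k)$ window of $i$ \emph{and} an $O(k)$ window of $j$, giving at most $2T_0 k^2$ partners once the three footprint regimes—governed respectively by $\bigl(\tfrac{1}{t_0}(\tfrac{2}{\beta}+1)+3\bigr)^2$, the constant $4$, and $\tfrac{21}{\beta}$ inside the maximum defining $T_0$—are accounted for. Summing over the $mn$ choices of $(i,j)$ yields the $2T_0 k^2 mn$ bound.

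I expect the counting of dependent pairs to be the main obstacle. Unlike the substitution-only setting, the query $k$-mer at $j$ originates from a reference interval of \emph{variable} width, so the two-dimensional overlap window is not fixed a priori and must be pinned down by the $EC$ event; this is precisely the ``dependence of neighboring anchors'' difficulty flagged in the introduction. The delicate part is the factorization dichotomy: one must argue rigorously that single-coordinate overlaps contribute exactly $\sigma^{-2k}$ (so they can be absorbed into the $\mathbb{E}(N_S)^2$ term) while only the doubly-overlapping pairs need the lossy $\sigma^{-k}$ bound, and then translate the expansion/contraction constants into the explicit window size $2T_0 k^2$. Everything else—the indicator expansion, the independent-pair bound, and the subtraction giving $\mathrm{var}(N_S)$—is routine once these two ingredients are in place.
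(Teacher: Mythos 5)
Your overall strategy is the same as the paper's: expand $\mathbb{E}(N_S^2)$ over ordered pairs of candidate spurious positions, bound the pairs whose anchor indicators factorize by $\mathbb{E}(N_S)^2$, bound each remaining pair crudely by $\sigma^{-k}$, and use the $EC$ event to show each anchor has only $O(T_0 k^2)$ dependent partners, giving the $2T_0 k^2 \tfrac{mn}{\sigma^k}$ correction term. The independent-pair step, the $\sigma^{-k}$ single-anchor probability (which, incidentally, is not because the two $k$-mers come from ``disjoint sources''---a spurious anchor's query window can still be the image under $f$ of positions inside its reference window---but because the single-anchor match graph is acyclic, as in Lemma~\ref{cor:match_graph_single_anchor} and Corollary~\ref{cor:anchor-prob}), and the final subtraction all match the paper.

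There is, however, one step in your counting that would fail as literally stated. You classify a partner $(i',j')$ as genuinely dependent only if it lies within an $O(k)$ window of $i$ \emph{and} an $O(k)$ window of $j$, i.e., only if the two anchors are close in both coordinates. The paper's dependent set is the union of two pieces: the box $B_k(i,j)$ of pairs within distance $k$ in both coordinates (which is what you describe, contributing the constant $4$ in $T_0$), and a second ``twisting'' set $P_k(i,j)$ of pairs for which $f^{-1}([l:l+k-1])$ meets $[i:i+k-1]$ and $f^{-1}([j:j+k-1])$ meets $[h:h+k-1]$ simultaneously (the configuration of Fig.~\ref{fig:match_graph_cycle}). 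These twisting partners are localized around $(f^{-1}(j), f(i))$, \emph{not} around $(i,j)$, and can be arbitrarily far from $(i,j)$ in both coordinates; yet they create cycles in the match graph and are positively correlated, so misclassifying them as independent would break the inequality $\sum_{\mathrm{indep}}\mathbb{E}(X_{ij}X_{i'j'}) \le \mathbb{E}(N_S)^2$. The repair is exactly the condition of Lemma~\ref{indel-independence}: independence requires both coordinate separation and the absence of the double cross-correspondence. Once $P_k(i,j)$ is added to the dependent set, the $EC$ event (bounded expansion and the $\tfrac{21}{\beta}$-block contraction bound) still pins $|P_k(i,j)|$ down to $\bigl(\tfrac{1}{t_0}(\tfrac{2}{\beta}+1)+3\bigr)^2 k^2 \le \tfrac{1}{2}T_0 k^2$, so your final bound survives unchanged; only the localization claim needs to be corrected.
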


The below lemma uses the conditional bounded variance of $N_S$ from the previous result to bound the number of spurious anchors with high probability. From the lemma below, we will conclude that for large enough $n$, there are no spurious anchors with probability $\ge 1 - \frac{3}{n}$.

\begin{lemma}[Working in $\bm{EC}$]\label{lemma: spurious_anchor_count}
With probability at least $1 - \frac{1}{n}$, the number of spurious anchors is 
$
\le n^{2 - C} + \sqrt{T_0\,}\,C\log(n)\,n^{\frac{3-C}{2}}.
$
Precisely,
\vspace{-1em}
\[
\Pr\Bigl(N_S \ge n^{2 - C} + \sqrt{T_0\,}\,C\log(n)\,n^{\frac{3-C}{2}} \mid EC\Bigr)
\;\le\;\frac{1}{n}
\]
\end{lemma}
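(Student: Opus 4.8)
The plan is to apply Chebyshev's inequality to $N_S$ conditioned on $EC$, where the two ingredients — a bound on the conditional mean and the conditional variance bound from Lemma~\ref{var_bound} — combine to give exactly the two terms in the statement, and the threshold is engineered so that Chebyshev returns $1/n$. First I would record the inputs. Lemma~\ref{var_bound} supplies the conditional variance $\mathrm{var}(N_S \mid EC) \le T_0 k^2 \frac{mn}{\sigma^k}$. For the conditional mean I would use the elementary first-moment bound $\mathbb{E}(N_S \mid EC) \le \frac{mn}{\sigma^k}$ — there are at most $mn$ candidate anchor positions $(i,j)$, each giving a length-$k$ exact match with probability at most $\sigma^{-k}$ — and then invoke $\sigma^k = n^{C}$ together with $m \le n$ from Definition~\ref{sce_defs} to obtain $\mathbb{E}(N_S \mid EC) \le \frac{mn}{\sigma^k} \le n^{2-C}$. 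This identifies the first term of the statement as (an upper bound on) the conditional mean.

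Next I would set the deviation to be exactly the second term, $t = \sqrt{T_0}\,C\log(n)\,n^{(3-C)/2}$, and apply two-sided Chebyshev conditional on $EC$:
\[
\Pr\bigl(N_S \ge \mathbb{E}(N_S \mid EC) + t \,\big|\, EC\bigr) \;\le\; \frac{\mathrm{var}(N_S \mid EC)}{t^2} \;\le\; \frac{T_0 k^2\, mn/\sigma^k}{T_0\, C^2\log^2(n)\, n^{3-C}}.
\]
Substituting $k = C\log(n)$ so that $k^2 = C^2\log^2(n)$, and $\sigma^k = n^C$ so that $mn/\sigma^k = m\,n^{1-C}$, the right-hand side collapses to $\frac{m\,n^{1-C}}{n^{3-C}} = \frac{m}{n^2} \le \frac{1}{n}$, the last inequality using $m \le n$. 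Finally, since $\mathbb{E}(N_S \mid EC) \le n^{2-C}$, the event $\{N_S \ge n^{2-C} + t\}$ is contained in $\{N_S \ge \mathbb{E}(N_S \mid EC) + t\}$, so the bound $\tfrac{1}{n}$ transfers to the quantity in the statement, completing the argument.

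The step I expect to require the most care is the conditional mean bound $\mathbb{E}(N_S \mid EC) \le \frac{mn}{\sigma^k}$, since conditioning on $EC$ could in principle inflate an expectation. I would handle this either by verifying directly that the per-position match probability for a spurious anchor is still at most $\sigma^{-k}$ under $EC$ (spurious matches are governed by positions away from the homologous region, largely decoupled from the expansion–contraction event), or, as a fallback, by passing through the unconditional bound via $\mathbb{E}(N_S \mid EC) \le \mathbb{E}(N_S)/\Pr(EC) \le n^{2-C}/(1 - 2/n)$ and absorbing the $1/\Pr(EC)$ factor using the strict slack $m < n$ — indeed, with $m = O(n^{2C\alpha + \epsilon})$ and $2C\alpha + \epsilon < 1$ from Definition~\ref{sce_defs}, $m$ is polynomially smaller than $n$, so $\tfrac{m}{1 - 2/n}\le n$ holds comfortably for large $n$ and the bound stays below $n^{2-C}$. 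Everything else is a mechanical substitution of the identities $\sigma^k = n^C$ and $k = C\log(n)$.
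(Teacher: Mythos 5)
Your proposal is correct and follows essentially the same route as the paper's proof: Chebyshev's inequality conditioned on $EC$, the variance bound from Lemma~\ref{var_bound}, and the mean bound $\mathbb{E}(N_S \mid EC) \le \frac{mn}{\sigma^k} \le n^{2-C}$ via $m \ll n$ under $EC$. The only cosmetic difference is that you fix the deviation $t$ to equal the second term and verify $\mathrm{var}/t^2 \le m/n^2 \le 1/n$, whereas the paper takes $t = \sqrt{n\,\mathrm{var}(N_S\mid EC)}$ and then upper-bounds that quantity by the same expression; your explicit attention to whether conditioning on $EC$ could inflate the mean is a point the paper glosses over, and your $\mathbb{E}(N_S\mid EC)\le \mathbb{E}(N_S)/\Pr(EC)$ fallback resolves it cleanly.
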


\begin{lemma}[Defining the $\bm{F1}$ space]\label{lemma:f1}
    Under $EC$, for large enough $n$, there are no spurious anchors with probability $\ge 1 - \frac{1}{n}$.
\end{lemma}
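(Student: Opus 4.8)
The plan is to exploit the fact that $N_S$ is a non-negative integer-valued random variable, so that any high-probability upper bound that is strictly below $1$ immediately forces $N_S = 0$. Concretely, write $B(n) = n^{2-C} + \sqrt{T_0}\,C\log(n)\,n^{\frac{3-C}{2}}$ for the bound supplied by Lemma~\ref{lemma: spurious_anchor_count}, which (conditioned on $EC$) gives $\Pr(N_S \ge B(n) \mid EC) \le \frac{1}{n}$, equivalently $\Pr(N_S < B(n) \mid EC) \ge 1 - \frac{1}{n}$. The whole argument reduces to verifying that $B(n) < 1$ once $n$ is large enough, because then the event $\{N_S < B(n)\}$ is contained in $\{N_S < 1\} = \{N_S = 0\}$, and the conclusion follows by monotonicity of probability.

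First I would confirm that both exponents appearing in $B(n)$ are strictly negative. By Definition~\ref{sce_defs} we have $C > \frac{3}{1-2\alpha}$, and since $\alpha > 0$ (as $0 < 1-\theta_T < 1$ gives $\alpha = -\log(1-\theta_T) > 0$) we get $\frac{3}{1-2\alpha} > 3$, hence $C > 3$. Thus $2 - C < -1 < 0$ and $\frac{3-C}{2} < 0$. Second, I would argue each summand of $B(n)$ tends to $0$: the term $n^{2-C} \to 0$ is immediate, while for $\sqrt{T_0}\,C\log(n)\,n^{\frac{3-C}{2}}$ I would note that $T_0$ and $C$ are constants and that the strictly negative polynomial exponent $\frac{3-C}{2}$ dominates the logarithmic factor, so $\log(n)\,n^{\frac{3-C}{2}} \to 0$ as well. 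Consequently $B(n) \to 0$, and in particular there exists $N$ such that $B(n) < 1$ for all $n \ge N$.

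Putting the two pieces together, for $n \ge N$ we have
\[
\Pr(N_S = 0 \mid EC) \;\ge\; \Pr\bigl(N_S < B(n) \mid EC\bigr) \;\ge\; 1 - \frac{1}{n},
\]
which is exactly the claim. Unlike the preceding variance and concentration lemmas, this step is essentially bookkeeping, so I do not expect a serious obstacle; the only point requiring care is confirming $C > 3$ from the constraint $C > \frac{3}{1-2\alpha}$ together with $\alpha > 0$, and checking that the spurious polylogarithmic factor $C\log(n)$ does not spoil the decay of $n^{\frac{3-C}{2}}$. Both are routine once the inequality $\frac{3-C}{2} < 0$ is in hand.
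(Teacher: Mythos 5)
Your proposal is correct and follows essentially the same route as the paper's own proof: invoke Lemma~\ref{lemma: spurious_anchor_count}, note $C > \frac{3}{1-2\alpha} > 3$ so the bound $n^{2-C} + \sqrt{T_0}\,C\log(n)\,n^{\frac{3-C}{2}}$ falls below $1$ for large $n$, and conclude $\Pr(N_S = 0) = \Pr(N_S < 1) \ge 1 - \frac{1}{n}$ by integrality of $N_S$. The only difference is that you spell out the verification that $C>3$ and the decay of the polylogarithmic term, which the paper leaves implicit.
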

\begin{proof}
    Note that $C > \frac{3}{1-2\alpha} > 3$. For large enough $n$, we have that $1 > n^{2 - C} + \sqrt{T_0\,}\,C\log(n)\,n^{\frac{3-C}{2}}$. By the previous lemma, $\Pr(N_S = 0) = \Pr(N_S < 1) \ge 1 - \frac{1}{n}$.
\end{proof}

Similar to how the $EC$ space represents the set of events where the expansion-contraction lemma holds, when we work in the $F1$ space, there will be no spurious anchors. $F1$ is also a high probability space: $\Pr(F1) \ge \Pr(F1 \mid EC)\Pr(EC) \ge (1 - \frac{1}{n})(1 - \frac{2}{n}) \ge 1 - \frac{3}{n} \implies \Pr(F1) \ge 1 - \frac{3}{n}$. In the next section, we show that there are no long homologous gaps in the generative region of $S$, and use this fact to bound break lengths with the help of the expansion-contraction lemma. 

\subsection{Bounding homologous gaps}\label{sec:bounding_gaps}

Recall that the general strategy of our proof is to show that with high probability, there are no spurious anchors, and that the vast majority of the path is either covered by (1) a homologous or clipping anchor, or (2) belongs to an extension region between two anchors. In the previous section, we showed that spurious anchors do not occur with high probability. This section is dedicated to the second portion: proving that there exists some homologous anchor, so that the chain is non-empty, and that they are commonly present along the homologous path with high probability.

A homologous gap is defined to be a region $S[p+a,p+b]$, $1 \le a,b \le m'$ in the generative portion of $S$, for which there are no homologous anchors. We begin by bounding the length of any homologous gap by establishing concentration bounds on homologous anchors in this $k$-dependence case. The lemmas in this section make use of the generative region of $S$, which has length $m'$ but the final inequalities are in terms of $|S'| = m$. This is fine because the two lengths are equivalent up to a constant ($m' = cm$ for a constant $c > 0$) while working under the $EC$ space. We will represent fixed constants with variants of $c$; the exact values do not make a difference in the analysis.

We begin with a dependent Chernoff-Hoeffding bound that appears in Yu and Shaw's paper \cite{shaw2023proving}, originally appearing in Janson's paper \cite{janson2004large}. The purpose of this lemma is to obtain an exponential (Chernoff) bound for the number of homologous anchors. The lemma makes use of the local dependence of homologous random variables since overlapping k-mers are dependent but non-overlapping k-mers are independent. 

\begin{lemma}
(Yu and Shaw) Suppose we have
$X = \sum_{a \in A} \text{Bernoulli}_a(q) \quad \text{for some} \quad 0 < q < 1.$
A proper cover of \( A \) is a family of subsets \( \{A_i\}_{i \in I} \) such that all random variables in \( A_i \subset A \) are independent and \( \bigcup_{i \in I} A_i = A \). Let \( \chi(\mathcal{A}) \) be the minimum size of the cover, \( |I| \), over all possible proper covers. Then for \( t \geq 0 \),
\[
\Pr\left( X \leq \mathbb{E}X - t \right) \leq \exp\left( -\frac{8t^2}{25 q|A| \chi(\mathcal{A})} \right).
\]
\end{lemma}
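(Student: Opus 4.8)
The plan is to run a Chernoff (exponential Markov) argument on the lower tail, where the only real subtlety is decoupling the dependence encoded by the proper cover. First I would reduce the minimal proper cover $\{A_i\}_{i\in I}$, with $|I| = \chi(\mathcal{A})$, to a genuine partition by assigning each $a \in A$ to exactly one of the sets containing it. The resulting classes $A_1,\dots,A_\chi$ still consist of mutually independent variables (a subset of an independent family is independent), they still cover $A$, and now $\sum_i |A_i| = |A|$. This reduction is not cosmetic: if the cover overlapped, then $\sum_i Y_i \ge X$ pointwise and the exponential bound would point the wrong way, so a true partition is needed. Writing $Y_i = \sum_{a\in A_i}\text{Bernoulli}_a(q)$ so that $X = \sum_{i=1}^{\chi} Y_i$, the lower-tail Chernoff bound gives, for any $s>0$,
\[
\Pr(X \le \mathbb{E}X - t) \le e^{s(\mathbb{E}X - t)}\,\mathbb{E}\!\left[e^{-sX}\right] = e^{s(\mathbb{E}X - t)}\,\mathbb{E}\!\left[\prod_{i=1}^{\chi} e^{-sY_i}\right].
\]

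The crux is the next step: the $Y_i$ are dependent across classes, so the expectation of the product does not factor. I would break the dependence with the generalized H\"older inequality using exponent $\chi$ on each of the $\chi$ factors (so the reciprocals sum to $1$):
\[
\mathbb{E}\!\left[\prod_{i=1}^{\chi} e^{-sY_i}\right] \le \prod_{i=1}^{\chi}\left(\mathbb{E}\!\left[e^{-s\chi Y_i}\right]\right)^{1/\chi}.
\]
This is the key move, and the price paid for decoupling is the inflation of the exponent from $s$ to $s\chi$, which is precisely what produces the factor $\chi(\mathcal{A})$ in the final bound. Within each class the variables are independent, so the moment generating function factors over $a \in A_i$, giving $\mathbb{E}[e^{-s\chi Y_i}] = (1 - q + q e^{-s\chi})^{|A_i|}$. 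Multiplying over $i$ and using $\sum_i |A_i| = |A|$ collapses everything to
\[
\mathbb{E}\!\left[e^{-sX}\right] \le \left(1 - q + q e^{-s\chi}\right)^{|A|/\chi}.
\]

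All that remains is a one-dimensional optimization. Substituting $u = s\chi$ and using $\ln\!\left(1 + q(e^{-u}-1)\right) \le q(e^{-u}-1)$ together with $e^{-u} - 1 + u \le u^2/2$ for $u \ge 0$, the exponent is controlled by $\tfrac{1}{\chi}\!\left(\tfrac{|A|q}{2}u^2 - ut\right)$, a convex quadratic in $u$ whose minimization over $u>0$ (at $u = t/(q|A|)$) yields a sub-Gaussian lower-tail bound of the form $\exp\!\left(-c\,t^2/(q|A|\chi)\right)$; tracking the constant through the (possibly non-optimal but convenient) surrogate bounds produces the stated $c = 8/25$. I expect the H\"older decoupling to be the only genuine obstacle: once the dependence is paid for with the factor $\chi(\mathcal{A})$, the remainder is the textbook Bernoulli Chernoff computation, and the lower-tail direction is the benign one, since the relevant second derivative of the log-MGF stays bounded and no diverging-MGF issues arise.
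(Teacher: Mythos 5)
Your proof is correct. Note that the paper does not prove this lemma itself---it is imported verbatim from Shaw and Yu, who in turn cite Janson (2004)---and your reconstruction is essentially Janson's original argument: reduce the proper cover to a partition, apply the lower-tail Chernoff bound, and decouple the dependent classes via the generalized H\"older inequality with exponent $\chi$, which is exactly where the factor $\chi(\mathcal{A})$ enters. Your optimization in fact yields the sharper constant $\tfrac{1}{2}$ in place of $\tfrac{8}{25}$ (the weaker constant in the cited statement comes from a Bennett-type bound aimed at the upper tail), so the stated inequality follows a fortiori.
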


 Let $N_P$ (``P'' for preserved, as these regions have no mutations) be the random variable representing the number of $k$-blocks in $S$ such that there are no mutations in that block. Clearly, each $k$-block in $S$ without mutations is a homologous anchor, i.e., $N_H \ge N_P$. We can apply the previous theorem to establish a tail bound on the number of homologous anchors with a chain of inequalities: first, we obtain a tail bound for $N_P$, and then we bound $N_H$ using that $N_H \ge N_P$.

\begin{lemma}
    For any $0 \le t \le m'((1-\theta_i)(1-\theta_d)(1-\theta_s))^k$, we have that 
    $\Pr(N_H \le m'((1-\theta_i)(1-\theta_d)(1-\theta_s))^k - t) \le \exp(-\frac{8t^2}{25m'k((1-\theta_i)(1-\theta_d)(1-\theta_s))^k})$
\end{lemma}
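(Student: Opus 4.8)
The plan is to prove the stated lower-tail inequality for $N_P$ using the dependent Chernoff--Hoeffding bound (Yu and Shaw) quoted above, and then to transfer it to $N_H$ at no cost. The transfer is the easy half: since every unmutated $k$-block is a homologous anchor we have $N_H \ge N_P$ pointwise, so $\{N_H \le x\} \subseteq \{N_P \le x\}$ and hence $\Pr(N_H \le x) \le \Pr(N_P \le x)$ for every threshold $x$. It therefore suffices to prove the bound with $N_P$ in place of $N_H$, after which the final inequality follows by taking $x = m'((1-\theta_i)(1-\theta_d)(1-\theta_s))^k - t$.

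To set up the Yu--Shaw lemma, I would first write $N_P$ as a sum of Bernoulli indicators. For each start position $a$ in the generative region, let $X_a$ be the indicator that positions $p+a,\dots,p+a+k-1$ all avoid substitution, deletion, and insertion. By the Remark each individual position is left unmutated with probability $(1-\theta_i)(1-\theta_d)(1-\theta_s)$, and since the channel acts independently at distinct positions, $X_a$ is $\mathrm{Bernoulli}(q)$ with $q = ((1-\theta_i)(1-\theta_d)(1-\theta_s))^k$. Taking $|A| = m'$ (absorbing the $O(k)$ boundary correction to the exact block count into the constant-level slack the paper has already declared negligible) gives $N_P = \sum_{a} X_a$ and $\mathbb{E}N_P = q\,|A| = m'((1-\theta_i)(1-\theta_d)(1-\theta_s))^k$, precisely the centering value appearing in the statement.

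The substantive step is bounding the cover number $\chi(\mathcal{A})$. Two indicators $X_a$ and $X_{a'}$ are dependent only when their $k$-blocks overlap, i.e.\ when $|a-a'| < k$; whenever $|a-a'| \ge k$ the blocks occupy disjoint sets of positions and, because the channel is independent across positions, $X_a$ and $X_{a'}$ are independent. I would therefore colour the start positions by residue modulo $k$: within each colour class successive starts differ by a multiple of $k$, so every block in a class is disjoint from every other, making the class mutually independent. This exhibits a proper cover of size $k$, so $\chi(\mathcal{A}) \le k$. Substituting $q$, $|A|=m'$, and $\chi(\mathcal{A}) \le k$ into the Yu--Shaw inequality gives, for $0 \le t \le \mathbb{E}N_P$,
\[
\Pr\bigl(N_P \le \mathbb{E}N_P - t\bigr) \le \exp\!\Bigl(-\tfrac{8t^2}{25\,q\,|A|\,\chi(\mathcal{A})}\Bigr) \le \exp\!\Bigl(-\tfrac{8t^2}{25\,m'k\,((1-\theta_i)(1-\theta_d)(1-\theta_s))^k}\Bigr),
\]
which is exactly the claimed bound; the passage from $N_P$ to $N_H$ then finishes the argument.

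I expect the main obstacle to be bookkeeping rather than anything deep, concentrated in two places. First, one must justify $\chi(\mathcal{A}) \le k$ rigorously: that non-overlapping $k$-blocks are genuinely independent under this indel channel, and that the mod-$k$ colouring yields \emph{mutually} (not merely pairwise) independent classes, as the lemma's definition of a proper cover requires. Second, one must reconcile the exact number of $k$-blocks, $m'-k+1$, with the centering $m'q$ and the denominator $m'$ used in the statement; replacing $m'-k+1$ by $m'$ only enlarges the denominator and so yields a valid (slightly weaker) upper bound, while the shift in the mean is $O(k) = O(\log n)$, negligible against $\mathbb{E}N_P \ge m'(1-\theta_T)^k = \Omega(n^{C\alpha+\epsilon})$. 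Both discrepancies thus fall within the constant-level slack the paper already permits, so no genuinely hard estimate is required beyond the cover computation.
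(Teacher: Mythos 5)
Your proposal is correct and follows essentially the same route as the paper's proof: you decompose $N_P$ into indicators of unmutated $k$-blocks with $q = ((1-\theta_i)(1-\theta_d)(1-\theta_s))^k$, cover them by residue classes modulo $k$ to get $\chi(\mathcal{A}) \le k$ (exactly the paper's partition $A_j = \{E_{j+tk:j+(t+1)k-1}\}$), apply the Yu--Shaw dependent Chernoff bound, and transfer to $N_H$ via $N_H \ge N_P$. Your extra care about the $m'-k+1$ versus $m'$ bookkeeping is a point the paper silently elides, and your resolution of it is sound.
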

\begin{proof}
We use the previous theorem with \( q = ((1-\theta_i)(1-\theta_d)(1-\theta_s))^k \), the probability that a k-block has no mutation, as shown in Corollary \ref{cor:kblock_prob}. 
Recall that $E_{i:i+k-1}$ denotes the random variable taking on the value of $1$ if there are no mutations in $S[i:i+k-1]$ and $0$ otherwise. Then each set $A_j= \{E_{j+tk:j+(t+1)k-1} \mid t \ge 0 \land j+tk \le p+m' \}$ contains mutually independent random variables by Corollary \ref{cor:hom_indepdence}. Note $A = \bigcup_{i \in \{1,\dots,m'\}} A_i$, and thus the $A_i$ form a partition of $A$. This implies that $\chi(\mathcal{A}) \leq k$. Applying the previous theorem yields $\Pr(N_P \le m'q - t) \le \exp(-\frac{8t^2}{25m'kq})$. Since $N_H \ge N_P$, it follows that $\Pr(N_H \le m'q - t) \le \Pr(N_P \le m'q - t)$, from which the result follows.
\end{proof}

We can apply the previous lemma to obtain a Chernoff bound for the probability of a gap of length $\ell$ occurring. From this lemma, it will follow that there is no ``large'' homologous gap with probability $\ge 1 - \frac{1}{n}$.

\begin{lemma}
For any interval of length $\ell$ in $S[p+1:p+m']$, the probability that no homologous anchor occurs is 
\[
 \le \exp \left( -\frac{8\ell(1-\theta_T)^k}{25k} \right).
\]
\end{lemma}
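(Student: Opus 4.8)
The plan is to apply the preceding lemma not to the full generative region but to the given window of length $\ell$, and then to choose the tail parameter $t$ so that the left-hand event becomes exactly ``no homologous anchor occurs.'' Write $q = ((1-\theta_i)(1-\theta_d)(1-\theta_s))^k$ for the per-position probability that a $k$-block is mutation-free, and let $N_H^{(\ell)}$ denote the number of homologous anchors whose starting position lies in the interval of length $\ell$. The first step is to observe that the argument behind the previous lemma is local and translation-invariant: the proper cover that partitions the relevant $k$-blocks into $k$ classes of mutually independent (non-overlapping) blocks still has size $\chi \le k$ when restricted to any window, and the expected count over such a window is $\ell q$ rather than $m' q$. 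Hence the same Janson-type Chernoff--Hoeffding bound gives, for $0 \le t \le \ell q$,
\[
\Pr\bigl(N_H^{(\ell)} \le \ell q - t\bigr) \le \exp\!\left(-\frac{8t^2}{25\,\ell\,k\,q}\right).
\]

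Next I would set $t = \ell q$, which is the boundary of the admissible range, turning the left-hand event into $\{N_H^{(\ell)} \le 0\}$. Since $N_H^{(\ell)} \ge 0$ always, this is precisely the event that no homologous anchor occurs in the interval. Substituting and cancelling one factor of $\ell q$ in the exponent yields
\[
\Pr\bigl(N_H^{(\ell)} = 0\bigr) \le \exp\!\left(-\frac{8\,\ell\,q}{25\,k}\right).
\]
Finally, by the remark following the mutation-model definition we have $q = ((1-\theta_i)(1-\theta_d)(1-\theta_s))^k \ge (1-\theta_T)^k$, so replacing $q$ by $(1-\theta_T)^k$ only makes the exponent more negative, producing the claimed bound $\exp\!\bigl(-\tfrac{8\ell(1-\theta_T)^k}{25k}\bigr)$.

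The one step that deserves care, and which I expect to be the main (minor) obstacle, is the localization of the previous lemma to an arbitrary window rather than the full region $[p+1,p+m']$. I would need to confirm that the independence and cover structure used there depends only on the \emph{relative} positions of the $k$-blocks, so that $\chi \le k$ is preserved on any sub-interval, and that the expected count simply rescales linearly from $m'q$ to $\ell q$. Both facts follow directly from the observation that non-overlapping $k$-blocks are independent regardless of where they sit, so the restriction is routine; everything else is the algebraic specialization $t = \ell q$ together with the elementary inequality $q \ge (1-\theta_T)^k$.
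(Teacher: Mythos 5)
Your proposal is correct and follows essentially the same route as the paper: localize the preceding Janson-type bound to the length-$\ell$ window (where the cover size $\chi \le k$ and the mean rescale to $\ell q$), choose $t$ so the event becomes $\{N_H^{(\ell)} \le 0\}$, and finish with $q \ge (1-\theta_T)^k$. In fact your choice $t = \ell q$ is the clean one — the paper writes $t = \ell(1-\theta_T)^k$ but then states the bound $\exp(-8\ell q/(25k))$ that only follows from $t = \ell q$, so your version fixes a small slip in the paper's stated parameter.
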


\begin{proof}
Each interval of length $\ell$ in $S[p+1:p+m']$ can be viewed as an identically distributed length $\ell$ version of it. Applying the previous lemma with $m' = \ell$ and $t = \ell (1-\theta_T)^k$ gives $\mathrm{Pr}(N_H \le 0) = \mathrm{Pr}(N_H = 0) \le \exp(-\frac{8\ell ((1-\theta_i)(1-\theta_d)(1-\theta_s))^k}{25k}) \le \exp(-\frac{8\ell(1-\theta_T)^k}{25k})$ since $((1-\theta_i)(1-\theta_d)(1-\theta_s))^k \ge (1 - \theta_T)^k$.
\end{proof}

Using the previous lemma, we show that every region of length  $g(n) = \frac{C \cdot 50}{8} \log(n) \ln(n) n^{C\alpha}$ contains a homologous anchor. This lemma establishes that homologous anchors are ``dense''. Denote the space under which there are no homologous gaps of size $\ge g(n)$ by $F2$: we will be in the space $F2$ with probability $\ge 1 - \frac{1}{n}$. 

\begin{lemma}[Defining the $\bm{F2}$ space]\label{hom_breaks:f2}
With probability $\geq 1 - \frac{1}{n}$, no homologous gap in $S[p+1:p+m']$ has size greater than
\vspace{-1em}
\[
g(n) = \frac{50k}{8(1 - \theta_T)^k} \ln(n) = \frac{C \cdot 50}{8} \log(n) \ln(n) \cdot n^{C\alpha}
\]
plus a small $C \log n$ term we will ignore because it is small asymptotically.
\end{lemma}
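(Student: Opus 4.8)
\section*{Proof proposal}

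The plan is to apply the preceding per-interval bound together with a union bound over a sliding cover of the generative region. The engine of the proof is already in hand: the previous lemma states that for any fixed interval of length $\ell$ in $S[p+1:p+m']$, the probability that it contains no homologous anchor is at most $\exp\!\left(-\frac{8\ell(1-\theta_T)^k}{25k}\right)$. The first step is simply to specialize this to $\ell = g(n)$ and verify that the stated choice of $g(n)$ is exactly calibrated to push the per-interval failure probability below $n^{-2}$. Substituting $g(n) = \frac{50k}{8(1-\theta_T)^k}\ln n$ into the exponent gives
\[
\frac{8\,g(n)\,(1-\theta_T)^k}{25k} = \frac{8}{25k}\cdot\frac{50k\ln n}{8(1-\theta_T)^k}\cdot(1-\theta_T)^k = 2\ln n,
\]
so a single fixed window of length $g(n)$ fails to contain a homologous anchor with probability at most $e^{-2\ln n} = n^{-2}$. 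This is precisely why the constant $\frac{50}{8}$ and the factor $(1-\theta_T)^{-k} = n^{C\alpha}$ appear in $g(n)$.

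Next I would set up the covering and union bound. I would consider the windows of length $g(n)$ starting at each integer position in the generative region; using overlapping (not disjoint) windows is what lets me conclude a gap bound of $g(n)$ rather than $2g(n)$. Since $m' < n$, there are at most $n$ such starting positions, so by the union bound the probability that \emph{any} one of them contains no homologous anchor is at most $n\cdot n^{-2} = \frac{1}{n}$. Hence with probability $\ge 1 - \frac1n$ every length-$g(n)$ window carries a homologous anchor. On this event no homologous gap can reach length $g(n)$: a maximal anchor-free region of length $\ge g(n)$ would fully contain a length-$g(n)$ window with no anchor, contradicting the above. This is exactly the statement that defines the $F2$ space.

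The one subtlety — the part worth handling carefully rather than the routine arithmetic — is the boundary and anchor-length slack, i.e. the ``$+\,C\log n$'' correction in the statement. Because an anchor is a diagonal of length $k = C\beta\ln n$, saying ``the window contains a homologous anchor'' really requires the anchor's length-$k$ footprint to fit inside the window, so the gap one can guarantee closing is $g(n)$ only up to an additive $k = O(\log n)$ term; similarly, the last partial window abutting the end of $S[p+1:p+m']$ can leave a residual anchor-free stretch of size at most $k$. Since $k = O(\log n)$ is negligible beside $g(n) = \Theta(\log n \cdot \ln n \cdot n^{C\alpha})$, this is safely absorbed into the stated small $C\log n$ term, and the conclusion $\Pr(F2) \ge 1 - \frac1n$ follows. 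The main obstacle is thus bookkeeping rather than probabilistic: once the exponent calibration of the first step is in place, the result is a direct union bound.
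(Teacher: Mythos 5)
Your proposal is correct and follows essentially the same route as the paper: both specialize the per-interval bound to $\ell = g(n)$ so that the exponent becomes $2\ln n$ (per-window failure probability $n^{-2}$), then bound the number of window positions by $n$ and conclude with probability $\ge 1-\frac{1}{n}$ that no window is anchor-free; the paper phrases this last step as Markov's inequality applied to the sum of window indicators, which is the same first-moment/union-bound argument you use. Your explicit treatment of the $+C\log n$ slack is a reasonable gloss on a detail the paper's proof leaves implicit.
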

\begin{proof}
    The proof is identical to Lemma 6 from Yu and Shaw \cite{shaw2023proving}. We will replicate the proof for this crucial lemma. 

    Let $\ell = g(n) = \frac{50k\ln(n)}{8(1-\theta_T)^k}$. Define $\mathrm{HG_1}, \dots, \mathrm{HG}_{m' - \ell + 1}$ be the random variables indicating if there is a homologous gap of length $\ell$ at a given position, i.e., $\mathrm{HG}_i = 1$ when no $k$-mer in $S[p+j:p+j+\ell-1]$ is part of a homologous anchor. Note that $\mathbb{E}[\sum_{i=1}^{m'-\ell+1}HG_i] \le \frac{m'}{n^2} \le \frac{1}{n}$. Applying Markov's inequality, we get that $\mathrm{Pr}(\sum_{i=1}^{m'-\ell+1}\mathrm{HG}_i \ge 1) = \mathrm{Pr}(\sum_{i=1}^{m'-\ell+1}\mathrm{HG}_i \ge n \cdot 1/n) \le \frac{1}{n}$, and the result follows.
\end{proof}

The previous lemma shows that within the high probability $\bm{F2}$ space, there are no homologous gaps in $S[p+1:p+m']$ of size greater than $O(g(n))$; homologous anchors are ``dense'' under $F2$. In the following lemma, we work under the intersection $\bm{G} = \bm{F1} \land \bm{EC} \land \bm{F2}$ to show that there are at most $O(g(n))$ unrecovered points before the first anchor or after the last anchor of any optimal chain.

\begin{lemma}[Working in $\bm{G}$]\label{lemma:missed_ends}
    For any optimal chain $\mathcal{C} = ((i_1,j_1), \dots, (i_u,j_u))$, there are at most $O(g(n))$ unrecovered points on $P_H$ before the first anchor of $\mathcal{C}$ or after the last anchor. Formally, let $S = \{(x,y) \in P_H \mid (x \le i_1 \lor y \le j_1) \lor (x \ge i_u \lor y \ge j_u)\}$. Then $|S| = O(g(n))$.
\end{lemma}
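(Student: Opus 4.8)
The plan is to show that, for any optimal chain under the linear-gap-cost score, the first anchor $(i_1,j_1)$ starts within $O(g(n))$ of the left end of the generative region and the last anchor $(i_u,j_u)$ ends within $O(g(n))$ of its right end (measured in the $x$-coordinate), and then to convert these coordinate bounds into bounds on the number of $P_H$-points via the $EC$ expansion/contraction guarantees. Since the set in question is contained in the union of the four sets $\{x\le i_1\}$, $\{y\le j_1\}$, $\{x\ge i_u\}$, $\{y\ge j_u\}$ (each intersected with $P_H$), it suffices to bound each by $O(g(n))$ and apply a union bound. The three hypotheses play distinct roles: $F1$ guarantees no spurious anchors, so the extreme anchors must be homologous or clipping and hence touch $P_H$; $F2$ guarantees homologous anchors are dense (one in every window of length $g(n)$); and the choice $\xi=\tfrac1n$ makes the total gap penalty at most $2n\cdot\tfrac1n=2$, so the optimizer is essentially maximizing the number of anchors.

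The heart of the argument is a local exchange (prepend/append) step. Suppose, for contradiction, that $i_1-p$ exceeds a constant multiple of $g(n)$. Because $(i_1,j_1)$ is homologous or clipping (by $F1$), it meets $P_H$ at some point $(a,b)$ with $i_1\le a\le i_1+k-1$ and $b=j_1+(a-i_1)$. By $F2$, the length-$g(n)$ window $[i_1-g(n)-k,\,i_1-k]$ (which lies inside the generative region once $i_1-p>g(n)+k$) contains a homologous anchor $(i_0,j_0)$, all of whose $k$ points lie on $P_H$ and satisfy $i_0+k-1<i_1\le a$. Monotonicity of $P_H$ (both coordinates are non-decreasing along the path, directly from Definition~\ref{def:canonical-alignment}) applied to the path points $(i_0+k-1,j_0+k-1)$ and $(a,b)$ gives $j_0+k-1\le b=j_1+(a-i_1)\le j_1+k-1$, hence $j_0\le j_1$; together with $i_0<i_1$ this shows $(i_0,j_0)$ can be legally prepended and is not already in the chain. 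The induced span increase is $(i_1-i_0)+(j_1-j_0)$, where $i_1-i_0=O(g(n))$ by construction and $j_1-j_0=O(g(n))$ by the $EC$ expansion bound over the $x$-range $[i_0,a]$. Thus prepending changes the score by $1-\xi\bigl((i_1-i_0)+(j_1-j_0)\bigr)=1-O(g(n)/n)=1-o(1)>0$ (using $C\alpha<\tfrac12$, so $g(n)=o(n)$), contradicting optimality. The symmetric append argument bounds $p+m'-i_u=O(g(n))$.

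It remains to convert these coordinate bounds into counts of $P_H$-points. The number of path points with $x\le i_1$ equals $(i_1-p)$ plus the number of inserted bases in the $x$-range $[p,i_1]$; the $EC$ expansion bound caps the insertions per $k$-mer, so summing over the $O((i_1-p)/k)$ disjoint $k$-mers gives $O(i_1-p)=O(g(n))$ points. For the condition $y\le j_1$, note first that $j_1\le b=O(g(n))$, since $b$ is the path $y$-value at $x=a=p+O(g(n))$ and the $EC$ expansion bound caps the total $y$-growth over that range; then the symmetric ($S'$-side) $EC$ bound shows the $y$-range $[0,j_1]$ of length $O(g(n))$ meets only $O(g(n))$ path points. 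The same two estimates, applied at the right end, bound the points with $x\ge i_u$ and $y\ge j_u$. Summing the four $O(g(n))$ contributions yields $|S|=O(g(n))$.

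I expect the main obstacle to be the compatibility bookkeeping in the exchange step, in particular verifying $j_0\le j_1$ for a \emph{clipping} first anchor, where the anchor touches $P_H$ only partially and introduces an $O(k)$ slack between its start $(i_1,j_1)$ and its touch point $(a,b)$. The argument above absorbs this slack because $k=O(\log n)\ll g(n)$ and because monotonicity of $P_H$ pins $j_0$ below the touch point's $y$-coordinate; care is also needed to ensure the window chosen by $F2$ sits strictly before the first anchor so the prepended anchor is genuinely new. The coordinate-to-path-point conversion via $EC$ is routine by comparison.
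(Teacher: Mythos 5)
Your proposal is correct and follows essentially the same route as the paper's proof: use $F1$ to force the extreme anchors to touch $P_H$, use $F2$ to find a nearby homologous anchor, use $EC$ to bound the $y$-displacement, and conclude from $\xi=\tfrac1n$ that prepending/appending that anchor would raise the score of an optimal chain, then convert the resulting $O(g(n))$ coordinate bounds into point counts. Your version is in fact somewhat more careful than the paper's, which asserts the prepended anchor is chain-compatible without the explicit monotonicity check $j_0\le j_1$ that you supply.
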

\begin{proof}
    Consider the first anchor $(i_1, j_1)$. Working in $\bm{G}$, there are no spurious anchors, so $(i_1, j_1)$ must be a homologous anchor or a clipping anchor. In either case, there is a point on the anchor that belongs to the homologous path. Call this point $(i, f(i))$. Under $F2$, there are no homologous gaps of size $\ge g(n)$, so there is some homologous anchor occurring at $i_0 < i_1$, $A(i,f(i)) = 1$, such that $i - i_0 \le g(n)$. Under $EC$, since $i - i_0 \le g(n)$, there cannot be more than $\frac{1}{t_0}(\frac{2}{\beta} + 1)g(n)$ insertions between $f(i_0)$ and $f(i)$. Thus, $f(i) - f(i_0) \le 2\frac{1}{t_0}(\frac{2}{\beta} + 1)g(n)$. It follows that $(i - i_0) + (f(i) - f(i_0)) \le 3\frac{1}{t_0}(\frac{2}{\beta} + 1)g(n)$. Adding this homologous anchor to the chain changes its score by at least $1 - \xi(i - i_0 + f(i) - f(i_0))$, which is greater than $0$, since $\xi = \frac{1}{n}$ and $n \gg g(n)$. Thus, adding it to the chain increases its score. It follows that the first anchor $(i_1, j_1)$ in any optimal chain has some point on it at most $g(n)$ away from the start of the generative position, $p+1$. We once again have that the $y$-value of this point can be at most $2\frac{1}{t_0}(\frac{2}{\beta} + 1)g(n)$  larger than the first point on the homologous path. Thus, at most $g(n) + 2\frac{1}{t_0}(\frac{2}{\beta} + 1)g(n) \le 3\frac{1}{t_0}(\frac{2}{\beta} + 1)g(n)$ points are missed before this first anchor. The same logic shows that there can be at most $3\frac{1}{t_0}(\frac{2}{\beta} + 1)g(n)$ points missed on the homologous path after the last anchor. Combining these facts, at most $6\frac{1}{t_0}(\frac{2}{\beta} + 1)g(n) = O(g(n))$ points are missed before or after any optimal chain.
    
\end{proof}

\section{Recoverability and Runtime Theorems}\label{section:rec-theorem}

Recall that in the previous sections, we defined three high probability spaces \textbf{(Informal)}:
\begin{itemize}[nosep, topsep=0pt, leftmargin=*]
    \item $\bm{[F1]}$, defined in Lemma \ref{lemma:f1}: For large enough $n$, there are no spurious anchors at all with probability $\ge 1 - \frac{3}{n}$.
    \item $\bm{[EC]}$, defined in Lemma \ref{lemma:EC_space}: Each $\Theta(k)$-block in $S$ corresponds to an $\Theta(k)$-block in $S'$ after edits with prob.~$\ge 1 - \frac{2}{n}$.
    \item $\bm{[F2]}$, defined in Lemma \ref{hom_breaks:f2}: With probability $\geq 1 - \frac{1}{n}$, no homologous gap in $S[p+1:p+m']$ has size greater than $O(g(n)) = O(\frac{C \cdot 50}{8} \log(n) \ln(n) \cdot n^{C\alpha})$.
\end{itemize}

As before, we will analyze seed-chain-extend under the intersection of these spaces, $G = F1 \land F2 \land EC$. Recall again that $G$ is itself a high probability space: $\Pr(G^c) \le \Pr(F1^c) + \Pr(F2^c) + \Pr(EC^c) \le \frac{6}{n} \implies \Pr(G) \ge 1 - \frac{6}{n}$.


We now move on to the first main result: proving the expected recoverability of seed-chain-extend with indels is $\ge 1 - O\Bigl(\frac{1}{\sqrt{m}}\Bigr)$ for large enough $n$. To this end, we lower bound the recoverability of any chain while working in $G$.

\begin{lemma}[Working in $\bm{G}$]\label{lemma: recov_lower_bound}
    Let $\mathcal{C} = ((i_1, j_1), \dots, (i_u, j_u))$ be an optimal chain. Let the first and last points of the homologous path be $(i_s, j_s), (i_e, j_e)$, respectively. The recoverability of $\mathcal{C}$ can be lower bounded as

    $$R(\mathcal{C}) \ge 1 - \frac{(i_1 - i_s) + (j_1 - j_s) + (i_e - i_u) + (j_e - j_u)}{|P_H|}.$$

\end{lemma}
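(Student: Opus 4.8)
The plan is to bound the fraction of recoverable path points that the chain fails to cover. Writing $M = |(P_H\setminus U)\setminus Align(\mathcal{C})|$ for the number of such missed points, we have $1-R(\mathcal{C}) = M/|P_H\setminus U|$, so it suffices to show that every missed recoverable point lies either before the first anchor or after the last anchor of $\mathcal{C}$, and then to count those boundary points using the monotone structure of $P_H$. Concretely, I would prove the containment $(P_H\setminus U)\setminus Align(\mathcal{C}) \subseteq S$, where $S = \{(x,y)\in P_H : (x\le i_1 \lor y\le j_1)\lor(x\ge i_u \lor y\ge j_u)\}$ is the boundary set from Lemma~\ref{lemma:missed_ends}, and then bound $|S|$.

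The core step is to show that every recoverable point of $P_H$ lying between the first and last anchors is in $Align(\mathcal{C})$. Working in $G$ there are no spurious anchors ($F1$), so each anchor is homologous or clipping and hence touches $P_H$ in at least one point. Consider two consecutive anchors $(i_\ell,j_\ell)$ and $(i_{\ell+1},j_{\ell+1})$. Because $P_H$ is monotone non-decreasing in both coordinates (Definition~\ref{def:canonical-alignment} only appends right, up, or diagonal steps) and the chain is monotone, the segment of $P_H$ strictly between the two anchors is confined to the box $[i_\ell+k-1,\,i_{\ell+1}]\times[j_\ell+k-1,\,j_{\ell+1}]$, which is exactly the full extension rectangle $Ext(\ell)$, so those points are covered. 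Points of $P_H$ falling on a homologous anchor's diagonal are covered directly by the anchor term of $Align(\mathcal{C})$. The delicate case is a clipping anchor, where part of the diagonal leaves $P_H$; here I would invoke the definition of $U$, which removes every path point sharing an $x$- or $y$-coordinate with a clipped off-path anchor point. After this removal the remaining recoverable path points near the clipping anchor can be matched along the anchor diagonal as if it were homologous (the ``alternate just-as-good path'' of Fig.~\ref{fig:combined-recoverability-matchgraph}B), so no recoverable middle point is missed.

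It then remains to count $|S|$. Since $P_H$ is a monotone staircase, once both coordinates of a path point exceed $(i_1,j_1)$ they stay exceeded, so the ``before'' portion of $S$ is a prefix of $P_H$ on which $x$ ranges over $[i_s,i_1]$ and $y$ over $[j_s,j_1]$; as each step increments $x$ or $y$ by at most one, this prefix has at most $(i_1-i_s)+(j_1-j_s)$ points. Symmetrically the ``after'' portion has at most $(i_e-i_u)+(j_e-j_u)$ points. Summing gives $M \le |S| \le (i_1-i_s)+(j_1-j_s)+(i_e-i_u)+(j_e-j_u)$, the stated numerator $N$.

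I expect the main obstacle to be the clipping-anchor case of the middle-recovery step: making precise that the $U$-removal exactly cancels the off-path portion of a clipping anchor, so that the containment $(P_H\setminus U)\setminus Align(\mathcal{C}) \subseteq S$ holds exactly rather than approximately. A secondary subtlety is the denominator: the natural argument yields $1-R(\mathcal{C}) \le N/|P_H\setminus U|$, whereas the claim normalizes by $|P_H|\ge|P_H\setminus U|$, so reconciling the two requires the sharper estimate $M\,|P_H|\le N\,|P_H\setminus U|$. I would obtain this by bounding $M$ by the count of recoverable boundary points $|S\setminus U|$ and controlling the removed set, verifying that the points deleted in passing from $P_H$ to $P_H\setminus U$ do not inflate the ratio; since $U\subseteq P_H$ and $|P_H\setminus U|=\Theta(m)$ in the regime of interest, the two normalizations agree up to lower-order terms.
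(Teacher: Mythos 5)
Your proposal follows essentially the same route as the paper's proof: partition the path points into those outside the chain's bounding box (counted via the monotone-staircase structure of $P_H$) and those inside, and show the latter are either on an anchor, in an extension rectangle, or absorbed into the non-recoverable set $U$ near a clipping anchor. The two subtleties you flag---that the $U$-removal must exactly absorb the off-path portion of clipping anchors, and that the stated bound normalizes by $|P_H|$ rather than $|P_H\setminus U|$---are real, but the paper's own proof treats the first with the same brevity and silently ignores the second (which only costs a constant since $|P_H\setminus U|=\Theta(|P_H|)$ under $G$), so your attempt is, if anything, slightly more careful than the original.
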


\begin{proof}
    Any point $(x,y) \in P_H$ for which $0 \le x < i_1$ or $0 \le y < j_1$ or $x \ge i_u + k$ or $y \ge j_u + k$ is clearly not recoverable. There are at most $i_1 + j_1 + (i_e - i_u) + (j_e - j_u)$ such points. Consider any point $(x,y) \in P_H$ such that $i_1 \le x \le i_u$ and $j_1 \le y \le j_u$. These points belong to three categories:
    \vspace{-.9em}
    \begin{enumerate}
        \item \textbf{Anchor Points}. Since $(x,y)$ lies on an anchor, it is recovered.
        \item \textbf{Between anchors}. There exist anchors $(i_p,j_p)$ and $(i_{p+1},j_{p+1})$ such that $i_p + k - 1 \le x \le i_{p+1}$ and $j_p + k - 1 \le y \le j_{p+1}$. Then $(x,y)$ lies in an extension box and is recovered.
        \item \textbf{Overlap with clipping anchors}. $(x,y)$ belongs to a region covered by a clipping anchor. Specifically, there exists a clipping anchor $(i_\ell, j_\ell)$ such that $i_\ell \le x \le i_\ell + k - 1 \lor j_\ell \le y \le j_\ell + k - 1$. Since $(x,y)$ is not recovered, it does not lie on the clipping anchor itself. Thus, it belongs to a non-recoverable region, which is covered by $U$.
    \end{enumerate}
    \vspace{-1em}
    Thus, each point contained within the bounds of the chain is either recovered or part of an unrecoverable region, in which case it is contained in $U$, and the result follows.
\end{proof}

Finally, we give the main recoverability result: 

\begin{theorem}[Recoverability theorem]
    The expected recoverability of an optimal chain, $\mathcal{C} = ((i_1, j_1), \dots, (i_u, j_u))$, is $\ge 1 - O\Bigl(\frac{1}{\sqrt{m}}\Bigr)$ for large enough $n$.
\end{theorem}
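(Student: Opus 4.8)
The plan is to bootstrap the two preceding lemmas --- the recoverability lower bound (Lemma~\ref{lemma: recov_lower_bound}) and the end-gap bound (Lemma~\ref{lemma:missed_ends}) --- into an unconditional statement in expectation, via a conditioning argument on the high-probability space $G$. First, working inside $G$, I would chain the two lemmas. Lemma~\ref{lemma: recov_lower_bound} gives
$$R(\mathcal{C}) \ge 1 - \frac{(i_1 - i_s) + (j_1 - j_s) + (i_e - i_u) + (j_e - j_u)}{|P_H|},$$
and the numerator here is exactly the total coordinate span between the endpoints of the homologous path and the extreme anchors of the chain. Lemma~\ref{lemma:missed_ends} bounds the number of missed end-points --- and, in its proof, these very coordinate spans --- by $O(g(n))$ on the event $G$. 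Hence, conditioned on $G$,
$$R(\mathcal{C}) \ge 1 - \frac{O(g(n))}{|P_H|}.$$

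Second, I would verify the asymptotics of $g(n)/|P_H|$ against $1/\sqrt{m}$. Since $g(n) = \Theta\!\bigl(n^{C\alpha}(\log n)^2\bigr)$ by Lemma~\ref{hom_breaks:f2}, and, under $EC$, $|P_H| = \Theta(m') = \Theta(m) = \Omega(n^{2C\alpha + \epsilon})$, we obtain $g(n)/\sqrt{m} = O\!\bigl((\log n)^2 / n^{\epsilon/2}\bigr) \to 0$, so $g(n) = o(\sqrt{m})$ and therefore $g(n)/|P_H| = o(1/\sqrt{m}) = O(1/\sqrt{m})$. This shows $R(\mathcal{C}) \ge 1 - O(1/\sqrt{m})$ \emph{deterministically} on the event $G$. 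The only mild subtlety is confirming that the $(\log n)^2$ overhead in $g(n)$ is swallowed by the polynomial slack $n^{\epsilon}$ built into the assumption $m' = \Omega(n^{2C\alpha + \epsilon})$ of Definition~\ref{sce_defs}; this is precisely where the freedom to choose $\epsilon > 0$ is used, and it is essentially the reason the generative-region length was required to exceed $n^{2C\alpha}$.

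Finally, I would remove the conditioning. Because $R(\mathcal{C}) \in [0,1]$ always, the law of total expectation gives $\mathbb{E}[R(\mathcal{C})] \ge \mathbb{E}[R(\mathcal{C}) \mid G]\,\Pr(G) \ge \bigl(1 - O(1/\sqrt{m})\bigr)\bigl(1 - \tfrac{6}{n}\bigr)$, using $\Pr(G) \ge 1 - 6/n$. Expanding and discarding the positive cross term yields $\mathbb{E}[R(\mathcal{C})] \ge 1 - O(1/\sqrt{m}) - 6/n$. Since $m < n$ forces $1/n \le 1/\sqrt{m}$, the $6/n$ term is absorbed into $O(1/\sqrt{m})$, giving $\mathbb{E}[R(\mathcal{C})] \ge 1 - O(1/\sqrt{m})$.

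I expect no single step to be a serious obstacle, since the heavy lifting already lives in the earlier lemmas; the one place to be careful is the asymptotic comparison in the second step --- ensuring the polylogarithmic overhead in $g(n)$ is dominated by the polynomial gap between $\sqrt{m}$ and $n^{C\alpha}$ --- together with tracking that $|P_H| = \Theta(m)$ (and not merely $\Theta(m')$) under $EC$, so that the final bound is genuinely expressed in terms of $m$ as stated.
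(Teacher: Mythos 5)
Your proposal is correct and follows essentially the same route as the paper's proof: condition on $G$, combine Lemma~\ref{lemma: recov_lower_bound} with Lemma~\ref{lemma:missed_ends} to get $R(\mathcal{C}) \ge 1 - O(g(n))/|P_H|$ on $G$, check that $g(n) = o(\sqrt{m})$ using $m = \Omega(n^{2C\alpha+\epsilon})$, and then remove the conditioning via $\mathbb{E}[R] \ge \mathbb{E}[R \mid G]\Pr(G)$ with $\Pr(G) \ge 1 - 6/n$. If anything, you are slightly more explicit than the paper about why the polylogarithmic factor in $g(n)$ is absorbed by the polynomial slack $n^{\epsilon/2}$ and about tracking $|P_H| = \Theta(m)$ under $EC$.
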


\begin{proof}
    Recall that the recoverability of a chain $\mathcal{C}$ is defined as $R(\mathcal{C}) = \frac{|Align(\mathcal{C}) \cap P_H\setminus U|}{|P_H \setminus U|}$. We showed this is $ \ge 1 - \frac{(i_1 - i_s) + (j_1 - j_s) + (i_e - i_u) + (j_e - j_u)}{|P_H|}$, in $G = F1 \land F2 \land EC$  in the previous lemma. Note that $\Pr(G) \ge 1 - \frac{2}{n} - \frac{3}{n} - \frac{1}{n} = 1 - \frac{6}{n}$. 

    Working in $G$: by Lemma \ref{lemma:missed_ends}, the number of points missed before the start of the chain or after the end of the chain is $\le (i_1 - i_s) + (j_1 - j_s) + (i_e - i_u) + (j_e - j_u) \le  6\frac{1}{t_0}(\frac{2}{\beta} + 1)g(n)$. Recall that $g(n) = \frac{C \cdot 50}{8} \log(n) \ln(n) n^{C\alpha}$. For large enough $n$, $n^{C\alpha} < \sqrt{m}$, so $\frac{g(n)}{m} \le O(\frac{1}{\sqrt{m}})$. Thus,  $\mathbb{E}(\frac{(i_1 - i_s) + (j_1 - j_s) + (i_e - i_u) + (j_e - j_u)}{|P_H|} \mid G) = O(\frac{g(n)}{m}) = O(\frac{1}{\sqrt{m}})$.
    
    Combining it all,
\vspace{-1em}\[
\mathbb{E}(R) \ge \mathbb{E}(R \mid G) \Pr(G) 
= \Bigl( 1 - O\Bigl(\frac{1}{\sqrt{m}}\Bigr ) \Bigr) (1 - 6/n)
= 1 - O\Bigl(\frac{1}{\sqrt{m}}
\Bigr).
\]
\vspace{-0.5em}
\end{proof}
\vspace{-1em}
%
%
\vspace{-0.5em}
Finally, we now prove that the runtime of seed-chain-extend for any optimal chain is $O(mn^{C\alpha}\log n)$ in expectation. Here, we assume that the reference string $S$ is already seeded. This follows in two parts: (1) we first establish that the chaining runtime is $O(mn^{C\alpha}\log n)$ by showing that $\mathbb{E}[N\log N] = O(mn^{C\alpha}\log n)$, where $N$ is the number of anchors between $S,S'$; (2) we show that the extension runtime between anchors in an optimal chain is upper bounded by $O(mn^{C\alpha}\log n)$. Since the runtime of seed-chain-extend is dominated by chaining with extension, this proves the result. To show this second point, we prove that extension runtime through any optimal chain is at most the extension runtime through a related chain containing only homologous anchors with no mutations. See Appendix \ref{sec:runtime-thm-proof} for details.

\begin{theorem}[Runtime theorem]\label{thm:runtime_theorem}
    Given that the reference string $S$ is already seeded, the expected runtime $\mathbb{E}[T_{SCE}]$ of seed-chain-extend is $O(mn^{C\alpha}\log n)$.
\end{theorem}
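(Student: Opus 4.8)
The plan is to split $T_{SCE}$ into seeding (which is $O(m)$ to seed $S'$, since $S$ is given), chaining, and extension, and to bound each in expectation while conditioning on the high-probability event $G = F1 \land F2 \land EC$ and disposing of its complement by crude worst-case bounds. On $G^c$ I would use $N \le nm$ and total extension area $\le nm$, so the conditional runtime on $G^c$ is $O(nm\log n)$; multiplied by $\Pr(G^c)\le 6/n$ this contributes only $O(m\log n)$ and is absorbed. For chaining, linear-gap-cost chaining runs in $O(N\log N)$ with $N = N_H + N_C + N_S$. Under $G$ we have $N_S = 0$ by $F1$, and the expansion--contraction bounds of $EC$ cap the homologous and clipping anchors touching the $O(m)$ points of $P_H$ at $O(mk) = O(m\log n)$, so $\mathbb{E}[N\log N \mid G] = O(m\log^2 n)$. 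Because $C\alpha > 0$ makes $n^{C\alpha}$ dominate any polylog, chaining already lies in $O(mn^{C\alpha}\log n)$ and is never the binding term.

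The heart of the argument is the extension cost $\sum_{\ell}(i_{\ell+1}-i_\ell)(j_{\ell+1}-j_\ell)$ over consecutive anchors of an optimal chain $\mathcal{C}^*$. I would first prove a purely geometric monotonicity lemma: inserting a compatible interior anchor splits one extension box into two whose areas sum to strictly less (the two off-diagonal sub-rectangles are discarded), so enlarging a chain's anchor set can only decrease total extension area. Next I would argue that $\mathcal{C}^*$ contains every no-mutation homologous anchor: since the linear gap cost $\xi(i_u-i_1+j_u-j_1)$ telescopes, inserting an interior anchor raises the chaining score by exactly $+1$ at zero span cost, so any such anchor compatible with $\mathcal{C}^*$ must already be present. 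Letting $\mathcal{C}_P$ be the chain of all no-mutation homologous anchors (pairwise compatible, since they all lie on the monotone path $P_H$), the monotonicity lemma then gives the domination $\text{extension}(\mathcal{C}^*) \le \text{extension}(\mathcal{C}_P)$, reducing the analysis to a chain with clean, analyzable structure.

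Finally I would bound $\mathbb{E}[\text{extension}(\mathcal{C}_P)\mid G]$ directly. A $k$-block is preserved with probability $(1-\theta_T)^k = n^{-C\alpha}$ up to constants, so consecutive no-mutation blocks are spaced with geometric-type tails of mean $\Theta(n^{C\alpha})$; there are $\Theta(mn^{-C\alpha})$ such anchors and each expected squared $S$-gap is $\Theta(n^{2C\alpha})$, whence the expected sum of $S$-by-$S$ box areas is $\Theta(mn^{C\alpha})$. The $EC$ expansion bound converts each $S$-gap into a comparable $S'$-gap (at most $O(g(n))$ insertions over a length-$g(n)$ window, as in Lemma~\ref{lemma:missed_ends}), costing at most one additional logarithmic factor, so $\mathbb{E}[\text{extension}(\mathcal{C}_P)\mid G] = O(mn^{C\alpha}\log n)$. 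Assembling seeding, chaining, the $G^c$ remainder, and this extension bound proves $\mathbb{E}[T_{SCE}] = O(mn^{C\alpha}\log n)$.

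The step I expect to be the main obstacle is establishing $\mathcal{C}_P \subseteq \mathcal{C}^*$ in the presence of clipping anchors. A clipping anchor lies only partially on $P_H$ and can be anti-monotone relative to a nearby path anchor, so in principle it could occupy $\mathcal{C}^*$ and block a no-mutation homologous anchor from being insertable, breaking the domination. I would resolve this by using $EC$ to confine clipping anchors to within $O(k)$ of the path, so any such conflict is local to an $O(k)$ window and can displace only $O(1)$ homologous anchors; a maximum-cardinality optimal chain cannot profitably trade the denser homologous anchors for such a clipping anchor, so the domination survives up to lower-order terms already absorbed into the $O(mn^{C\alpha}\log n)$ bound.
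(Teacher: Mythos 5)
Your proposal follows essentially the same route as the paper: the same seeding/chaining/extension decomposition with crude worst-case handling of $G^c$, the same $O(mk)$ anchor-count bound for chaining under $G$, and the same extension argument via domination by the sparser subchain of mutation-free, $k$-spaced homologous anchors, whose gaps have geometric tails that $EC$ converts into comparable $S'$-gaps. The one place you diverge---worrying that a clipping anchor in the optimal chain could block a preserved homologous anchor and patching this with a local trading argument that loses lower-order terms---is handled more cleanly by the paper's Lemma~\ref{lemma:K_subset_C}: under $G$ every anchor in the optimal chain touches the monotone path $P_H$, and since a preserved anchor lies entirely on $P_H$, any incompatibility would contradict the monotonicity of $P_H$ itself, so the containment is exact and no such loss arises.
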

\begin{proof}
        The runtime $T_{SCE}$ of any chain $\mathcal{C}$ is
        the runtime of seeding the query $S'$, chaining, and extension. Seeding $S'$ is fast: it takes $O(m)$ time.

        We have that $T_{SCE} = O(m) + T_{Chain} + T_{Ext}$. By Lemma \ref{lemma:chaining_runtime}, $\mathbb{E}[T_{Chain}] = O(mn^{C\alpha}\log n)$, and by Lemma \ref{lemma:extension_runtime}, $\mathbb{E}[T_{Ext}] = O(mn^{C\alpha}\log n)$. Thus, $\mathbb{E}[T_{SCE}] = O(m) + O(\mathbb{E}[T_{Chain}]) + O(\mathbb{E}[T_{Ext}]) = O(mn^{C\alpha}\log n)$.
\end{proof}


\section{Experimental results}



\begin{figure}[t]
  \centering
  \begin{subfigure}{0.48\textwidth}
    \centering
    \includegraphics[height=0.3\textheight,keepaspectratio]
      {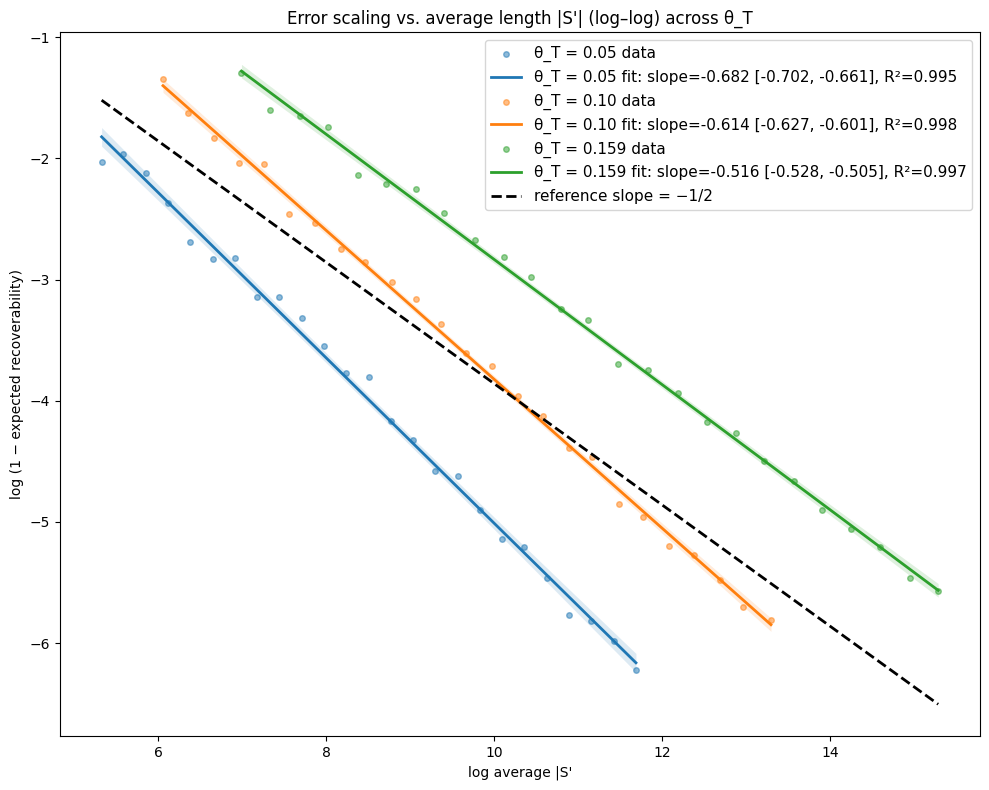}
    \caption{Log–log fit vs. avg. length of $S'$ for $\theta_T \in \{0.05,0.10,0.159\}$.}
    \label{fig:recoverability_by_theta}
  \end{subfigure}
  \hfill
  \begin{subfigure}{0.48\textwidth}
    \centering
    \includegraphics[height=0.3\textheight,keepaspectratio]
      {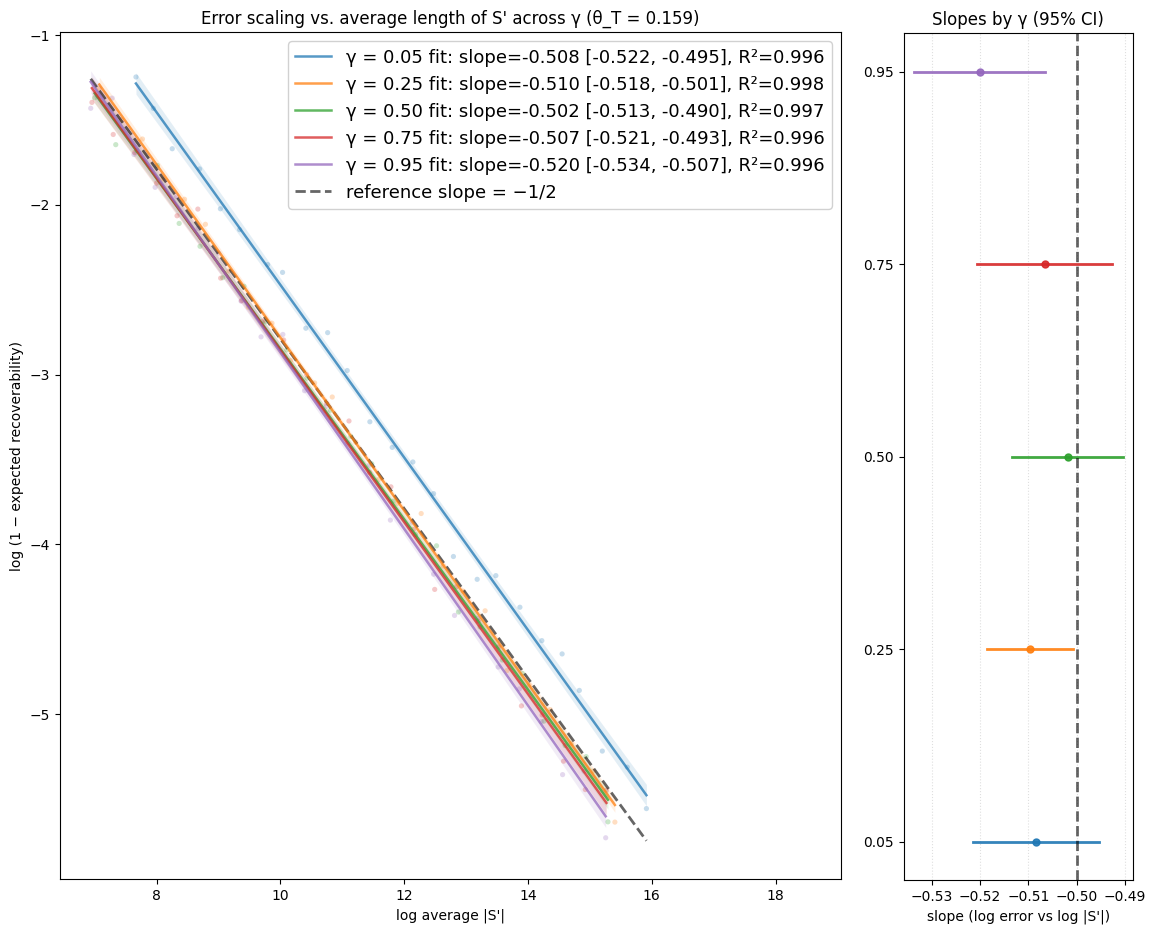}
    \caption{Left: log–log fit across $\gamma \in \{0.05,0.25,0.50,0.75,0.95\}$. Right: slopes (95\% CIs) with reference line at $-0.50$.}
    \label{fig:recoverability_by_gamma}
  \end{subfigure}

  \begin{subfigure}{1\textwidth}
    \centering
    \includegraphics[width=0.5\linewidth]{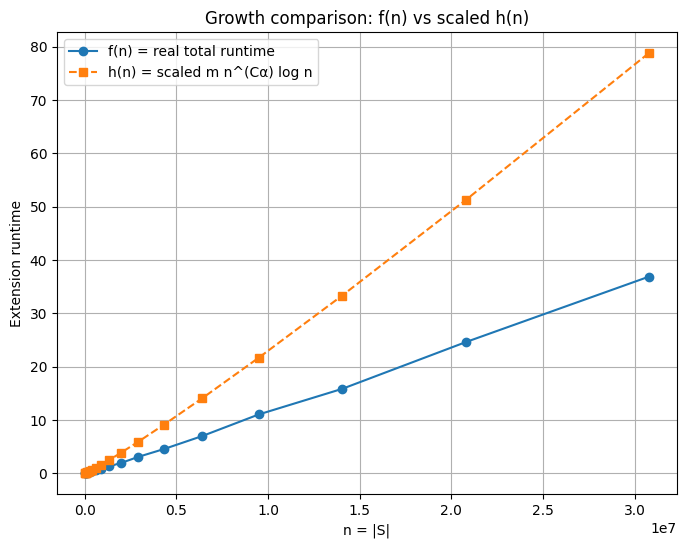}
    \caption{The average empirical runtime (blue) and the scaled theoretical prediction $O(mn^{C\alpha}\log n)$ (orange) for $26 \le k \le 44$. The growth of the empirical average runtime is consistent with $\mathbb{E}(T_{SCE}) = O(mn^{C\alpha}\log n)$.}
    \label{fig:runtime_experiments}
  \end{subfigure}
  \caption{Recoverability analyses across mutation parameters and corresponding runtime scaling behavior.}
  \label{fig:recoverability_and_runtime}
\end{figure}

Following the mutation model defined in Section \ref{sec:preliminaries}, we perform experiments in the following way: over $100$ iterations for each $k$ ranging from $20$ to $44$, we set $n$ such that $k = C\log_4 (n)$. We define $C = \frac{3}{1-2\alpha}$ and choose $m = n^{\frac{2C\alpha+1}{2}}$, so that $m = \Omega(n^{2C\alpha + \epsilon}) \ll n$. For each experiment, we produce an $S$ by randomly sampling letters i.i.d. from $\{A,C,T,G\}$ and, passing a substring of $S$ through the indel channel defined in the mutation model with some value of $\theta_T$ and $\gamma$ to get both the homologous path $P_H$ and the mutated substring $S'$. Since $\theta_T$ is an upper bound on the total mutation rate, we randomly select $\theta_i, \theta_d, \theta_s$ such that $\theta_i + \theta_d + \theta_s = \theta_T$.

Our experiments center around estimating the rate of convergence of the expected recoverability $\mathbb{E}(R)$ in terms of $m = |S'|$. We do this is by collecting the average recoverability over $100$ trials for fixed values of $\theta_T, k$, and $\gamma$. We also take the average length of $S'$ over the $100$ trials for fixed values of $k, \theta_T$, and $\gamma$. We then fit a linear regression on $\log(1 - \hat{\mathbb{E}}(R))$ against $\log{{\mathbb{\hat{E}}(|S'|)}}$. For example, if $\log(1 - \hat{\mathbb{E}}(R)) \approx a + b\log{{\mathbb{\hat{E}}(|S'|)}}$, then $1 - \hat{\mathbb{E}}(R) \approx c(\mathbb{\hat{E}}(|S'|))^b$, from which we conclude that $\hat{\mathbb{E}}(R) \approx 1 - O(\mathbb{\hat{E}}(m))^b)$. It is necessary to take the average value of $|S'|$ over iterations since the length of $S'$ is a random variable controlled through the sequence of edits.
The number of iterations required to see repeated instances of the same length of $S'$ is infeasible.
Instead,
Under $EC$, $|S'|$ and $|S|$ are interchangeable up to a constant.
In the log-log plot, the constant factor is absorbed into the intercept, leaving the slope unchanged.

We present experimental results in this section that show that the error $1 - \mathbb{E}(R)$ goes to $0$ at least as quickly as $\frac{1}{\sqrt{m}} = |S'|^{-\frac{1}{2}}$. Note that this implies that $\mathbb{E}(R) = 1 - O(\frac{1}{\sqrt{m}})$. 
Fig.~\ref{fig:recoverability_by_theta} shows the rate of convergence of $1 - \mathbb{E}(R)$ to $0$ as a function of $\theta_T$. It is clear from the experimental results that convergence is at least as quick as $\frac{1}{\sqrt{m}}$ for each value of $\theta_T \in \{0.05, 0.10, 0.159\}$. This represents a range of values for $\theta_T$, which establishes that, indeed, for all $\theta_T \le 0.159$, we have that $\mathbb{E}(R) = 1 - O(\frac{1}{\sqrt{m}})$. 
Fig.~\ref{fig:recoverability_by_gamma} shows $\mathbb{E}(R) = 1 - O(\frac{1}{\sqrt{m}})$ is independent of the value of $\gamma$, provided that $\gamma < 1$. We show log-log plots of recoverability error against the average $m = |S'|$ for $\gamma \in \{0.05, 0.25, 0.50, 0.75, 0.95\}$. The slopes for each line are $\le -0.50$, establishing this result over a range of values for $\gamma$.


The experimental runtime results are presented in Fig.~\ref{fig:runtime_experiments}, where we fix $\gamma = 0.5, \theta_T = 0.10, m' = n^{C\alpha + \frac{1}{2}}$, and vary $k$ from $26$ to $44$. Similar to the recoverability experiments, $S$ and $S'$ are generated $100$ times for each choice of parameters, and the average total runtime (chaining $+$ extension) and average predicted runtime are recorded. The predicted runtime is scaled by $\frac{2f(n_0)}{g(n_0)} = 3.87 \times 10^{-8}$ to clearly show the growths. Fig.~\ref{fig:runtime_experiments} empirically supports that the expected total runtime of seed chain extend is $O(mn^{C\alpha}\log n)$.



\section{Conclusion}

We have shown that under moderate assumptions, the expected recoverability of an optimal chain is $\ge 1 - O(\frac{1}{\sqrt{m}})$ and the expected runtime is $O(mn^{C\alpha}\log n) \le O(mn^{3.15 \cdot \theta_T}\log n)$, for a mutation model allowing indels and substitutions. This extends the paper of Yu and Shaw \cite{shaw2023proving},  which addressed the substitution-only regime. The main technical contributions of this paper are the introduction of \textit{clipping} anchors, which lie partially on the homologous path, the application of concentration inequalities to bound the probability of unfavorable edit histories, and a more complete understanding of why the seed-chain-extend heuristic is successful.


%
%
%
\bibliography{main.bib}

\newpage

\appendix

\title{Appendix to Incorporating indel channels into average-case analysis of seed-chain-extend}
\author{Spencer Gibson \and
Yun William Yu}
\titlerunning{Appendix to Average-case analysis of seed-chain-extend with indels}
\authorrunning{S. Gibson and Y.W. Yu}
\institute{Carnegie Mellon University, Pittsburgh, PA, USA}
\maketitle

\section{Dependence between $S$ and $S'$}
\label{sec:uncovered-gap-bounds}

\subsection{Tools}

In this section, we present two tools that simplify our analysis: match variables, which are the random variables for the event that a particular letter on $S$ matches some letter on $S'$, and a function that tracks where non-deleted positions on $S$ map to on $S'$. Since our analysis deals with various types of anchors, we formally define each of them in this section. 

To begin, we define the concept of a match variable. As stated before, these variables are indicator random variables for the event that positions on $S$ and $S'$ share the same letter. We can then define the random variable for the event that an anchor occurs at a position tuple of $S$ and $S'$. 

\begin{definition}
Let $M(i,j) = \mathbf{1}\{S[i] = S'[j]\}$ be the indicator random variable detecting if $S$ and $S'$ share the same character at positions $i$ and $j$. Define $A(i,j) = \prod_{\ell=0}^{k-1} M(i+\ell,\;j+\ell)$, the indicator random variable for an anchor occurring at $(i,j)$.
\end{definition}

Let $x \in [p+1, p+m']$ be an index in the generative region of $S$. If the character at position $x$ is not deleted in the mutation process, then there is a unique corresponding position in $S'$. One useful fact that we will repeatedly use is that the letter distribution at any other position in $S'$ is independent of the letter distribution at position $x$. To formalize this, we introduce the function $f$ that maps each position $i \in S$ to its unique corresponding position in $S'$ if one exists. This is done by referencing the homologous path: if $x$ is not deleted, then its corresponding position turns out to be the point $(x,y)$ on the path with the largest $y$-value. The definition below formalizes this.

\begin{definition}
Define the function $f:\{1,\dots,|S|\}\;\longrightarrow\;\{1,\dots,|S'|\}\cup\{\emptyset \}$ such that
$$
f(x)=
\begin{cases}
\emptyset, 
&\text{if }x\notin [p+1,p+m']\text{ or }x\text{ is deleted},\\[6pt]
\displaystyle
\max\bigl\{\,y : (x,y)\in P_H\bigr\},
&\text{otherwise.}
\end{cases}
$$

If in the mutation process, position $x \in [p+1,p+,m']$ of $S$ is not deleted, then either no mutation occurred at $x$ or another character is substituted for $S[x]$. In either case, there exists a position $y$ in $S'$ such that $(x,y) \in P_H$, meaning that $\max\bigl\{\,y : (x,y)\in P_H\bigr\}$ is well-defined. 

\end{definition}

Below, we provide a simple lemma stating that if $f$ takes the same value on two inputs, then it must map those inputs to $\emptyset$. In other words, $f$ is injective on preserved positions of $S$.
\begin{lemma}\label{sup_lemma:1}
    For positions $x, y \in [|S|]$, either $f(x) = f(y) = \emptyset$ or $f(x) \ne f(y)$.
\end{lemma}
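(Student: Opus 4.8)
The plan is to prove the stronger statement that $f$ is strictly increasing on the set of \emph{preserved} positions, i.e.\ those $x\in[p+1,p+m']$ that are not deleted (equivalently, those with $f(x)\neq\emptyset$). Strict monotonicity on this set gives injectivity there, and combined with the fact that every non-preserved position maps to $\emptyset$ it yields the stated dichotomy: if $x\neq y$, then either both are non-preserved (so $f(x)=f(y)=\emptyset$), or exactly one is (so one value is $\emptyset$ and the other an integer, hence unequal), or both are preserved (handled by strict monotonicity). So it suffices to show that for preserved positions $a<b$ we have $f(a)<f(b)$.

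First I would record the one structural fact that drives everything: the homologous path $P_H$ is coordinatewise non-decreasing. This is read directly off Definition~\ref{def:canonical-alignment}, since in each of the four append rules every newly appended point has both coordinates $\ge$ those of the previous last point. Consequently, for any value $w$ occurring as a first coordinate on the path, the points of $P_H$ with first coordinate $w$ form a contiguous block; and because $w<w'$ forces every such point to precede every point with first coordinate $w'$ along the path, while $y$ is non-decreasing, the maximal $y$-value at $w$ is at most the minimal $y$-value at $w'$. Writing $y_w^{\max}=\max\{y:(w,y)\in P_H\}=f(w)$ and $y_w^{\min}=\min\{y:(w,y)\in P_H\}$, this already gives the weak bound $f(a)=y_a^{\max}\le y_b^{\min}\le y_b^{\max}=f(b)$ for preserved $a<b$.

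The crux is upgrading this to a strict inequality, and this is where I expect the only real work to lie, since it forces us to distinguish preserved from deleted positions inside the append rules. The key claim is that a preserved position $b$ is always \emph{entered by a diagonal step}. Inspecting Definition~\ref{def:canonical-alignment}, the two rules in which position $b$ is not deleted (``no insertion/deletion'' and ``insertion, no deletion'') both finish by appending a point $(b,\cdot)$ whose $y$-coordinate is exactly one larger than the greatest $y$-coordinate attained at first coordinate $b-1$; the two deletion rules enter $b$ by a horizontal step but apply precisely when $b$ is deleted, hence are irrelevant here. The one bookkeeping point to get right is that any insertion occurring \emph{at} position $b$ contributes its intermediate points to first coordinate $b-1$, so that it is already absorbed into $y_{b-1}^{\max}$ before the diagonal step lands on $b$; the append rule $(i,j+1),\dots,(i,j+I),(i+1,j+I+1)$ makes this explicit. (Note also that first coordinate $b-1$ does occur on the path: processing position $b-1$ always produces a point there, and if $b=p+1$ then $b-1=p$ is the initial point $(p,0)$.) Hence $y_b^{\min}=y_{b-1}^{\max}+1$. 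Combining with monotonicity, for any preserved $a<b$ (so $a\le b-1$) we obtain $f(a)=y_a^{\max}\le y_{b-1}^{\max}=y_b^{\min}-1<y_b^{\min}\le f(b)$, giving $f(a)<f(b)$. This is exactly strict monotonicity on preserved positions and completes the argument.
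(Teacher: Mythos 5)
Your proof is correct, and it is genuinely more complete than the one in the paper. Both arguments share the same outer trichotomy (both positions non-preserved, exactly one non-preserved, both preserved), and the first two cases are immediate in either treatment. The difference is in the only substantive case: for two preserved positions the paper simply asserts that ``they have unique corresponding positions $x', y'$ on $S'$, implying that $f(x) = x' \ne y' = f(y)$,'' leaning on the informal picture of the mutation channel rather than on Definition~\ref{def:canonical-alignment}. You instead derive the needed injectivity from the formal definition of $f$ via $P_H$, by proving the stronger fact that $f$ is strictly increasing on preserved positions: coordinatewise monotonicity of $P_H$ gives $y_a^{\max} \le y_{b-1}^{\max}$, and your observation that a preserved position is always entered by a diagonal step (with any insertion at $b$ already absorbed into first coordinate $b-1$) gives $y_b^{\min} = y_{b-1}^{\max}+1$, hence $f(a) < f(b)$. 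Your bookkeeping is right, including the check that $b-1$ always occurs as a first coordinate on the path. What your route buys is a self-contained argument tied to the actual definition of $f$, plus strict monotonicity as a reusable byproduct; it also cleanly covers the case $f(x)=\emptyset$ because $x \notin [p+1,p+m']$, which the paper's proof folds into ``deleted'' without comment. The only (harmless) implicit assumption, shared with the paper, is that $x \ne y$.
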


\begin{proof}
    If $x$ and $y$ are both deleted, then $f(x) = f(y) = \emptyset$. Otherwise, if only one of $x$ or $y$ is deleted, without loss of generality, let it be $x$, then $f(x) = \emptyset \ne f(y)$. Lastly, if neither $x$ nor $y$ is deleted, then they have unique corresponding positions $x', y'$ on $S'$, implying that $f(x) = x' \ne y' = f(y)$.
\end{proof}

As an example, consider again $S = TACTTCGC$, which is transformed into $S'=TACTTTAC$ in Fig.~\ref{fig:combined-recoverability-matchgraph}C. Consider position $4$ on S: to calculate $f(4)$, we find the point $(4,y)$ on $P_H$ (refer to Fig.~\ref{fig:homologous_path}) with the largest $y$-value. This is the point $(4,5)$, so $f(4) = 5$. Since the letter T is inserted to the left of position $4$ and position $4$ is not deleted, the corresponding position on $S'$ is $5$. This confirms that $f(4)$ is the position on $S'$ that corresponds to the position $4$ on S. Now consider position $5$ on S: this letter is deleted when generating $S'$ so it has no corresponding position. So, $f(5) = \emptyset$, matching our intuition. 

With $f$ defined, we are equipped to analyze the dependency graph between $S$ and $S'$. We derive three key results: sufficient conditions for a set of match variables, defined in the next section, to be independent of each other, sufficient conditions for two anchors to be independent of each other, and we bound the probabilities of the three anchor types listed in Sec.~\ref{sec: match_graph}.

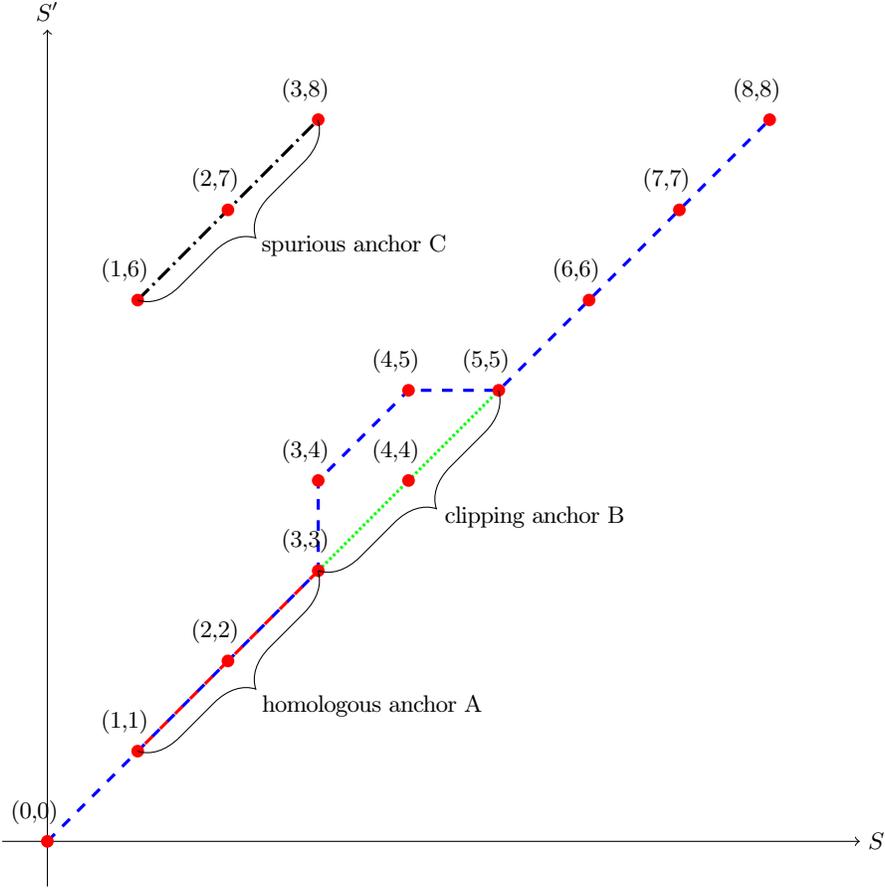
\begin{figure}[ht]
  \centering
\begin{tikzpicture}[scale=1.2]
    \draw[->] (-0.5,0) -- (9,0) node[right] {$S$};
    \draw[->] (0,-0.5) -- (0,9) node[above] {$S'$};

    \coordinate (A) at (0,0);
    \coordinate (B) at (1,1);
    \coordinate (C) at (2,2);
    \coordinate (D) at (3,3);
    \coordinate (E) at (3,4);
    \coordinate (F) at (4,4);
    \coordinate (G) at (4,5);
    \coordinate (H) at (5,5);
    \coordinate (I) at (6,6);
    \coordinate (J) at (7,7);
    \coordinate (K) at (8,8);
    \coordinate (L) at (1,6);
    \coordinate (M) at (2,7);
    \coordinate (N) at (3,8);

\draw[very thick, blue, dash pattern=on 4pt off 4pt, dash phase=0pt]
    (A) -- (B) -- (C) -- (D) -- (E) -- (G) -- (H) -- (I) -- (J) -- (K);

    \draw[very thick, red, dash pattern=on 4pt off 4pt, dash phase=4pt]
    (B) -- (D);
    \draw[very thick,green,dash pattern=on 1pt off 1pt] (D) -- (H);
    \draw[very thick,black,dash pattern=on 6pt off 2pt on 1pt off 2pt] (L) -- (N);

    \foreach \p/\label in {
    A/(0{,}0), B/(1{,}1), C/(2{,}2), D/(3{,}3), E/(3{,}4),
    F/(4{,}4), G/(4{,}5), H/(5{,}5), I/(6{,}6), J/(7{,}7),
    K/(8{,}8), L/(1{,}6), M/(2{,}7), N/(3{,}8)} {
    \fill[red] (\p) circle (2pt);
    \node[above right] at ([xshift=-0.5cm,yshift=0.1cm]\p) {\small $\label$};
}

\draw[decorate,decoration={brace,amplitude=15pt,mirror}] (B) -- (D);
\node at ($(B)!0.5!(D) + (1.6,-0.5)$) {\small homologous anchor A};

\draw[decorate,decoration={brace,amplitude=15pt,mirror}] (D) -- (H);
\node at ($(D)!0.5!(H) + (1.4,-0.4)$) {\small clipping anchor B};

\draw[decorate,decoration={brace,amplitude=15pt,mirror}] (L) -- (N);
\node at ($(L)!0.5!(N) + (1.4,-0.4)$) {\small spurious anchor C};

\end{tikzpicture}

  \caption{The points along the dashed blue line make up the homologous path given the edits turning $S = TACTTCGC$ into $S' = TACTTTAC$ following Fig.~\ref{fig:combined-recoverability-matchgraph}C. In this example, anchors are matching seeds of length $3$. Anchor A (red dash) is a \textit{homologous} anchor since it lies entirely on the path. Anchor B (green dash) is a \textit{clipping} anchor since it lies partially on the path, namely, the midpoint of the anchor does not belong to the homologous path. Anchor C (black dash) is spurious since it lies entirely off the path.}
  \label{fig:homologous_path}
\end{figure}

\subsection{Dependence structure between $S$ and $S'$}\label{sec: match_graph}

The general strategy of our proof is to show that with high probability there are no spurious anchors and that there are not large uncovered gaps in the homologous path before the first anchor or after the last anchor in any optimal chain. To do this, we will need tail bounds for the number of spurious anchors $N_S$. In turn, this requires bounding the variance of $N_S$. To show that large uncovered gaps in the homologous path do not exist, we need to show that homologous anchors are sufficiently dense. For this, we will use concentration inequalities, which require bounding the expected value of the number of homologous anchors $N_H$. To these ends, we must understand the dependence structure between $S$ and $S'$, which is the purpose of this section.


We present three main results in this section: (1) we introduce the match graph, which captures the dependence structure between $S$ and $S'$ and prove a sufficient condition for a set of match variables, which are defined shortly, to be independent, (2) we use this independence result to bound the probabilities of each anchor type occurring, and (3) we prove a sufficient condition for two anchors to be independent.

The match graph, defined below, is a bipartite graph that captures the dependence structure between positions on $S$ and $S'$. The original (also called the unconditioned) match graph contains edges between corresponding positions $i$ in $S$ and $f(i)$ in $S'$ provided that $f(i) \ne \emptyset$. The intuition is that positions connected by an edge have dependent letter distributions; this is clear in the unconditioned match graph, since $\Pr(S'[f(i)] = S[i]) = 1 - \theta_s \ne \frac{1}{\sigma}$. We refer to \textit{match variables}, where a match variable $M(i,j)$ is the event that $S[i] = S'[j]$. Fig.~\ref{fig:match_graph_cycle} shows the case where $S'$ is obtained through substitutions from $S$ and the graph is conditioned on $A(1,3), A(3,1), A(2,4), A(4,2)$: note that $S[1] \to S'[3] \to S[3] \to S'[1] \to S[1]$ is a cycle in this induced match graph. 

\begin{definition}\label{def:match-graph}
Let $\mathcal{M} = \{M(a_1,b_1), \dots, M(a_p, b_p)\}$ be a set of matching variables where $a_i$ are positions in $S$ and $b_i$ are positions in $S'$. The match graph induced by $\mathcal{M}$ refers to the graph $G = (V, E)$ where the vertices $V$ are the letters in $S$ and $S'$, and the edges are given by 
$E = \{(x_i, y_{f(x_i)}) \mid i \in (p+1, p+m') \land f(x_i) \ne \emptyset\} \cup \{(x_h, y_l) \mid M(h,l) \in \mathcal{M}\}$. Note that \(G(\mathcal{M})\) is bipartite.
\end{definition}

Intuitively, if the match variables form a cycle, then they are dependent: in Fig.~\ref{fig:match_graph_cycle}, if $A(1,1) = A(1,3) = A(3,3) = 1$, then we must also have $A(3,1) = 1$. The first main result of this section is a converse of this statement: if the match variables do not induce a cycle in the match graph then they are independent.
We will use this fact to calculate the probability that different anchor types occur, and when a pair of anchor indicator variables are independent.

We begin with a lemma from Yu and Shaw \cite{shaw2023proving}, where they prove that if random variables belong to different connected components of the match graph then they are conditionally independent.

\begin{lemma}[(Yu and Shaw) Supplemental Lemma: Conditional independence in the match graph]\label{lemma: connected_components}
Consider a set of random variables of the form $
\mathcal{M} = \{\,M(i_1,j_1),\ldots\, M(i_\ell, j_\ell)\,\}.
$
If two vertices \(u,v\) lie in separate connected components of \(G(\mathcal{M})\), then they are conditionally independent of \(\mathcal{M}\).
\end{lemma}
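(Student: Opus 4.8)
The plan is to prove the stronger statement that, conditioned on any realization $\vec{m}$ of the match variables in $\mathcal{M}$, the joint law of all the letters $\{S[i]\}_i \cup \{S'[j]\}_j$ factorizes as a product over the connected components $C_1, \dots, C_r$ of $G(\mathcal{M})$. Once this factorization is in hand, any two vertices $u, v$ lying in distinct components are immediately conditionally independent given $\mathcal{M}$, since their letters are governed by separate factors; this is exactly a (global) Markov property for the graphical model $G(\mathcal{M})$. Throughout I would condition on the edit structure (which positions are inserted, deleted, or substituted, plus the insertion lengths), since $f$ and hence $G(\mathcal{M})$ are determined by that structure but not by the actual letters, and the letters remain random after this conditioning.

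First I would establish the \emph{unconditioned} factorization over the structural subgraph $G_0$ of $G(\mathcal{M})$, containing only the edges $\{x, y_{f(x)}\}$ with $f(x) \neq \emptyset$. By Lemma \ref{sup_lemma:1}, $f$ is injective on preserved positions, so every vertex has structural degree at most one; hence $G_0$ is a disjoint union of single edges $\{x, y_{f(x)}\}$ and isolated vertices. The isolated $S$-vertices are deleted positions (or positions outside the generative region), whose letters are i.i.d.\ uniform and never copied into $S'$; the isolated $S'$-vertices are inserted positions, whose letters are independent uniform by the mutation model. For each edge $\{x, y_{f(x)}\}$, the pair $(S[x], S'[f(x)])$ has a joint law determined solely by the uniform letter $S[x]$ and the independent substitution coin at position $x$. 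Because the $S$-letters are i.i.d., the substitution and insertion randomness is independent across positions, and no letter appears in two structural factors (by injectivity), the full joint law of all letters is a product of one factor per structural component. Since every structural edge is also an edge of $G(\mathcal{M})$, each structural component is contained in a single $G(\mathcal{M})$-component; grouping the structural factors by which $C_s$ they fall into shows the unconditioned law already factorizes as $P = \prod_{s=1}^r P_s$, where $P_s$ depends only on the letters in $C_s$.

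Finally I would incorporate the conditioning. Each match variable $M(i_t, j_t) = \mathbf{1}\{S[i_t] = S'[j_t]\}$ contributes the edge $(x_{i_t}, y_{j_t})$ to $G(\mathcal{M})$, so its two arguments lie in the same component $C_{s(t)}$. Hence the indicator of the event $\{\mathcal{M} = \vec{m}\}$ factorizes as $\prod_{s=1}^r g_s$, where each $g_s$ is a product of the relevant match indicators and depends only on the letters in $C_s$. Multiplying the product prior $\prod_s P_s$ by $\prod_s g_s$ and renormalizing yields a conditional law $\prod_s \left( P_s g_s / Z_s \right)$ that is again a product over the components $C_s$, which is the claimed factorization. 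The main obstacle is the middle paragraph: carefully verifying from the mutation model that the unconditioned letter distribution is genuinely a product measure over the structural components, i.e.\ that injectivity of $f$ (Lemma \ref{sup_lemma:1}) rules out any letter participating in two factors and that the substitution and insertion randomness introduces no cross-component coupling. Once that product structure is secured, the conditioning step is the routine observation that multiplying a product measure by a product likelihood preserves the product form.
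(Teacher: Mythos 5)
Your proposal is correct. Note first that the paper does not actually prove this lemma in the present document---its ``proof'' is a pointer to Supplemental Lemma S2 of the prequel---so the comparison is really against that deferred argument and the surrounding machinery. Your route is a clean, self-contained graphical-model argument: condition on the edit history so that $f$ and $G(\mathcal{M})$ are deterministic, observe via Lemma~\ref{sup_lemma:1} that the structural subgraph is a disjoint union of single edges and isolated vertices whose letter laws use pairwise-disjoint sources of randomness (the i.i.d.\ letters of $S$, the per-position substitution coins and replacement choices, and the independent inserted letters), so the unconditioned letter distribution is a product over structural components; then each match indicator is a function of letters within one $G(\mathcal{M})$-component, so multiplying the product prior by the product likelihood and renormalizing factor-by-factor preserves the product form, giving the global Markov property and hence the claimed conditional independence. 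This is sound, and the one point you rightly flag---that conditioning on the edit structure is what makes ``connected components of $G(\mathcal{M})$'' a well-posed hypothesis---is consistent with how the lemma is used downstream (e.g.\ Lemma~\ref{lemma:independent-if-acyclic} likewise reasons relative to a fixed $f$). By contrast, the adjacent lemmas in this paper establish their independence claims via a symmetry argument (the cyclic permutation $C_\sigma$ of the alphabet preserves all match variables and is measure-preserving, forcing marginals to be uniform). Your factorization is somewhat more general---it yields independence of the full letter distributions across components, not just uniformity of particular conditionals---at the cost of having to verify the product structure of the base measure carefully, which you do. Either approach suffices.
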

\begin{proof}
    Refer to Supplemental Lemma S2 of the prequel \cite{shaw2023proving} for the proof of this result.
\end{proof}

Below, we prove two lemmas that will be combined to show that match variables are independent whenever the induced match graph does not contain a cycle. The first lemma, appearing below, establishes that a set of non-corresponding variables $M_C$, those $M(i,j)$ for which $f(i) \ne j$, are conditionally independent of any set of corresponding variables given that $M_C$ does not induce a cycle in the conditional match graph. We will use this as one of the cases in 

\begin{lemma}[Adapted from Yu and Shaw]\label{lemma:independent-if-acyclic}
    Given a set of non-corresponding match variables $\mathcal{M} = \{\,M(i_1,j_1),\,M(i_2,j_2),\,\dots\},$ such that $f(i_\ell) \ne j_\ell$ for all $M(i_\ell, j_\ell) \in \mathcal{M}$ and a set of corresponding match variables $\mathcal{M}_C = \{\,M(a_1,b_1),\,M(a_2,b_2),\,\dots\},$ such that $f(a_t) = b_t$ for all $M(a_t, b_t) \in \mathcal{M}_C$, then the match variables in $\mathcal{M}$ are independent conditioned on the match variables in $\mathcal{M}_C$ if the induced match graph has no cycles. Specifically, $\Pr(\mathcal{M} \mid \mathcal{M}_C) = \prod_{\ell}^{|\mathcal{M}|} \Pr(M(i_\ell, j_\ell) \mid \mathcal{M}_C) = \prod_{\ell}^{|\mathcal{M}|}  \Pr(M(i_\ell, j_\ell))$.
\end{lemma}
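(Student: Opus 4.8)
The plan is to establish the two equalities separately: first that the non-corresponding variables are mutually independent once we condition on $\mathcal{M}_C$, and second that this conditioning leaves each individual marginal at its unconditional value $1/\sigma$. I would begin from the generative decomposition of the channel: conditioned on the edit structure $\mathcal{E}$ (equivalently, on $P_H$ and hence on $f$), the letters of $S$ are i.i.d.\ uniform, each inserted letter of $S'$ is a fresh i.i.d.\ uniform letter, and for every preserved position $i$ the letter $S'[f(i)]$ equals $S[i]$ unless an independent per-position substitution replaces it with a uniform non-matching letter. Under this view the corresponding variables in $\mathcal{M}_C$ are exactly the per-position ``no substitution'' events sitting on the backbone edges $(x_{a_t}, y_{f(a_t)})$ of the match graph (Definition \ref{def:match-graph}), and these events are mutually independent across positions; conditioning on them only fixes substitution outcomes at those positions.

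Next I would reduce to a single connected component. By Lemma \ref{lemma: connected_components}, vertices lying in different components of the induced match graph are conditionally independent, and since the $\mathcal{M}_C$-edges coincide with backbone edges the relevant graph is just $G(\mathcal{M})$; the hypothesis is that this graph is acyclic, so each component is a tree. Within a component I would invoke the injectivity of $f$ on preserved positions (Lemma \ref{sup_lemma:1}): for a non-corresponding variable $M(i,j)$ we have $j \ne f(i)$, so the preimage $f^{-1}(j)$, if it exists, is some $i' \ne i$, and otherwise $y_j$ is an inserted letter. Either way the randomness producing $S'[j]$ traces back to a source letter at $i'$ or to an insertion, both disjoint from $S[i]$. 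Because $S[i]$ is uniform and independent of that source, each individual event satisfies $\Pr(M(i,j) \mid \mathcal{M}_C) = 1/\sigma = \Pr(M(i,j))$, which gives the second equality; the substitution outcomes fixed by $\mathcal{M}_C$ never pit $S[i]$ directly against $S'[j]$, so that conditioning is irrelevant to this marginal.

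The crux is the first equality---mutual independence of the non-corresponding events given $\mathcal{M}_C$---and I expect this to be the main obstacle, since distinct events can legitimately share a source letter (a path such as $y_j - x_i - y_{j'}$ is permitted in an acyclic graph and must still factorize). I would handle it by a leaf-peeling induction on the bipartite forest: repeatedly select a leaf edge $M(i,j)$ whose $S'$-endpoint letter $S'[j]$ is, conditioned on all previously revealed events and on $\mathcal{M}_C$, still uniform and independent of the remaining randomness; peeling it off contributes an independent factor of $1/\sigma$, and the reduced graph stays an acyclic forest, so the induction proceeds. The role of acyclicity is precisely to guarantee that such a leaf always exists and that revealing it never forces the value of another queried event---the contrast being the four-cycle $S[1] \to S'[3] \to S[3] \to S'[1] \to S[1]$ described in the text, where the fourth match is determined by the other three and independence fails. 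Carefully tracking which source letters have been consumed at each peeling step, and verifying that a leaf's letter remains conditionally uniform, is the delicate bookkeeping that the acyclicity hypothesis is designed to make go through.
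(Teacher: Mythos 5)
Your overall architecture matches the paper's: both arguments peel off one non-corresponding variable at a time by induction, reduce the joint factorization to showing that each conditional marginal equals $1/\sigma$, and both correctly observe that conditioning on $\mathcal{M}_C$ adds no new edges to the match graph. Your unconditional-marginal argument (injectivity of $f$ on preserved positions, so the source of $S'[j]$ is disjoint from $S[i]$) is sound. Where you diverge is in the crux you yourself flag: the paper does \emph{not} peel leaves or track consumed source letters. It peels an \emph{arbitrary} variable $M_{n+1}$ last, and uses the fact that in a forest every edge is a bridge: deleting the edge $(x_{i_{n+1}}, y_{j_{n+1}})$ necessarily disconnects its endpoints (else re-adding it would close a cycle). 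Lemma~\ref{lemma: connected_components} then gives conditional independence of the two letters given $\mathcal{M}_C \cup \mathcal{M}^{-}$, and a cyclic-permutation symmetry of the alphabet (which preserves all match-variable values, hence is measure-preserving on the conditioned space) forces each letter to be conditionally uniform. That combination replaces all of your bookkeeping with two already-packaged tools.

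The gap in your version is the leaf-selection step itself. A queried edge need not be incident to a leaf of the match graph: backbone edges can flank both of its endpoints. For instance, if $f(i_2)=j_1$, the component containing the queried edges $(x_{i_1},y_{j_1})$ and $(x_{i_2},y_{j_2})$ can be the path $y_{f(i_1)} - x_{i_1} - y_{j_1} - x_{i_2} - y_{j_2} - x_{f^{-1}(j_2)}$, whose only leaves are endpoints of backbone edges, not of queried edges. So "repeatedly select a leaf edge $M(i,j)$" has no candidate, and you would have to reformulate the invariant (e.g., peel the queried edge nearest a leaf, or argue about freshness of the source two hops away). This is repairable, but it is exactly the delicacy the paper's bridge-plus-symmetry argument is designed to avoid; if you rewrite your proof, replace the leaf-peeling with the observation that any edge of an acyclic graph separates its endpoints, and then invoke Lemma~\ref{lemma: connected_components} directly.
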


\begin{proof}
    As in the prequel, denote $M(i_\ell,j_\ell) = M_\ell$ and $\mathcal{M}^{-} = \{M_1, \dots, M_n \}$ for brevity. We proceed by induction. When $|\mathcal{M}| = 0$, the statement holds trivially. Assume the statement holds up to all $n \in \mathbb{N}$. If $|\mathcal{M}| = n + 1$, we can write $\Pr(M_1,\dots,M_{n+1} \mid \mathcal{M}_C) = \Pr(\mathcal{M}^{-} \mid \mathcal{M}_C)\Pr(M_{n+1} \mid \mathcal{M}_C, \mathcal{M}^{-})$. By the inductive hypothesis, $\Pr(M_1, \dots, M_n \mid \mathcal{M}_C) = \prod_{i=1}^n \Pr(M_i)$. It remains to show that $\Pr(M_{n+1}\mid \mathcal{M}_C, \mathcal{M}^{-}) = \Pr(M_{n+1})$.

    We have that $\Pr(M_{n+1}\mid \mathcal{M}_C, \mathcal{M}^{-}) = \sum_{\alpha \in \Sigma} \Pr(S[i_{n+1}] = \alpha, S'[j_{n+1}] = \alpha \mid \mathcal{M}_C, \mathcal{M}^{-})$.

    Since the match variables do not induce a cycle, we must hve that $i_{n+1} \in S$ and $j_{n+1} \in S'$ belong to separate connected components in the conditional match graph. Thus,  the previous sum equals $\sum_{\alpha \in \Sigma} \Pr(S[i_{n+1}] = \alpha \mid \mathcal{M}_C, \mathcal{M}^{-}) \Pr(S'[j_{n+1}] = \alpha \mid \mathcal{M}_C, \mathcal{M}^{-})$.

    We will show that $\Pr(S[i_{n+1}] = \alpha \mid \mathcal{M}_C, \mathcal{M}^{-}) = \Pr(S'[j_{n+1}] = \alpha \mid \mathcal{M}_C, \mathcal{M}^{-}) = \frac{1}{\sigma}$. As before, consider the cyclic permutation of letters $C_\sigma$ on the alphabet $\Sigma$. For DNA this could be the map $A\to G\to C \to T$. Note that applying $C_\sigma$ to all positions of $S$ and $S'$ preserves the value of all match variables. As such, $C_\sigma$ is a probability preserving transformation. Thus, if we let $A$ be the set of all string pairs $(S, S')$ such that $S[i_{n+1}] = \alpha$ then $\Pr(A \mid \mathcal{M}_C, \mathcal{M}^{-}) = \Pr(C_\sigma(A) \mid \mathcal{M}_C, \mathcal{M}^{-}) = \dots = \Pr(C_\sigma^{|\Sigma| - 1}(A) \mid \mathcal{M}_C, \mathcal{M}^{-})$. The sets $A, C_\sigma(A), \dots, C_\sigma^{|\Sigma| - 1}(A)$ partition the set of string pairs, we have that $\Pr(S[i_{n+1}] = \alpha \mid \mathcal{M}_C, \mathcal{M}^{-}) = \Pr(A \mid \mathcal{M}_C, \mathcal{M}^{-}) = \frac{1}{\sigma}$. By the same argument considering pairs where $S'[j_{n+1}] = \alpha$, we have that $\Pr(S'[j_{n+1}] = \alpha \mid \mathcal{M}_C, \mathcal{M}^{-}) = \frac{1}{\sigma}$. 

    Thus, $\Pr(M_{n+1}\mid \mathcal{M}_C, \mathcal{M}^{-}) = \frac{1}{\sigma} = \Pr(M_{n+1})$, which completes the proof.
\end{proof}

Now, we show that when the match variables $\mathcal{M}$ contain only those $M(i,j)$ for which $f(i) = j$, i.e., the position $j$ in $S'$ corresponds to the position $i$ in $S$ during the generation process, then the random variables in $\mathcal{M}$ are independent.
\begin{lemma}\label{lemma:homologous_independent}
    The random variables
\[
\mathcal{M} = \{\,M(i_1,j_1),\,M(i_2,j_2),\,\dots\},
\]
where $f(i_\ell) = j_\ell$ for all $M(i_\ell, j_\ell) \in \mathcal{M}$
are independent.
\end{lemma}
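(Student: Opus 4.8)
The plan is to reduce the joint distribution of the corresponding match variables to a product by showing that each $M(i_\ell,j_\ell)$ with $f(i_\ell)=j_\ell$ is a function of the channel's action at a single, distinct position of $S$, and then invoking the per-position independence built into the mutation model.

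First I would record the structural fact that the variables touch pairwise-distinct coordinates. Because $f(i_\ell)=j_\ell\neq\emptyset$, each position $i_\ell$ is preserved (not deleted) in the generation of $S'$. If two variables $M(i_\ell,j_\ell)$ and $M(i_{\ell'},j_{\ell'})$ shared the $S$-coordinate $i_\ell=i_{\ell'}$, then $j_\ell=f(i_\ell)=f(i_{\ell'})=j_{\ell'}$ and they would be the same variable; and by Lemma~\ref{sup_lemma:1}, the injectivity of $f$ on preserved positions gives $j_\ell\neq j_{\ell'}$ whenever $i_\ell\neq i_{\ell'}$. Hence the $i_\ell$ are distinct and the $j_\ell$ are distinct, so the variables reference disjoint coordinates on both $S$ and $S'$.

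The key step is to characterize $M(i_\ell,j_\ell)$ locally. Since $j_\ell=f(i_\ell)$ is exactly the position in $S'$ to which the preserved symbol $S[i_\ell]$ maps, $S'[j_\ell]$ equals $S[i_\ell]$ when no substitution occurs at $i_\ell$ and a uniformly chosen different letter when a substitution occurs. Therefore $M(i_\ell,j_\ell)=\mathbf{1}\{\text{no substitution at position }i_\ell\}$, a function only of the channel's action at position $i_\ell$; in particular it is unaffected by the inserted-string randomness and by the letters and edits at every other position. In the mutation model the letters $S[i]$ are i.i.d.\ and the substitution/deletion/insertion decisions are made independently across positions, so the randomness governing $M(i_\ell,j_\ell)$ for distinct $i_\ell$ comes from disjoint, mutually independent bundles. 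Consequently $\Pr(\mathcal{M})=\prod_\ell \Pr(M(i_\ell,j_\ell))$, each factor equal to the single-position match probability $1-\theta_s$.

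The main obstacle, and the point needing the most care, is disentangling the randomness that \emph{defines} the correspondence $f(i_\ell)=j_\ell$ from the randomness that determines the match \emph{given} that correspondence. The event $f(i_\ell)=j_\ell$ is fixed by the pattern of deletions and insertions, whereas $M(i_\ell,j_\ell)$ is fixed by the substitution indicator at $i_\ell$. Since at each position the substitution, deletion, and insertion events are mutually independent (as reflected by the $(1-\theta_d)(1-\theta_s)(1-\theta_i)$ ``no edit'' probability in the remark following the mutation model), conditioning on the correspondence structure that realizes all the $f(i_\ell)=j_\ell$ does not perturb the substitution indicators at the preserved positions $i_\ell$, and so the factorization survives the conditioning. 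I would state this conditioning explicitly to avoid the appearance that the random object $f$ and the match variables are being treated as independent without justification; once it is in place, the product form follows immediately.
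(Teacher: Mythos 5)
Your proof is correct, but it takes a genuinely different route from the paper's. The paper stays inside its match-graph machinery: it observes that conditioning on corresponding match variables adds no new edges to the unconditioned match graph, that the partial injectivity of $f$ forces every vertex to have degree at most one so each pair $(i_\ell, f(i_\ell))$ sits in its own connected component, and then it invokes the prequel's connected-component conditional-independence lemma (Lemma~\ref{lemma: connected_components}) to get the factorization. You instead collapse each corresponding variable to a single-site channel event, $M(i_\ell, f(i_\ell)) = \mathbf{1}\{\text{no substitution at } i_\ell\}$ (exact here because a substitution always produces a \emph{different} letter), and then appeal to the per-position independence of the mutation model; your explicit handling of the conditioning on the indel pattern that realizes $f(i_\ell)=j_\ell$ is a point of care the paper leaves implicit, and is the right way to justify treating $f$ as fixed. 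What your approach buys is elementarity and self-containment: you avoid the cyclic-permutation symmetry argument underlying the graph lemma entirely. What the paper's approach buys is uniformity: the same graph lemma also drives the non-corresponding case (Lemma~\ref{lemma:independent-if-acyclic}), and the two are then composed in Lemma~\ref{cor:match_independence}; your reduction to a substitution indicator is specific to corresponding variables and would not extend to acyclic sets of spurious matches, so in the broader development the graph machinery is still needed.
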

\begin{proof}
    Note that conditioning on these match variables does not add new edges to the unconditioned match graph.

    Each position $i \in S$ has at most one corresponding position in $S'$ and, similarly, each position $j \in S'$ has at most corresponding position in $S$. Thus, each $M(i_\ell, f(i_\ell))$ belongs to its own connected component. By Lemma \ref{lemma: connected_components}, the match variables are independent, i.e., $\Pr(\mathcal{M}) = \prod_{\ell=1}^{|\mathcal{M}|} \Pr(M(i_\ell, j_\ell))$.
\end{proof}

Combining the two previous lemmas, we conclude that match variables $\mathcal{M}$ are independent if they do not induce a cycle in the match graph. This is detailed below.

\begin{lemma}\label{cor:match_independence}
    The random variables
\[
\mathcal{M} = \{\,M(i_1,j_1),\,M(i_2,j_2),\,\dots\},
\]
are independent if the induced match graph has no cycles.
\end{lemma}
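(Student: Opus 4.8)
The plan is to reduce this general statement to the two preceding lemmas by splitting $\mathcal{M}$ according to whether each match variable is corresponding or non-corresponding. First I would partition $\mathcal{M} = \mathcal{M}_C \sqcup \mathcal{M}_{NC}$, where $\mathcal{M}_C = \{M(i_\ell,j_\ell) \in \mathcal{M} : f(i_\ell) = j_\ell\}$ collects the corresponding variables and $\mathcal{M}_{NC} = \{M(i_\ell,j_\ell) \in \mathcal{M} : f(i_\ell) \ne j_\ell\}$ collects the non-corresponding ones. The key observation is that the match graph induced by all of $\mathcal{M}$ is exactly the graph obtained by conditioning on $\mathcal{M}_{NC}$ on top of the corresponding variables $\mathcal{M}_C$: the corresponding edges are already present in the unconditioned match graph (as noted in the proof of Lemma \ref{lemma:homologous_independent}), so only the non-corresponding variables contribute new edges. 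Hence the hypothesis that $\mathcal{M}$ induces no cycle is precisely the acyclicity hypothesis required to invoke Lemma \ref{lemma:independent-if-acyclic} with conditioning set $\mathcal{M}_C$.

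Next I would factor the joint probability as
$$\Pr(\mathcal{M}) = \Pr(\mathcal{M}_{NC} \mid \mathcal{M}_C)\,\Pr(\mathcal{M}_C).$$
Applying Lemma \ref{lemma:homologous_independent} to $\mathcal{M}_C$, whose variables are all corresponding, gives $\Pr(\mathcal{M}_C) = \prod_{M \in \mathcal{M}_C}\Pr(M)$. Applying Lemma \ref{lemma:independent-if-acyclic} with conditioning set $\mathcal{M}_C$ and non-corresponding set $\mathcal{M}_{NC}$, which is legitimate by the acyclicity identification just made, gives $\Pr(\mathcal{M}_{NC}\mid \mathcal{M}_C) = \prod_{M \in \mathcal{M}_{NC}}\Pr(M)$. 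Multiplying the two factors yields $\Pr(\mathcal{M}) = \prod_{M \in \mathcal{M}}\Pr(M)$, which is the desired factorization of the full collection.

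Finally, to upgrade this single factorization to genuine mutual independence, I would observe that every sub-collection $\mathcal{M}' \subseteq \mathcal{M}$ induces a subgraph of the acyclic match graph of $\mathcal{M}$ and is therefore itself acyclic; the same argument then applies verbatim to $\mathcal{M}'$, so every sub-collection factors as well. I expect the only genuine subtlety to be the bookkeeping around the match graph: one must verify that the corresponding edges are present unconditionally, so that peeling off $\mathcal{M}_C$ as the conditioning set does not alter which graph the cycle-freeness hypothesis refers to. Once that identification is secured, the lemma follows immediately by chaining the two prior results, with no further probabilistic computation required.
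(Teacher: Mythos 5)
Your proposal matches the paper's own proof essentially step for step: the same partition into corresponding and non-corresponding match variables, the same factorization $\Pr(\mathcal{M}) = \Pr(\mathcal{M}_{NC}\mid\mathcal{M}_C)\Pr(\mathcal{M}_C)$, and the same invocation of Lemma \ref{lemma:homologous_independent} for the corresponding part and Lemma \ref{lemma:independent-if-acyclic} for the conditional part. Your closing remarks---that subgraphs of an acyclic graph are acyclic (so the factorization extends to all sub-collections, giving genuine mutual independence) and that the corresponding edges are already present in the unconditioned match graph---are careful touches the paper leaves implicit, but they do not change the argument.
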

\begin{proof}
    We first show the result holds for simple cases:

    (1) If $f(i_\ell) \ne j_\ell$ for all $M(i_\ell, j_\ell) \in \mathcal{M}$, then this lemma reduces to showing that match variables with spurious positions that do not form a cycle are independent. This is proven in Lemma \ref{lemma:independent-if-acyclic}.

    (2) If $f(i_\ell) = j_\ell$ for all $M(i_\ell, j_\ell) \in \mathcal{M}$, then we must show that match variables of corresponding positions that do not induce a cycle are independent. This is proven in Lemma \ref{lemma:homologous_independent}.

    Assume now that $\mathcal{M}$ contains both corresponding and non-corresponding match variables. Denote the set of non-corresponding match variables in $\mathcal{M}$ by $\mathcal{A} = \{M(i_\ell, j_\ell) \in \mathcal{M} \mid f(i_\ell) \ne j_\ell \}$ and the set of corresponding match variables by $\mathcal{M}$ by $\mathcal{B} = \{M(i_\ell, j_\ell) \in \mathcal{M} \mid f(i_\ell) = j_\ell \}$. For simple indexing, let $I(\mathcal{B}) = \{\ell \mid M(i_\ell, j_\ell) \in \mathcal{B}\}$ and $I(\mathcal{A}) = \{\ell \mid M(i_\ell, j_\ell) \in \mathcal{A}\}$ be the two index sets.

    We have $\Pr(A, B) = \Pr(A \mid B) \Pr(B)$. By Lemma \ref{lemma:homologous_independent}, $\Pr(B) = \prod_{t \in I(\mathcal{B})} \Pr(M(i_t,j_t))$. By Lemma \ref{lemma:independent-if-acyclic}, $\Pr(A \mid B) = \prod_{\ell \in I(\mathcal{A})} \Pr(M(i_\ell, j_\ell))$. 
    
    Thus, $\Pr(\mathcal{M}) = \prod_{\ell \in I(\mathcal{A})} \Pr(M(i_\ell, j_\ell))\prod_{t \in I(\mathcal{B})}\Pr(M(i_t,j_t)) = \prod_{p = 1}^{|\mathcal{M}|} \Pr(M(i_p, j_p))$, concluding the proof.
\end{proof}

We now move to the second main result of this section, which is that the match graph induced by a single anchor is acyclic. From this it follows that the match variables present in an anchor are independent. Using this, we will bound the probabilities of each anchor type.

\begin{lemma}\label{cor:match_graph_single_anchor}
    The match graph induced by a single anchor, $A(i,j)$, has no cycles.
\end{lemma}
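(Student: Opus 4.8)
The plan is to exploit the fact that both families of edges in the induced match graph are partial matchings equipped with strictly increasing index maps, so that any alternating cycle would force a strictly increasing self-map to possess a nontrivial periodic point, which is impossible.

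First I would describe the edge set explicitly. By Definition \ref{def:match-graph}, the graph induced by the anchor's match variables $\mathcal{M} = \{M(i+\ell, j+\ell) : 0 \le \ell \le k-1\}$ carries two kinds of edges: the backbone corresponding edges $\{(x, f(x)) : f(x) \ne \emptyset\}$ inherited from the unconditioned graph, and the $k$ anchor edges $\{(i+\ell, j+\ell)\}$. I would then observe that each family is a partial matching: the corresponding edges because $f$ is injective on preserved positions (Lemma \ref{sup_lemma:1}), and the anchor edges because $x \mapsto x + (j-i)$ is injective. Consequently every vertex has degree at most $2$ in the union, so the graph decomposes into disjoint paths and cycles, and---since any cycle vertex must have degree exactly $2$ with one incident edge of each type---every cycle must strictly alternate between corresponding edges and anchor edges.

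Next I would establish monotonicity. Let $g(i+\ell) = j + \ell$ be the anchor map, so that $g(x) = x + (j-i)$ is strictly increasing. I would argue that $f$ restricted to non-deleted positions is also strictly increasing: the homologous path (Definition \ref{def:canonical-alignment}) is non-decreasing in both coordinates, and at every non-deleted position of $S$ it strictly advances the $y$-coordinate, so $x < x'$ with both preserved gives $f(x) < f(x')$. Hence $f^{-1}$ is well-defined and strictly increasing on the image of $f$, and $h := f^{-1}\circ g$ is strictly increasing wherever defined. The contradiction then follows: suppose a cycle existed; using alternation, leaving each $S$-vertex by its anchor edge and returning via the following corresponding edge sends $x_t \mapsto x_{t+1} = f^{-1}(g(x_t)) = h(x_t)$, and since every $S$-vertex on the cycle carries an anchor edge, all of them are anchor positions and $h$ is defined throughout. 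A cycle of length $2r$ yields $h^r(x_0) = x_0$ with $r \ge 2$ and distinct $S$-vertices. But a strictly increasing map has no periodic orbit of period $\ge 2$: if $h(x_0) > x_0$ then $h^n(x_0)$ is strictly increasing in $n$ and never returns, and symmetrically if $h(x_0) < x_0$. The only surviving case $h(x_0) = x_0$ forces $f(x_0) = g(x_0)$, meaning the anchor edge and the corresponding edge at $x_0$ are literally the same edge, not a genuine two-cycle. This contradiction shows the induced match graph is acyclic.

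I expect the main obstacle to be the careful verification of the strict monotonicity of $f$ directly from the several cases (insertion, deletion, both) in the homologous-path construction, together with the bookkeeping that deleted positions---where $f = \emptyset$, so they carry no corresponding edge---simply cannot be cycle vertices via a corresponding edge and therefore never obstruct the alternation argument. Once monotonicity and the matching structure are pinned down, the periodic-orbit contradiction is immediate.
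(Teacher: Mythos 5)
Your proof is correct and rests on the same mechanism as the paper's: the paper also discards the degree-one vertices (deleted/inserted positions), uses that both edge families are order-preserving partial matchings, and derives a contradiction from the minimal $S$-index on a putative cycle --- which is exactly your ``a strictly increasing map has no periodic orbit of period $\ge 2$'' argument in extremal form. Your write-up packages this more cleanly (explicit degree-$2$/alternation decomposition and an explicit verification that $f$ is strictly increasing on preserved positions via the construction of $P_H$), but it is essentially the same proof.
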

\begin{proof}
    First note that any position in $S$ that is deleted and any position in $S'$ that is inserted can have degree at most $1$, coming from the anchor $A(i,j)$. Any such position cannot be involved in a cycle, so we can remove all such positions from the graph. Relabel the surviving positions in $S$ and $S'$ as their new indices. The interval $(i,i+k-1)$ becomes $X_{i'}$, possibly empty, and $(j,j+k-1)$ becomes $X_{j'}$, also possibly empty. If $X_{i'}$ is empty then there cannot be any cycle that uses positions on $S$ since each node has degree at most $1$, and hence there are no cycles since the graph is bipartite; we conclude the same if $X_{j'}$ is empty. Note that the unconditioned match graph at this point appears as two sets of vertices of equal size with edges between each corresponding pair.
    
    We now assume $X_{i'}$ and $X_{j'}$ are nonempty and that there exists a cycle. Let $i' = \min H_{i'}$ and $j' = \min X_{j'}$. First, suppose that $i' < j'$. Let $x_{i+l}$ be the first point on $S$ belonging to a cycle with $l \ge 0$. Then $x_{i+l}$ has the neighbor $y_{i+l}$, so let the first edge of the cycle be $(x_{i+l}, y_{i+l})$. Then $y_{i+l}$ must have degree exactly $2$ for there to exist a cycle and its remaining edge must be induced from the anchor. Since $i' < j'$, its neighbor lies to the left of $x_{i+l}$, it is some $x_{i+a}$ where $0 \le a < l$, contradicting the minimality of $l$. Similarly, if $i' > j'$, we can apply the same argument but for the largest $l$ such that $x_{i+l}$ is in a cycle. Thus, no cycle exists.
\end{proof}

The above lemma allows us to conclude that the probability a spurious anchors is $\frac{1}{\sigma^k}$.

\begin{corollary}[Spurious anchor probability]
\label{cor:anchor-prob}
     Let $(i,j) \in |S| \times |S'|$: if $(i,j)$ is such that $\{(i+l,j+l) \mid 0 \le l \le k-1\} \cap P_H = \emptyset$, then $\Pr(A(i,j)) = \frac{1}{\sigma^k}$.
\end{corollary}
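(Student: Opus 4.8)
The plan is to reduce the statement to two results already proved in this section: the acyclicity of the single-anchor match graph and the independence-from-acyclicity lemma. The starting observation is that the anchor indicator factors as a product of match variables, $A(i,j) = \prod_{\ell=0}^{k-1} M(i+\ell,\,j+\ell)$, so it suffices to show that these $k$ match variables are jointly independent and that each has marginal probability $\frac{1}{\sigma}$.

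First I would invoke Lemma \ref{cor:match_graph_single_anchor}, which guarantees that the match graph induced by the single anchor $A(i,j)$ has no cycles. Combined with Lemma \ref{cor:match_independence}, this immediately gives that the match variables comprising the anchor are independent, so that $\Pr(A(i,j)) = \prod_{\ell=0}^{k-1} \Pr\bigl(M(i+\ell,\,j+\ell)\bigr)$.

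Next I would use the spuriousness hypothesis $\{(i+\ell,j+\ell) : 0 \le \ell \le k-1\} \cap P_H = \emptyset$ to argue that every match variable in the anchor is non-corresponding. Since $(i+\ell,j+\ell)\notin P_H$, and since $f(i+\ell)$, whenever it is not $\emptyset$, is by definition the largest $y$ with $(i+\ell,y)\in P_H$, we must have $f(i+\ell)\ne j+\ell$ for each $\ell$. Hence each $M(i+\ell,j+\ell)$ is a non-corresponding match variable, and the computation inside the proof of Lemma \ref{lemma:independent-if-acyclic}, applied with an empty conditioning set $\mathcal{M}_C$, shows via the cyclic-permutation symmetry of the alphabet that $\Pr\bigl(M(i+\ell,j+\ell)\bigr) = \frac{1}{\sigma}$. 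Substituting into the product yields $\Pr(A(i,j)) = \bigl(\tfrac{1}{\sigma}\bigr)^k = \frac{1}{\sigma^k}$.

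The only delicate point is the bridge from the geometric ``off-path'' hypothesis to the algebraic condition $f(i+\ell)\ne j+\ell$ that the earlier lemmas require: one must check that $(i+\ell,j+\ell)\notin P_H$ genuinely rules out $j+\ell$ being the image of $i+\ell$ under $f$, rather than merely excluding some other path point sharing the $x$-coordinate. Once that is settled, the remainder is a direct assembly of the acyclicity lemma, the independence lemma, and the single-variable probability computation, so I expect no further obstacles.
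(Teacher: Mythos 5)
Your proposal is correct and follows essentially the same route as the paper: acyclicity of the single-anchor match graph plus the independence lemma give the factorization $\Pr(A(i,j)) = \prod_{\ell}\Pr(M(i+\ell,j+\ell))$, and the off-path hypothesis forces each factor to be $\frac{1}{\sigma}$. The ``delicate point'' you flag resolves exactly as you suggest --- $f(i+\ell)=j+\ell$ would by definition put $(i+\ell,j+\ell)$ on $P_H$, contradicting spuriousness --- and your treatment is in fact slightly more explicit than the paper's, which simply asserts the $\frac{1}{\sigma}$ marginal.
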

\begin{proof}
    Lemma \ref{cor:match_graph_single_anchor} shows that the match graph conditioned on a single anchor does not contain a cycle, which means that for an anchor $\Pr(A(i,j)) = \prod_{t=0}^{k-1} \Pr(M(i+t, j+t))$. 

    If $(i,j)$ represents the start of a spurious anchor, then $\Pr(M(i+t, j+t)) = \frac{1}{\sigma}$ for each $0 \le t \le k-1$. Combining terms proves the lemma.
\end{proof}

We now give sufficient conditions for two anchors $A(i,j), A(h,l)$ to be independent; that is, $\Pr(A(i,j), A(h,l)) = \Pr(A(i,j))\Pr(A(h,l))$.  Intuitively, the first condition $|i - h| \ge k$ or $|j - l| \ge k$ ensures that the anchors do not overlap too much -- the anchors' coverage can overlap on $S$ or $S'$ but not on both. The second condition prevents ``twisting'': consider anchors $A(1,3)$ and $A(3,1)$ in the substitution-only mutation model as shown in Fig.~\ref{fig:match_graph_cycle}. There exists a cycle going from $x_1 \to y_3 \to x_3 \to y_1 \to x_1$ where $x_i$ represents node $i$ on the top set of vertices and $y_i$ represents node $i$ on the bottom set of vertices. The biological interpretation is that under low mutation rates, $x_1$ and $y_1$ are likely to be equal, as are $x_3$ and $y_3$. This implies that the events $x_1 = y_3$ and $x_1 = y_3$ are not independent. 

\begin{figure}[ht]
\centering
\begin{tikzpicture}[
    every node/.style={
        circle,
        draw=black,
        fill=white,
        minimum size=12pt,
        line width=0.8pt,
        inner sep=0pt
    },
    edge/.style={
        line width=1.2pt
    }
]

\def\yTop{1.6}
\def\yBottom{0}

\def\xStep{2.0}

\node (t1) at (0*\xStep,\yTop) {};
\node (t2) at (1*\xStep,\yTop) {};
\node (t3) at (2*\xStep,\yTop) {};
\node (t4) at (3*\xStep,\yTop) {};

\node (b1) at (0*\xStep,\yBottom) {};
\node (b2) at (1*\xStep,\yBottom) {};
\node (b3) at (2*\xStep,\yBottom) {};
\node (b4) at (3*\xStep,\yBottom) {};

\draw[edge,blue] (t1) -- (b3);
\draw[edge,blue] (t2) -- (b4);

\draw[edge,red] (t3) -- (b1);
\draw[edge,red] (t4) -- (b2);

\draw[edge,dashed] (t1) -- (b1);
\draw[edge,dashed] (t2) -- (b2);
\draw[edge,dashed] (t3) -- (b3);
\draw[edge,dashed] (t4) -- (b4);

\end{tikzpicture}
\caption{Induced match graph in the substitution-only regime of an initial string of length $4$ with anchors $A(1,3)$ and $A(3,1)$. These anchors violate Yu and Shaw's \cite{shaw2023proving} conditions for independence and, as can be seen, there exists a cycle in the graph.}
\label{fig:match_graph_cycle}
\end{figure}
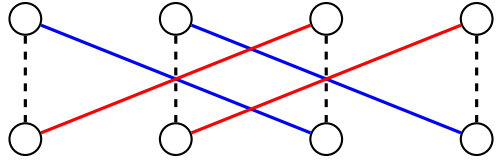



\begin{lemma}[General independence lemma]\label{indel-independence}
For \( A(i, j) \) and \( A(h, l) \), if both of the following conditions hold:
\begin{enumerate}
    \item \( |i - h| \ge k \) or \( |j - l| \ge k \), and
    \item \( [i:i+k-1] \cap f^{-1}([l:l+k-1]) = \emptyset \) or \( [h:h+k-1] \cap f^{-1}([j:j+k-1]) = \emptyset \),
\end{enumerate}
then the induced match graph on the \( M \) variables for \( A(i, j) \) and \( A(h, l) \) has no cycles.
\end{lemma}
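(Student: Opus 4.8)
The plan is to argue by contradiction: assume the induced match graph $G$ (the unconditioned ``corresponding'' edges $x_a - y_{f(a)}$ together with the $2k$ anchor edges $x_{i+t}-y_{j+t}$ and $x_{h+t}-y_{l+t}$, $0\le t\le k-1$) contains a cycle, and show that one of the two hypotheses must then fail. First I would reduce to a core in which every vertex has degree $\ge 2$. Since the corresponding edges form a matching—each $x_a$ has the single partner $y_{f(a)}$, and $f$ is injective on non-deleted positions by Lemma \ref{sup_lemma:1}—any position deleted in $S$ or inserted in $S'$ carries only anchor edges, so it can lie on a cycle only if it sits in both anchor footprints. Iteratively deleting degree-$\le 1$ vertices leaves any cycle intact, so it suffices to forbid cycles in this core. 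Moreover, Lemma \ref{cor:match_graph_single_anchor} already shows that the corresponding edges together with a single anchor are acyclic; hence any cycle in $G$ must use at least one edge from each anchor.

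The second step turns condition (1) into a degree bound. Write $I_1=[i:i{+}k{-}1]$, $I_2=[h:h{+}k{-}1]$, $J_1=[j:j{+}k{-}1]$, $J_2=[l:l{+}k{-}1]$. If $|i-h|\ge k$ then $I_1\cap I_2=\emptyset$, so every $S$-vertex carries at most one anchor edge and has degree $\le 2$; symmetrically, if $|j-l|\ge k$ every $S'$-vertex has degree $\le 2$. On the controlled coordinate the cycle then alternates, at each such vertex, between exactly one anchor edge and exactly one corresponding edge, which lets me label every vertex of that side ``type~1'' or ``type~2'' according to which anchor edge it uses. Because the cycle uses both anchors, both types occur, so traversing the cycle produces transitions between the types. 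Using the anchor-swap symmetry $(i,j)\leftrightarrow(h,l)$ (which exchanges the two disjuncts of condition (2)), I may assume the disjunct that holds is $I_1\cap f^{-1}(J_2)=\emptyset$.

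The heart of the argument is that the cycle cannot close. I would read it geometrically as a closed rectilinear loop in the alignment matrix whose corners lie on $P_H$ (corresponding edges), on the diagonal segment $D_1=\{b-a=j-i\}$ (anchor $1$), or on $D_2=\{b-a=l-h\}$ (anchor $2$); here $P_H$ is monotone in each coordinate while $D_1,D_2$ have fixed offsets. Examining a single type-$1\!\to\!$type-$2$ transition, the vertex shared between the two anchors' regions is forced into one of a few configurations, and exactly the configuration whose corresponding edge runs from $I_1$ into $J_2$ is excluded by $I_1\cap f^{-1}(J_2)=\emptyset$. To eliminate the surviving configurations—those routing a transition through the overlap of the two footprints on the \emph{uncontrolled} coordinate—I would run a minimality argument in the style of Lemma \ref{cor:match_graph_single_anchor}: take the extremal corner of the loop (say, of largest $S$-coordinate), follow its two forced neighbours inward, and use the monotonicity of $P_H$ together with the constancy of the anchor offsets to show that closing the loop forces a corner into the forbidden set $I_1\cap f^{-1}(J_2)$ or violates extremality. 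The remaining combinations of the two disjunctions follow by the same argument after the swap.

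The main obstacle is precisely this final global step. A purely local, transition-by-transition analysis does not suffice: whichever single disjunct of condition (1) one uses, the anchors may still overlap on the other coordinate, and a cycle can in principle weave back and forth between $D_1$ and $D_2$ through that overlap, so condition (2) must be deployed globally rather than at one transition. Choosing the right extremal functional, and handling the degree-$2$ deleted/inserted vertices that survive inside the overlap—where the clean ``one anchor edge plus one corresponding edge'' dichotomy breaks down—is the delicate part; the reduction, the single-anchor exclusion, and the condition-(1) degree bound are essentially bookkeeping.
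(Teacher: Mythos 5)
Your setup is sound and matches the paper's: the reduction to a degree-$\ge 2$ core, the observation via Lemma~\ref{cor:match_graph_single_anchor} that any cycle must use edges from both anchors, and the degree bound extracted from condition (1) are all steps the paper also takes (in its first case, $|i-h|\ge k$ together with $f([i{:}i{+}k{-}1])\cap[l{:}l{+}k{-}1]=\emptyset$). The problem is that everything after that --- the ``heart of the argument'' --- is a plan rather than a proof, and you say so yourself: the geometric reading of the cycle as a rectilinear loop, the classification of type-$1\to$type-$2$ transitions, and the extremal-corner/monotonicity argument are all left at the level of ``I would show,'' with the hardest part (the global weaving of the cycle through the overlap on the uncontrolled coordinate, and the deleted/inserted vertices inside it) explicitly flagged as unresolved. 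Since that final step is the entire content of the lemma, the proposal as written has a genuine gap.

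It is worth seeing how the paper closes the argument, because it avoids the global traversal you were dreading. Conditions (1) and (2) are used \emph{together}, locally, to control the closed neighborhood of $X_i=[i{:}i{+}k{-}1]$: condition (1) says no $x\in X_i$ carries an $A(h,l)$ edge, and condition (2) (plus injectivity of $f$ from Lemma~\ref{sup_lemma:1}) says no $f(x)$ for $x\in X_i$ carries one either. So every vertex in $X_i\cup f(X_i)$ is incident only to its corresponding edge and possibly one $A(i,j)$ anchor edge --- the second anchor simply cannot reach this set. Given a hypothetical cycle $C$, either it misses $X_i$ entirely (then it lives in the match graph conditioned on $A(h,l)$ alone, contradicting Lemma~\ref{cor:match_graph_single_anchor}), or $C(X_i)\ne\emptyset$; in the latter case a deleted $x\in X_i$ has degree $1$ and cannot lie on $C$, and a counting argument shows the cycle edges restricted to $C(X_i)\cup f(C(X_i))$ form a $2$-regular subgraph using only corresponding and $A(i,j)$ edges. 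A $2$-regular graph contains a cycle, again contradicting the single-anchor lemma. No monotonicity of $P_H$, no extremal corner, and no case analysis of transition configurations is needed; the whole point is that the two hypotheses let you quarantine $X_i\cup f(X_i)$ from the second anchor and then recycle the single-anchor acyclicity result. I would encourage you to redo the final step along these lines rather than pursuing the rectilinear-loop argument.
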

\begin{proof}
    There are four cases to consider. We proceed with the first case, $|i - h| \ge k$ and $f[i:i+k-1] \cap [l:l+k-1] = \emptyset$. The remaining cases follow symmetrically.

    Let $X_i = [i:i+k-1]$. Since $|i - h| \ge k$, we have that $[i:i+k-1] \cap [j:j+k-1] = \emptyset$, which implies that $f([i:i+k-1]) \cap f([h:h+k-1]) = \emptyset$ by Lemma \ref{sup_lemma:1}. Furthermore, since $f([i:i+k-1]) \cap [l:l+k-1] = \emptyset$, we have that for any $x \in X_i$, $f(x) \notin f([h:h+k-1]) \cup [l:l+k-1]$. Thus, $f(x)$ can have degree at most $2$: one edge from the unconditioned match graph and one due to conditioning on $A(i,j)$. Similarly, all edges touching $x$ must come from the original unconditioned match graph or from conditioning on one of the anchors. Since $|i - h| \ge k$, $x$ has no edge due to conditioning on $A(h, \ell)$, and so $x$ has degree at most $2$. With this in place, we continue with the proof.
    
    Assume that there exists a cycle $C$. Let $C(X_i) \subset X_i$ be the set of all points in $X_i$ in the cycle. If $C(X_i) = \emptyset$, then there still exists a cycle after removing all edges induced by $A(i,j)$. This implies that there is a cycle in the match graph after conditioning on $A(h,\ell)$, contradicting our previous lemma. Thus, $C(X_i)$ is not empty.

    We first show that for any $x \in C(X_i)$, $x$ cannot be deleted. This is clear since if $x$ were deleted then it has no edge in the unconditioned match graph and so can only have degree $1$ due to some edge coming from conditioning on $A(i,j)$; since a vertex with degree $1$ cannot be in a cycle, it follows $x \notin C(X_i)$. This is also establishes that $f(x)$ is not null.

    Now, note that each $x_p \in C(X_i)$ has exactly two neighbors in the cycle: $f(x)$ and $y_{j+p-i}$, the neighbor from conditioning on $A(i,j)$. Specifically, there are $|C(X_i)|$ neighbors due to $A(i,j)$ in $C(X_i)$, each of which must be in the cycle. Each position in $f(C(X_i))$ has a neighbor in the cycle due to $A(i,j)$ and there are exactly $|f(C(X_i))| = |C(X_i)|$ such neighbors. Since each neighbor of $f(C(X_i))$ due to $A(i,j)$ is in $X_i$ and in the cycle, this is exactly $C(X_i)$. This also implies that all neighbors of $C(X_i)$ due to edges from $A(i,j)$ are exactly the points in $f(C(X_i))$. 

    Considering the subgraph given by $C(X_i) \cup f(C(X_i))$. All vertices in this graph have degree exactly $2$. Since all $2$-regular graphs contain a cycle, it follows that this subgraph contains a cycle. However, it is a subgraph of the original match graph conditioned on $A(i,j)$, implying that it too contains a cycle, which cannot be true. This completes the proof.
\end{proof}

In Sec.~\ref{sec:bounding_gaps}, we bound the number of unrecovered points before the first anchor or after the last anchor in any optimal chain. This requires showing that homologous anchors are dense in the homologous path $P_H$. The key lemma in that section makes use of $k$-independence of special homologous anchors: those resulting from $k$ positions in which no mutation occurs. The last portion of this section proves this fact; namely, that the indicator variables representing no mutation occurring in $k$ consecutive positions are independent as long as they do not overlap in S.

\begin{lemma}
    Let $E_i$ be the event that there are no mutations at position $i$ for $i \in [p+1, p+m']$. If $M$ is a collection of these $E_i$ such that no two overlap in S, then the variables in $M$ are independent. 
    
    Formally, for any collection $M = \{E_i \}_{i \in S}$ where $S \subset [p+1,p+m']$, such that $\forall i, j \in S$ if $i \ne j$ then $|i - j| \ge k$, then $\Pr(\cap_{i \in S} E_i) = \prod_{i \in S}\Pr(E_i)$.
\end{lemma}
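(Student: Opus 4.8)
The plan is to sidestep the match-graph machinery entirely and argue directly at the level of the channel's per-position mutation decisions, where independence is built into the model. Reading $E_i$ as the event that none of the positions in the $k$-block $S[i:i+k-1]$ is mutated (consistent with the non-overlap hypothesis $|i-j|\ge k$ and the phrase ``overlap in $S$''), the key observation is that $E_i$ is determined solely by the mutation outcomes of those $k$ positions and of no others.

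First I would formalize the channel's randomness: for each generative position $x \in [p+1,p+m']$, let $\omega_x$ record the independent decision made at $x$ (no-op, substitution, deletion, insertion, or a deletion/insertion combination together with the inserted string). By the Mutation model definition, the family $\{\omega_x\}_{x \in [p+1,p+m']}$ is mutually independent. The event ``no mutation at position $x$'' is exactly $\{\omega_x = \mathrm{no\text{-}op}\}$, which is $\sigma(\omega_x)$-measurable, so $E_i = \bigcap_{t=0}^{k-1}\{\omega_{i+t} = \mathrm{no\text{-}op}\}$ is measurable with respect to $\sigma(\omega_i,\dots,\omega_{i+k-1})$.

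Next I would invoke the hypothesis: for distinct $i,j \in S$ we have $|i-j|\ge k$, so the index intervals $[i:i+k-1]$ and $[j:j+k-1]$ are disjoint. Hence the $\sigma$-algebras $\{\sigma(\omega_i,\dots,\omega_{i+k-1})\}_{i\in S}$ are generated by disjoint subfamilies of the independent family $\{\omega_x\}$, and are therefore themselves mutually independent by the standard grouping (blocking) property of independent families. Since each $E_i$ lies in its own block's $\sigma$-algebra, the events $\{E_i\}_{i\in S}$ are mutually independent, giving $\Pr\bigl(\bigcap_{i\in S} E_i\bigr) = \prod_{i\in S}\Pr(E_i)$.

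The content here is genuinely light, and the main thing to get right --- rather than a hard obstacle --- is the conceptual point that $E_i$ must be defined in terms of the channel's \emph{action} on $S$ rather than in terms of the letter-identity (match) structure of the reindexed $S'$. Had these instead been phrased as events about matches in $S'$, the indel-induced reindexing would reintroduce exactly the dependence complications that Lemmas \ref{lemma:independent-if-acyclic}--\ref{cor:match_independence} were built to handle; by working with the mutation decisions directly, the product structure of the channel makes independence immediate. Consequently the only facts actually used are per-position independence from the Mutation model and the disjointness of blocks forced by $|i-j|\ge k$.
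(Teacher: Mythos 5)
Your proposal is correct and takes essentially the same route as the paper, whose entire proof is the one-line observation that mutations occur independently at each position, so $\Pr(\cap_{i\in S}E_i)=\prod_{i\in S}\Pr(E_i)$; you simply formalize this via the grouping property of the independent per-position channel decisions. The only (harmless) divergence is interpretive: the paper reads $E_i$ as a single-position event (its proof computes $((1-\theta_i)(1-\theta_d)(1-\theta_s))^{|S|}$ and defers the $k$-block version to Corollary~\ref{cor:kblock_prob}), whereas you read it as the $k$-block event directly, but the argument is identical either way.
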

\begin{proof}
    Since mutations occur independently at each position, $\Pr(\cap_{i \in S} E_i) = \prod_{i \in S}\Pr(E_i) = ((1-\theta_i)(1-\theta_d)(1-\theta_s))^{|S|} \ge (1 - \theta_T)^{|S|}$.
\end{proof}

The following corollary is an immediate application of the previous lemma: a k-mer in the generative region of $S$ has no mutations with probability $((1-\theta_i)(1-\theta_d)(1-\theta_s))^k$. 
\begin{corollary}[No mutation in k-block probability]\label{cor:kblock_prob}
    Let $E_{i:i+k-1}$ be the event that there are no mutations in $S[i:i+k-1]$ for $i \in [p+1, p+m']$. Then $\Pr(E_{i:i+k-1}) = ((1-\theta_i)(1-\theta_d)(1-\theta_s))^k$
\end{corollary}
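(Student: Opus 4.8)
The plan is to reduce the block-level event to a product of independent per-position events. I would first observe that the event $E_{i:i+k-1}$ that no mutation occurs anywhere in the window $S[i:i+k-1]$ is exactly the intersection $\bigcap_{t=0}^{k-1} E_{i+t}$ of the single-position no-mutation events $E_{i+t}$ for $0 \le t \le k-1$. The target factorization then amounts to two ingredients: (i) computing $\Pr(E_{i+t})$ for a single position, and (ii) arguing that these $k$ events are mutually independent.

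For ingredient (i), I would appeal directly to the Remark following the mutation model: at any fixed position $p+j$ the substitution, deletion, and insertion trials are applied independently, so the probability that none of them fires is $(1-\theta_s)(1-\theta_d)(1-\theta_i)$. This is precisely $\Pr(E_{i+t})$ for each $t$, and it is the single factor that will be raised to the $k$-th power.

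For ingredient (ii), I would invoke the per-position independence built into the mutation model: the channel processes each position $p+j$ using its own independent draws, so the events $E_{i}, E_{i+1}, \dots, E_{i+k-1}$ are determined by disjoint collections of these independent draws and are therefore mutually independent. Multiplying the $k$ equal factors from (i) then yields $\Pr(E_{i:i+k-1}) = ((1-\theta_i)(1-\theta_d)(1-\theta_s))^{k}$, as claimed.

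The one subtlety, and the only place that needs care, is that the immediately preceding lemma is stated under the hypothesis $|i-j| \ge k$ for distinct indices, which the adjacent positions inside a single $k$-block plainly violate; one cannot cite that lemma verbatim as an ``immediate application.'' However, inspecting its proof shows that the $|i-j| \ge k$ condition is never actually used — the independence there follows solely from mutations being applied independently at each position. Thus the corollary really is immediate once one notes that the relevant independence is \emph{position-wise} and does not require non-overlap: the distance condition only becomes relevant later, when reasoning about overlap of the resulting anchors, not about the underlying mutation events. So the substantive step of the proof is simply flagging that the independence we need is between single-position mutation outcomes, which hold regardless of how close the positions sit.
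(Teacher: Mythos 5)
Your proof is correct and follows essentially the same route as the paper's: decompose $E_{i:i+k-1}$ into the intersection $\bigcap_{t=0}^{k-1} E_{i+t}$ of per-position no-mutation events and multiply, using the positionwise independence built into the mutation channel, with each factor equal to $(1-\theta_i)(1-\theta_d)(1-\theta_s)$. Your observation that the preceding lemma's $|i-j|\ge k$ hypothesis is never actually used in its proof (so it does not obstruct applying the factorization to adjacent positions) is accurate --- the paper cites that lemma without comment, so on this point you have been more careful than the source.
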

\begin{proof}
    This follows directly from the previous lemma: $\Pr(E_{i:i+k-1}) = \prod_{\ell = i}^{i+k-1} \Pr(E_\ell) = ((1-\theta_i)(1-\theta_d)(1-\theta_s))^k$.
\end{proof}

Applying the previous lemma to non-overlapping $k$-mers in $S$ yields a result that we sought to prove: the events that non-overlapping $k$-mer regions in $S$ have no mutations are independent.

\begin{corollary}[Independence of non-overlapping k-blocks]\label{cor:hom_indepdence}
    Let $E_{i:i+k-1}$ be the event that there are no mutations in $S[i:i+k-1]$ for $i \in [p+1, p+m']$. If $M$ is a collection of these $E_{i:i+k-1}$ such that no two overlap in S, then the variables in $M$ are independent. 
\end{corollary}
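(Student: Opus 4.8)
The plan is to reduce each $k$-block event to the conjunction of the single-position no-mutation events it comprises, and then invoke the independence of the individual position events established in the preceding lemma. All the probabilistic content is already carried by the position-level independence of the mutation model, so the argument is essentially bookkeeping.

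First, I would write each block event as an intersection over the positions it spans: $E_{i:i+k-1} = \bigcap_{\ell=i}^{i+k-1} E_\ell$, where $E_\ell$ is the single-position no-mutation event. This is immediate from the definitions, since ``no mutation in $S[i:i+k-1]$'' means precisely ``no mutation at each of the positions $i,\dots,i+k-1$.'' Next I would use the non-overlap hypothesis: writing $M = \{E_{i_1:i_1+k-1},\dots,E_{i_r:i_r+k-1}\}$ and $B_a = \{i_a,\dots,i_a+k-1\}$, non-overlap means $|i_a - i_b| \ge k$ for $a \ne b$, so the index sets $B_a$ are pairwise disjoint. Consequently the full intersection satisfies $\bigcap_{a=1}^r E_{i_a:i_a+k-1} = \bigcap_{\ell \in \bigcup_a B_a} E_\ell$, an intersection over $rk$ distinct positions of $S$.

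I would then apply the previous lemma to this collection of single-position events, which are mutually independent because mutations occur independently at each position in the mutation model. This yields $\Pr\bigl(\bigcap_{a} E_{i_a:i_a+k-1}\bigr) = \prod_{\ell \in \bigcup_a B_a} \Pr(E_\ell)$. Because the $B_a$ partition $\bigcup_a B_a$, I can regroup the product block-by-block, and Corollary~\ref{cor:kblock_prob} identifies each inner product $\prod_{\ell \in B_a}\Pr(E_\ell)$ with $\Pr(E_{i_a:i_a+k-1})$; hence the joint probability factors as $\prod_{a=1}^r \Pr(E_{i_a:i_a+k-1})$.

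Finally, to upgrade this to genuine mutual (rather than merely ``the full joint factors'') independence of the block events, I would observe that any sub-collection of $M$ is itself a set of pairwise non-overlapping $k$-blocks, so the identical argument gives the factorization for every subset of $M$ — which is exactly the definition of mutual independence for these indicator events. The only thing requiring any care is this last point and the validity of the regrouping, both of which are immediate once the block sets are seen to be disjoint; there is no analytic difficulty, so I expect no genuine obstacle beyond this bookkeeping.
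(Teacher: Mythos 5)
Your proposal is correct and follows essentially the same route as the paper: both decompose each $k$-block event into the single-position no-mutation events it comprises, use the disjointness of the blocks' position sets, and invoke the position-level independence of the mutation model to factor the joint probability. Your version is slightly more careful in spelling out the regrouping and the subset argument for mutual independence, but there is no substantive difference.
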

\begin{proof}
    Note that $\Pr(E_{i:i+k-1}) = \Pr(E_i,\dots, E_{i+k-1})$. If $E_{i:i+k-1}$ and $E_{j:j+k-1}$ do not share any $E_\ell$, they are independent.
    
    For any $i,j \in S$ such that $i \ne j$, then $|i - j| \ge k$. This implies that $E_{i:i+k-1}$ and $E_{j:j+k-1}$ do not overlap: they contain distinct random variables. Applying the previous lemma gives the result.
\end{proof}

\section{Anchor-count technical lemmas}
\begin{lemma}\label{convergence_bounds}
    For $t_0 = \frac{1}{2}\ln(\frac{9}{1 + 8\gamma})$, $e^{t_0}\rho_i' < 1$ and $(1 - \theta_i) + \theta_i(\frac{(1 - \rho_i')e^t}{1 - e^t\rho_i'}) \le e$ for all $0 < \rho_i' < \gamma$.
\end{lemma}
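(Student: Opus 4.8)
The plan is to verify the two inequalities separately, using the closed form $e^{t_0} = \bigl(\tfrac{9}{1+8\gamma}\bigr)^{1/2} = \tfrac{3}{\sqrt{1+8\gamma}}$ obtained by substituting $t_0 = \tfrac12\ln\bigl(\tfrac{9}{1+8\gamma}\bigr)$. The useful structural observation is that both claims collapse to essentially the same elementary quadratic inequality in $\gamma$, which holds exactly because $0 < \gamma < 1$. (I read the exponent in the second expression as $t=t_0$; the case $0<t\le t_0$ then follows since the moment generating function is nondecreasing in $t$.)

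For the first inequality I would bound $\rho_i' < \gamma$ and reduce $e^{t_0}\rho_i' < 1$ to $e^{t_0}\gamma = \tfrac{3\gamma}{\sqrt{1+8\gamma}} \le 1$. Since both sides are positive, squaring gives the equivalent $9\gamma^2 \le 1 + 8\gamma$, i.e. $(9\gamma+1)(\gamma-1) \le 0$, which holds for $0<\gamma<1$ because the first factor is positive. Hence $e^{t_0}\rho_i' < e^{t_0}\gamma < 1$; as a byproduct the denominator $1 - e^{t_0}\rho_i'$ in the second expression is strictly positive, so the geometric moment generating function is finite.

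For the second inequality I would first recognize the bracketed term $\tfrac{(1-\rho_i')e^{t_0}}{1-\rho_i' e^{t_0}}$ as the moment generating function at $t_0$ of a $\mathrm{Geom}(1-\rho_i')$ insertion length, so that the whole expression is $M(t_0) = (1-\theta_i) + \theta_i\, g(\rho_i')$ with $g(\rho)=\tfrac{(1-\rho)e^{t_0}}{1-\rho e^{t_0}}$. A direct differentiation yields $g'(\rho) = \tfrac{e^{t_0}(e^{t_0}-1)}{(1-\rho e^{t_0})^2}\ge 0$, where $e^{t_0}>1$ holds precisely because $\gamma<1$ forces $\sqrt{1+8\gamma}<3$. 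Thus $g$ is nondecreasing on $(0,\gamma)$ and $\sup_{0<\rho<\gamma} g(\rho) = g(\gamma) = \tfrac{3(1-\gamma)}{\sqrt{1+8\gamma}-3\gamma}$. The bound $g(\gamma)\le 3$ then reduces, after clearing the positive denominator and squaring, to $1+2\gamma \le \sqrt{1+8\gamma}$, i.e. $4\gamma(\gamma-1)\le 0$ --- again true for $0<\gamma<1$. Therefore $M(t_0) \le (1-\theta_i)+3\theta_i = 1+2\theta_i$, and since $\theta_i \le \theta_T < 0.159$ we conclude $M(t_0) < 1.318 < e$.

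The genuinely delicate step is this monotonicity reduction in the second claim: a priori $M(t_0)$ depends on the two free parameters $\rho_i'$ and $\theta_i$, and bounding it head-on is awkward. The reduction rests on two observations that I would make explicit --- that $M(t_0)$ is affine and increasing in $\theta_i$ (since $g\ge e^{t_0}>1$), so it suffices to bound $g$ and then invoke $\theta_i<0.159$; and that $g$ is monotone in $\rho_i'$ with supremum at the endpoint $\rho_i'=\gamma$, where the problem degenerates to the single inequality $4\gamma(\gamma-1)\le 0$. I would emphasize that the hypothesis $\gamma<1$ is exactly what simultaneously guarantees $e^{t_0}>1$ (hence the monotonicity of $g$) and $e^{t_0}\gamma<1$ (hence finiteness), tying the two parts of the lemma together.
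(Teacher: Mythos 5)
Your proof is correct, but it takes a genuinely different route from the paper's. The paper reverse-engineers the constant $e$: it sets $A = \frac{e-1}{\theta_i}+1$, notes $9 < A$ (using $\theta_i < 0.206$), and uses monotonicity of $x \mapsto \frac{x}{1+\gamma(x-1)}$ together with $e^{t_0} \le \frac{9}{1+8\gamma}$ to get $e^{t_0} \le \frac{A}{1+\rho_i'(A-1)}$; both claims of the lemma then fall out of algebraic rearrangement of that single inequality. You instead work with the exact value $e^{t_0} = 3/\sqrt{1+8\gamma}$, reduce the first claim to the quadratic $(9\gamma+1)(\gamma-1)\le 0$, and for the second claim separate the two parameters: monotonicity of $g(\rho)=\frac{(1-\rho)e^{t_0}}{1-\rho e^{t_0}}$ in $\rho$ (valid since $e^{t_0}>1$ for $\gamma<1$) pushes the problem to the endpoint $\rho=\gamma$, where $g(\gamma)\le 3$ reduces to $4\gamma(\gamma-1)\le 0$. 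Your argument buys a sharper conclusion, $M(t_0) \le 1+2\theta_i < 1.32$ rather than $\le e$, and works for any $\theta_i \le \frac{e-1}{2}$ rather than requiring $\theta_i < \frac{e-1}{8}$; the paper's argument avoids any calculus and produces exactly the constant $e$ it needs for the Chernoff bound in the bounded-expansion lemma. Your reading of the statement's $e^t$ as $e^{t_0}$ is the intended one (it is how the lemma is invoked downstream), and your observation that $\gamma<1$ simultaneously delivers $e^{t_0}>1$ and $e^{t_0}\gamma<1$ is a nice structural point the paper leaves implicit.
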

\begin{proof}
    From the choice of $t_0$, it follows that $e^{t_0} \le \frac{9}{1 + 8\gamma}$. Note that $f(x) = \frac{x}{1 + \gamma (x-1)}$ is increasing for $x > 0$. We have $9 < \frac{e - 1}{0.206} + 1 \le \frac{e - 1}{\theta_i} + 1$ because $\theta_i \le \theta_T < 0.206$. Letting  $A = \frac{e - 1}{\theta_i} + 1$, the previous two facts give that $e^{t_0} \le \frac{A}{1 + \gamma(A - 1)}$, and since $\rho_i' \le \gamma$, we have additionally that $e^{t_0}\le \frac{A}{1 + \gamma(A - 1)} \le \frac{A}{1 + \rho_i'(A - 1)}$.

    Multiplying both sides of this inequality by $\rho_i'$ gives $e^{t_0}\rho_i' \le \frac{A\rho_i'}{\rho_i'A + (1 - \rho_i')} < 1$ since $\rho_i' < \gamma < 1$. This final inequality shows why it is necessary to bound $\rho_i'$ away from 1.
    
    The second inequality in the lemma follows by rearrangement and inserting the expression represented by $A$:
    
    Expanding the inequality gives $(1 - \rho_i')e^{t_0} + e^{t_0}\rho_i'A \le A$, rearranging we get $(1 - \rho_i')e^{t_0} \le A(1 - \rho_i'e^{t_0})$. Simplifying yields $\frac{(1 - \rho_i')e^{t_0}}{1 - \rho_i'e^{t_0}} \le A$. Plugging in $A = \frac{e - 1}{\theta_i} + 1$ and multiplying both sides by $\theta_i$ gives $\theta_i(\frac{(1 - \rho_i')e^{t_0}}{1 - \rho_i'e^{t_0}}) \le e - 1 + \theta_i$, from which we immediately get $(1 - \theta_i) + \theta_i(\frac{(1 - \rho_i')e^t}{1 - e^t\rho_i'}) \le e$. This proves the second claim.
\end{proof}

\begin{lemma}[\textbf{Bounded Expansion ($\bm{E}$)}]\label{expansion-bound}
    With probability $\ge 1 - 1/n$, no $k$-mer in $S[p+1:p+m']$ has more than $\frac{1}{t_0}(\frac{2}{\beta} + 1)k$ inserted base pairs, for $t_0 = \frac{1}{2}\ln(\frac{9}{1 + 8\gamma})$.
\end{lemma}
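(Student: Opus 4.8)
The plan is to bound the total number of inserted base pairs falling inside a single fixed $k$-mer with a Chernoff (moment-generating-function) argument, show this probability decays like $n^{-2C}$, and then finish with a union bound over the at most $m' \le n$ many $k$-mer windows.

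First I would fix a starting position $i$ and write the number of letters inserted into the window $[i:i+k-1]$ as $X = \sum_{j=0}^{k-1} I_{i+j}$, where $I_{i+j}$ counts the letters inserted at position $i+j$. By the mutation model these are independent across $j$, and each $I_{i+j}$ equals $0$ with probability $1-\theta_i$ and otherwise equals a $\mathrm{Geom}(1-\rho_i')$ variable of support $\{1,2,\dots\}$. Hence
\[
\mathbb{E}\bigl[e^{t I_{i+j}}\bigr] = (1-\theta_i) + \theta_i\Bigl(\tfrac{(1-\rho_i')e^{t}}{1 - e^{t}\rho_i'}\Bigr),
\]
which is exactly the quantity controlled by Lemma~\ref{convergence_bounds}. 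Evaluating at $t = t_0 = \tfrac12\ln\!\bigl(\tfrac{9}{1+8\gamma}\bigr)$, the bound $e^{t_0}\rho_i' < 1$ from that lemma guarantees the geometric MGF is finite, and the same lemma yields $\mathbb{E}[e^{t_0 I_{i+j}}] \le e$. By independence across the $k$ positions, $\mathbb{E}[e^{t_0 X}] \le e^{k}$.

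Then a single Markov/Chernoff step handles the tail. With the threshold $a = \tfrac{1}{t_0}\bigl(\tfrac{2}{\beta}+1\bigr)k$, so that $t_0 a = \bigl(\tfrac{2}{\beta}+1\bigr)k$, I get
\[
\Pr(X \ge a) \le \frac{\mathbb{E}[e^{t_0 X}]}{e^{t_0 a}} \le \frac{e^{k}}{e^{(\frac{2}{\beta}+1)k}} = e^{-2k/\beta}.
\]
Using $k = C\log_\sigma n$ and $\beta = \log_\sigma e$, one has $2k/\beta = 2C\ln n$, so the right-hand side equals $n^{-2C}$, which is at most $1/n^2$ since $C > 3$. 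Finally I would union bound over the at most $m' - k + 1 \le m' \le n$ starting positions of $k$-mers in $S[p+1:p+m']$, giving total failure probability $\le n \cdot n^{-2C} = n^{1-2C} \le 1/n$, and the claim follows.

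The analytical heart of the argument, the inequality $\mathbb{E}[e^{t_0 I_{i+j}}] \le e$, is already discharged by Lemma~\ref{convergence_bounds}, so the remaining work is essentially bookkeeping: identifying $X$ as a sum of $k$ i.i.d. insertion counts, and tracking the constants $t_0$, $\beta$, and $C$ so that they cancel to land exactly at $n^{-2C}$. The one genuinely subtle point to flag is the finiteness of the geometric moment generating function, which requires $e^{t_0}\rho_i' < 1$; this is precisely why the model bounds the insertion-continuation probability $\rho_i'$ away from $1$, and it is the step that would break if insertions could be arbitrarily long in expectation.
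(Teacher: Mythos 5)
Your proposal is correct and follows essentially the same route as the paper's proof: the same decomposition of the window's insertion count into $k$ independent mixed Bernoulli--geometric variables, the same Chernoff step at $t_0$ using the MGF bound $\mathbb{E}[e^{t_0 I}] \le e$ from Lemma~\ref{convergence_bounds}, and the same union bound over at most $n$ windows. The constant tracking ($e^{-2k/\beta} = n^{-2C} \le n^{-2}$) matches the paper's calculation, so there is nothing to add.
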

\begin{proof}
    Denote the random variable representing the total insertion length at the $p+j$-th coordinate as $I_j$ where
\[
I_j = 
\begin{cases}
0, & \text{with probability }1 - \theta_i,\\
\mathrm{Geom}(1 - \rho_i'), & \text{with probability }\theta_i,
\end{cases}
\]

In particular, for $\ell > 0$, $\text{Pr}(I_j = \ell) = \theta_i (1-\rho_i') (\rho_i')^{\ell-1}$. Define $Z = \sum_{j=1}^k I_j$, which represents the total insertion length, the expansion, of the first $k$-block. A simple Chernoff bound shows that for any $t > 0$, $$\text{Pr}(Z \ge c) \le \frac{\mathbb{E}[e^{tZ}]}{e^{tc}} = \frac{\prod_{j=1}^k \mathbb{E}[e^{tI_j}]}{e^{tc}} = \frac{\mathbb{E}[e^{tI_1}]^k}{e^{tc}}.$$

Where the first equality follows since the $\{I_j\}_{j=1}^k$ are independent and the second from them being identically distributed. Choosing $t = t_0 = \frac{1}{2}\ln(\frac{9}{1 + 8\gamma})$, as in the previous lemma, we can calculate the moment generating function of $I_1$ can be directly:

$M_{I_1}(t_0) = \mathbb{E}[e^{t_0I_1}] = (1-\theta_i) + \sum_{j=1}^\infty \theta_i(1-\rho_i')(\rho_i')^{j-1}e^{t_0j} = (1-\theta_i) + \theta_i(1-\rho_i')e^{t_0}\sum_{j'=0}^\infty (\rho_i' e^{t_0})^{j'}$. The last term is a geometric series which converges since $\rho_i' e^t < 1$ by Lemma \ref{convergence_bounds}. Thus, $M_{I_1}(t) = 1 - \theta_i + \theta_i\frac{(1-\rho_i')e^{t_0}}{1-\rho_i'e^{t_0}}$, which the previous lemma (Lemma \ref{convergence_bounds}) shows is at most $e$.

Thus, $\text{Pr}(Z \ge c) \le e^{k - t_0c}$ for any $c \in \mathbb{R}$. Choosing $c = \frac{1}{t_0}(\frac{2}{\beta} + 1)k$ gives $\text{Pr}(Z_1 \ge \frac{1}{t_0}(\frac{2}{\beta} + 1)k) \le e^{k-(\frac{2}{\beta} + 1)k} = e^{-\frac{2}{\beta}k} = e^{-2C\ln(n)} \le e^{-2\text{ln}(n)} = \frac{1}{n^2}$ since $C > 1$. 

Define $Z_i$ to be the random variable denoting the expansion of the $S[p+i:p+i+k-1]$, the $i$-th block in $S[p+1:p+m']$, formally, $Z_i$ is the sum of the insertion lengths at each position in the $k$-mer $S[p+i:p+i+k-1]$. Note that each $Z_i$ has the same distribution as $Z$.

A simple union bound shows that $\text{Pr}(\exists j:Z_j \ge \frac{1}{t_0}(\frac{2}{\beta} + 1)k) \le (m'-k+1)\frac{1}{n^2} \le n\frac{1}{n^2} = \frac{1}{n}$, and the result follows.
\end{proof}

\begin{lemma}[\textbf{Bounded Contraction ($\bm{C}$)}]\label{contraction-bound}
    With probability $\ge 1 - 1/n$, no $\ell$-block in $S[p+1:p+m']$ contracts to size $\le \frac{(1 - \theta_d)\ell}{2}$, where $\ell = \frac{21k}{\beta}$.
\end{lemma}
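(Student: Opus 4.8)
The plan is to mirror the structure of the bounded-expansion proof (Lemma~\ref{expansion-bound}): establish a per-block tail bound of order $n^{-2}$, and then union bound over the at most $n$ candidate blocks. The only new ingredient is that contraction is driven by deletions rather than insertions. This actually makes the analysis cleaner, since each deletion removes exactly one position (unlike insertions, whose geometric lengths forced the delicate moment-generating-function argument of Lemma~\ref{convergence_bounds}), so the relevant quantity is a sum of i.i.d.\ bounded Bernoulli variables.

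First I would model the contracted size. Fix an $\ell$-block $S[p+i:p+i+\ell-1]$ with $\ell = \frac{21k}{\beta}$, and for each position let $B_t = \mathbf{1}\{S[p+i+t-1]\text{ is not deleted}\}$. By the mutation model, deletions occur independently at each position with probability $\theta_d$, so the $B_t$ are i.i.d.\ $\mathrm{Bernoulli}(1-\theta_d)$. The number of surviving positions is $W = \sum_{t=1}^{\ell} B_t$, with $\mu := \mathbb{E}[W] = (1-\theta_d)\ell$. The region of $S'$ to which this block maps consists of its surviving positions together with any inserted characters, so its size is at least $W$; hence it suffices to bound the event that $W$ alone falls to $\le \frac{(1-\theta_d)\ell}{2} = \frac{\mu}{2}$.

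Next I would apply a lower-tail Chernoff bound to the i.i.d.\ sum $W$. Using the multiplicative form $\Pr(W \le (1-\delta)\mu) \le \exp(-\delta^2\mu/2)$ with $\delta = \tfrac12$ gives
\[
\Pr\!\Bigl(W \le \tfrac{(1-\theta_d)\ell}{2}\Bigr) \;\le\; \exp\!\Bigl(-\tfrac{\mu}{8}\Bigr) \;=\; \exp\!\Bigl(-\tfrac{(1-\theta_d)\,\ell}{8}\Bigr).
\]
Substituting $\ell = \frac{21k}{\beta} = 21C\ln n$ (from $k = C\beta\ln n$) turns the exponent into $\frac{21\,C\,(1-\theta_d)}{8}\ln n$. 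Since $C > \frac{3}{1-2\alpha} > 3$ and $1-\theta_d \ge 1-\theta_T > 1 - 0.159 = 0.841$, the coefficient satisfies $\frac{21C(1-\theta_d)}{8} > \frac{21\cdot 3\cdot 0.841}{8} > 2$, so the per-block failure probability is at most $e^{-2\ln n} = n^{-2}$. A union bound over the at most $m'-\ell+1 \le n$ starting positions then yields total failure probability $\le n\cdot n^{-2} = \frac{1}{n}$, giving the claimed $\ge 1 - \frac{1}{n}$ guarantee.

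I expect no serious obstacle here. The one point requiring care is the reduction from ``contracted size in $S'$'' to ``number of surviving positions'': I must argue that insertions only enlarge the image of a block, so that $W$ is a legitimate lower bound on the contracted size and controlling $W$ from below suffices. Everything else is routine parameter chasing, and the slack in $\frac{21C(1-\theta_d)}{8} > 2$ is comfortable given $C > 3$ and $\theta_T < 0.159$, so the constant $\frac{21}{\beta}$ in the block length is exactly what is needed to push the per-block bound below $n^{-2}$.
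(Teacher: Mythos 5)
Your proposal is correct and follows essentially the same route as the paper: Bernoulli survival indicators for each position, a multiplicative lower-tail Chernoff bound with $\delta=\tfrac12$ giving a per-block failure probability of $\exp(-\tfrac{(1-\theta_d)\ell}{8}) \le n^{-2}$, and a union bound over the at most $n$ blocks. Your extra remark that insertions only enlarge the image (so the surviving-position count lower-bounds the contracted size) is a small clarification the paper leaves implicit, but the argument is the same.
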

\begin{proof}
    The proof follows similarly to the previous lemma. Define 
\[
X_j =
\begin{cases}
0, & \text{if the $j$‑th index of $S[p+1:p+m']$ is deleted (with probability $\theta_d$),}\\
1, & \text{otherwise.}
\end{cases}
\]

Note if $X_j = 1$, then the $j$-th index of $S[p+1:p+m']$ survives. Let $X = \sum_{j=1}^\ell X_j$ be the total number of surviving indices in the $\ell$-block and set $q_d = 1 - \theta_d$ to be the survival rate of an index. A classic Chernoff bound on the sum of i.i.d. Bernoulli random variables gives for any $0 < \delta < 1$:
$\text{Pr}(X \le (1 - \delta)q_d\ell) \le \text{exp}(-\frac{\delta^2 q_d\ell}{2})$. Since $\ell = ck = c\beta C\text{ln}(n)$ where $c = \frac{21}{\beta}$ gives $\frac{\delta^2}{2}q_dc_0C\beta \ge \frac{1}{8}(.794)21 > 2$, when $\delta = 1/2$ and $\theta_d \le\theta_T \le 0.206$. Thus, $\text{Pr}(X \le \frac{1}{2}q_d\ell) \le \text{exp}(-2\text{ln}(n)) = \frac{1}{n^2}$.

A union bound over all $\ell$-blocks in $S[p+1:p+m']$ gives: $$\text{Pr}(\exists \ell \text{-block shrunk to less than } \frac{(1 - \theta_d)\ell}{2}) \le n /n^2 = \frac{1}{n}.$$
\end{proof}

\begin{proof}[Proof of Lemma \ref{var_bound}]
    Let $S_p = \{(i, j) \in S \times S' \mid [(i,j), \dots, (i+k-1,j+k-1)] \,\ \cap P_H = \emptyset \}$ be the set of all positions where an anchor at that position does not intersect the homologous path.
    
    Define $B_k(i, j) = \{(h, l) \in S_p \mid |h - i| \le k \,\ \cap \,\ |l - j| \le k \}$, and

$$
\begin{aligned}
P_k(i,j)
=\;\bigl\{(h,l)\in S_p :\;&
   f^{-1}\bigl([l:l+k-1]\bigr)\,\cap\,[i:i+k-1]\neq\emptyset,\\
&\quad
   f^{-1}\bigl([j:j+k-1]\bigr)\,\cap\,[h:h+k-1]\neq\emptyset
\bigr\}.
\end{aligned}
$$
    
    By Lemma \ref{indel-independence}, for any $(h, l) \notin B_k(i,j)\,\ \cup \,\ P_k(i,j)$, $A(i,j)$ and $A(h,l)$ are independent. Then, $N_S = \sum_{(i, j) \in S_p} A(i, j)$. We calculate the variance as follows:

\begin{align*}
N_S^2
&= \sum_{(h,l)\in S_p}\sum_{(i,j)\in S_p}A(i,j)A(h,l) \\ 
&= 
\underbrace{%
  \sum_{(i,j)\in S_p}\sum_{(h,l)\in S_p\setminus\bigl(B_k(i,j)\cup P_k(i,j)\bigr)}
  A(i,j)A(h,l)
}_{S_1}
+
\underbrace{%
  \sum_{(i,j)\in S_p}\sum_{(h,l)\in S_p\cap B_k(i,j)}
  A(i,j)A(h,l)
}_{S_2} \\[-0.5ex]
&\quad+
\underbrace{%
  \sum_{(i,j)\in S_p}\sum_{(h,l)\in\bigl(S_p\cap P_k(i,j)\bigr)\setminus B_k(i,j)}
  A(i,j)A(h,l)
}_{S_3}.
\end{align*}

Dealing first with $S_1$: by the independence lemma (Lemma \ref{indel-independence}), $A(h,l), A(i,j)$ are independent, so: 

\[
\begin{aligned}
\mathbb{E}(S_1)
&= 
\sum_{(i,j)\in S_p}\;\sum_{(h,l)\in S_p\setminus (B_k(i,j)\cup P_k(i,j))}
\mathbb{E}\bigl(A(i,j)\bigr)\,\mathbb{E}\bigl(A(h,l)\bigr) \\[0.75ex]
&\le 
\sum_{(h,l)\in S_p}
\;\sum_{(i,j)\in S_p}
\mathbb{E}\bigl(A(i,j)\bigr)\,\mathbb{E}\bigl(A(h,l)\bigr) = \mathbb{E}(N_S)^2.
\end{aligned}
\]

Note $\mathbb{E}(A(h,l)A(i,j)) \le \mathbb{E}(A(i,j))$ since the anchor random variables take values in $\{0,1\}$. Using that $|B_k(i,j)| \le 4k^2$, $\mathrm{Pr}(A(i,j) = 1) = \frac{1}{\sigma^k}$ for $(i,j) \in S_p$ from Corollary \ref{cor:anchor-prob}, and the naive bound $|S_p| \le mn$, we get:
$$\mathbb{E}(S_2) \le \sum_{(i,j)\in S_p}\sum_{(h,l)\in S_p\cap B_k(i,j)}
  \mathbb{E}(A(i,j)) \le 4k^2\frac{mn}{\sigma^k}.$$

Lastly, we handle the $S_3$ term. Working under $EC$, the region $S[i:i+k-1]$ can expand to have at most $(\frac{1}{t_0}(\frac{2}{\beta} + 1) + 1)k + 2k$ corresponding positions $l$ on S'. Note the additional $2k$ comes from $k$-mers that start before and after the corresponding region but still intersect it. The same argument shows that there are at most $(\frac{1}{t_0}(\frac{2}{\beta} + 1) + 1)k + 2k$ positions $j$ that correspond to $S[h:h+k-1]$. Thus, $|P_k(i,j)| \le ((\frac{2}{\beta} + 1) + 1)k + 2k)^2 = (\frac{1}{t_0}(\frac{2}{\beta} + 1) + 3)^2k^2 \le \frac{1}{2}T_0 k^2$.


This yields,

$$\mathbb{E}(N_S^2) \le \mathbb{E}(N_S)^2 + T_0k^2\frac{mn}{\sigma^k}.$$

From which it immediately follows that $\mathrm{var}(N_S) \le T_0k^2\frac{mn}{\sigma^k}$.
\end{proof}

\begin{proof}[Proof of Lemma \ref{lemma: spurious_anchor_count}]
    Call the event $X = \{N_S \ge n^{2 - C} + \sqrt{T_0\,}\,C\log(n)\,n^{\frac{3-C}{2}}\}$, i.e., the event that there are more spurious anchors than the amount given by the expression. We are looking to bound $\Pr\bigl(X \mid EC\bigr)$

    To do this, let us first show that $\frac{mn}{\sigma^k} \le n^{2-C}$. Recall that $k = C\log(n)$. Under $EC$, $m \le cm'$ for the expansion constant $c > 0$. Since $m' \ll n$, it follows that $m \ll n$. Thus, $\log(m) \le \log(n)$. From this inequality, we get immediately that $(C-1)\log(n) + \log(m) \le C\log(n) = k$. Subtracting $(C-1)\log(n)$ and adding $\log n$ to both sides yields $\log(mn) - k \le (2-C)\log(n)$. Finally, $\log(\frac{mn}{\sigma^k}) \le \log(n^{2-C})$, which gives the inequality. 

    Now, by Chebyshev, $\text{Pr}(N_S \ge \mathbb{E}(N_S \mid EC) + \sqrt{n\mathrm{var}(N_S \mid EC)} \mid EC) \le \frac{1}{n}$. We can upper bound $\mathrm{Pr}(X \mid EC)$ by using the variance bound of $N_S$ given by Lemma \ref{var_bound} and the previous inequality. This yields: $\mathrm{var}(N_S \mid EC) \le T_0k^2\frac{mn}{\sigma^k}$ , so $\frac{1}{n} \ge \mathrm{Pr}(N_S \ge \mathbb{E}(N_S \mid EC) + \sqrt{n\mathrm{var}(N_S \mid EC)} \mid EC) \ge \mathrm{Pr}(N_S \ge \mathbb{E}(N_S \mid EC) + \sqrt{T_0}C\log(n)n^{\frac{3-C}{2}} \mid EC)$. Finally, bounding $\mathbb{E}(N_S \mid EC) \le \frac{mn}{\sigma^k}$ and using our previously shown inequality, we obtain:
    
    $$\Pr\Bigl(N_S \ge n^{2 - C} + \sqrt{T_0\,}\,C\log(n)\,n^{\frac{3-C}{2}} \mid EC\Bigr) \le \frac{1}{n}.$$ 
\end{proof}

\section{Runtime theorem proof}
\label{sec:runtime-thm-proof}

The runtime of seed chain extend $T_{SCE}$ is the runtime of chaining $T_{Chain}$ and extension $T_{Ext}$. Since we are using linear gap costs, the runtime of chaining will be $O(N\log N)$. To show that $\mathbb{E}(T_{SCE}) = O(mn^{C\alpha}\log n)$ for any optimal chain, we will first show that $\mathbb{E}(T_{Chain}) = O(mn^{C\alpha}\log n)$.

\begin{lemma}\label{lemma:chaining_runtime}
    The expected runtime of chaining $\mathbb{E}(T_{Chain})$ for any optimal chain is $O(mn^{C\alpha}\log n)$.
\end{lemma}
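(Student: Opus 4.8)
The plan is to use the stated fact that linear-gap-cost chaining runs in $O(N\log N)$ time, where $N$ is the total number of anchors between $S$ and $S'$, and then to bound $\mathbb{E}[N\log N]$. The key simplification is that the logarithmic factor can be made deterministic: since every anchor is a pair $(i,j)$ with $i\in\{1,\dots,n\}$ and $j\in\{1,\dots,m\}$, we always have $N\le nm$ and hence $\log N\le 2\log n$. Thus $T_{Chain}=O(N\log N)=O(N\log n)$ deterministically, and it suffices to show $\mathbb{E}[N]=O(mn^{C\alpha})$.

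To bound $\mathbb{E}[N]$, I would split the anchors by type, writing $N=N_S+N_H+N_C$ for spurious, homologous, and clipping anchors, and work inside the high-probability space $EC$ of Lemma \ref{lemma:EC_space}. For the spurious anchors, Corollary \ref{cor:anchor-prob} gives $\Pr(A(i,j))=\sigma^{-k}$ for each spurious position, and the naive bound of at most $mn$ spurious positions yields $\mathbb{E}[N_S\mid EC]\le mn/\sigma^k=mn^{1-C}$, which is $\le m$ since $C>3$. For homologous and clipping anchors, every such anchor's diagonal must intersect the homologous path $P_H$, so I would bound their count combinatorially rather than probabilistically. Under $EC$, the path has $O(m)$ points (its length is at most $m'+m=O(m)$, since $m=\Theta(m')$), and each point $(x,y)\in P_H$ lies on the diagonal of at most $k$ candidate anchor start positions $(x-t,y-t)$, $0\le t\le k-1$. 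Hence at most $O(km)$ positions have a diagonal meeting $P_H$, and bounding each anchor probability by $1$ gives $\mathbb{E}[N_H+N_C\mid EC]\le O(km)=O(m\log n)$. Combining, $\mathbb{E}[N\mid EC]=O(mn^{1-C})+O(m\log n)=O(m\log n)=O(mn^{C\alpha})$ for large $n$, since $C\alpha>0$ makes $n^{C\alpha}$ dominate $\log n$.

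Finally I would remove the conditioning on $EC$. Writing $\mathbb{E}[N\log N]=\mathbb{E}[N\log N\mid EC]\Pr(EC)+\mathbb{E}[N\log N\mid EC^c]\Pr(EC^c)$, the first term is at most $2\log n\cdot\mathbb{E}[N\mid EC]=O(mn^{C\alpha}\log n)$ by the previous paragraph, while the second is controlled by the trivial bound $N\log N\le 2nm\log n$ together with $\Pr(EC^c)\le 2/n$ from Lemma \ref{lemma:EC_space}, contributing only $O(m\log n)$. Both terms are $O(mn^{C\alpha}\log n)$, which proves the lemma.

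The main obstacle I anticipate is the combinatorial count of homologous and clipping anchor positions: the argument that only $O(km)$ positions can have a diagonal touching the path relies essentially on the $EC$ space to guarantee that $P_H$ does not expand or contract wildly and therefore has $O(m)$ points confined to a narrow band. The nonlinearity of $N\log N$ is, by contrast, easily dispatched by the deterministic bound $\log N\le 2\log n$, so no concentration argument for $N$ itself is needed, and the crude $EC^c$ tail suffices precisely because the target exponent $n^{C\alpha}$ absorbs the logarithmic slack.
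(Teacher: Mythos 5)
Your proposal is correct and follows essentially the same route as the paper: the heart of both arguments is the deterministic count that every homologous or clipping anchor's diagonal must touch $P_H$, each path point lies on at most $k$ anchor diagonals, and $|P_H|=O(m)$ under $EC$, giving $N_H+N_C=O(km)=O(m\log n)$, combined with the trivial bound $N\le mn$ on the complement of the good space. The only (minor) deviation is that you condition on $EC$ and dispose of spurious anchors with a first-moment bound $\mathbb{E}[N_S]\le mn\sigma^{-k}$, whereas the paper conditions on $G=F1\wedge EC\wedge F2$ so that spurious anchors are absent outright; both are valid.
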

\begin{proof}
    Under $G$, all anchors are homologous or clipping. We will obtain a loose bound on the total number of anchors under $G$. Note first that there can be at most $m$ homologous anchors since $|S'| = m$. Each point on $|P_H|$ can be in at most $k$ clipping anchors: it can be in any one of the $k$ positions in a clipping anchor. Thus, there can be at most $k|P_H|$ clipping anchors. Under $G$, $|P_H| \le c|S'| = cm$ for some constant $c > 0$. Thus, $N \le m + ckm$. 

    It follows that $N\log N \le (m + ckm)\log (m + ckm)$, and so $\mathbb{E}(N\log N \mid G) \le (m + ckm)\log (m + ckm)$. Since $\log(m+ckm) \ll \log(n)$ and $ck \ll n^{C\alpha}$, it follows that $\mathbb{E}(N\log N \mid G) = O(mn^{C\alpha}\log n)$.

    We use a worst-case bound for the case where $G^c$ occurs. Note that $N \le mn$ since there can be at most one anchor for each pair of positions in $S,S'$. Thus, $\mathbb{E}(N\log N \mid G^c)\Pr(G^c) \le mn \log (mn)\frac{6}{n} = O(m\log(mn)) = O(m\log(n))$. 

    By the law of total expectation, $\mathbb{E}(N\log N) = \mathbb{E}(N\log N \mid G)\Pr(G) + \mathbb{E}(N\log N \mid G^c)\Pr(G^c) \le O(mn^{C\alpha}\log n) + O(m\log(n)) =O(mn^{C\alpha}\log n)$.
\end{proof}

The strategy for showing that $\mathbb{E}(T_{Ext}) = O(mn^{C\alpha}\log n)$ for an optimal chain will be to upper bound its extension runtime by that of the subchain of homologous anchors without any mutations occurring at ``nice'' positions: $\{p+1,p+1+k,p+1+2k,\dots\}$. Intuitively, this works because that subchain is sparser and extension through sparser subchains takes longer. We begin by showing that every mutation-free homologous anchor that lies between the first and last anchor in an optimal chain must belong to it.

\begin{lemma}[Working in $\bm{G}$]\label{lemma:K_subset_C}
    Let $\mathcal{C} = ((i_1,j_1),\dots,(i_u,j_u))$ be an optimal chain. Under $G$, it consists only of homologous and clipping anchors. Then if there is an index $a \in [p+1, p + m' - k + 1]$ such that $i_1 \le a \le i_u$ and there are no mutations in $\{a,\dots,a+k-1\}$, then $(a, f(a)) \in C$.
\end{lemma}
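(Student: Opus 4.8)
The plan is to use an exchange argument driven by the simple form of the linear-gap score. First I would observe that the score of a chain telescopes to $u - \xi\bigl((i_u - i_1) + (j_u - j_1)\bigr)$, so it depends only on the number of anchors $u$ and the bounding box of the chain. Consequently, inserting any extra anchor whose coordinates lie strictly inside the box increases $u$ by one while leaving the span fixed, raising the score by exactly $1$. Assuming toward a contradiction that $(a,f(a)) \notin \mathcal{C}$, the goal becomes to splice $(a,f(a))$ into $\mathcal{C}$ as a valid chain with the same endpoints.

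Before that, I would record the structural consequences of the block $\{a,\dots,a+k-1\}$ being mutation-free: the whole diagonal $\{(a+t,f(a)+t) : 0 \le t \le k-1\}$ lies on $P_H$, and because positions $a+1,\dots,a+k-1$ carry no insertions, each column $x$ with $a \le x \le a+k-2$ contains a unique path point, at the diagonal height $f(a)+(x-a)$ (column $a+k-1$ may acquire higher points from an insertion at $a+k$, but never lower ones). Working in $G$, every chain anchor is homologous or clipping, hence touches $P_H$.

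The crux — and the step I expect to be the main obstacle — is the comparability claim: the mutation-free anchor $(a,f(a))$ is chain-comparable with every non-spurious anchor $(i',j')$, i.e.\ one never has $i'<a,\ j'>f(a)$, nor $i'>a,\ j'<f(a)$, nor $i'=a,\ j'\neq f(a)$. I would prove this by contradiction, letting $(i'+t',\,j'+t')$ be a point where $(i',j')$ meets $P_H$ and comparing it with the mutation-free diagonal via monotonicity of $P_H$. In each forbidden configuration the touching column falls in a range where the path height is pinned — equal to, or bounded below by, the diagonal height $f(a)+(\text{column}-a)$, using $t'\le k-1$ together with $i'<a$ or $i'>a$ — and this forces $j'$ onto the wrong side of $f(a)$, contradicting the assumed inequality (or the no-spurious hypothesis). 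The delicate points are handling the possibly multi-valued column $a+k-1$, and the case where a would-be clipping anchor starting below the path can never rise to meet it within its $k$ columns; both are settled by the slope-$1$, length-$k$ geometry together with the absence of deletions and insertions inside the block.

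Finally, granted comparability, I would finish by sorting. Every chain anchor is strictly $\prec$ or $\succ (a,f(a))$ in the chain order (equality would already place $(a,f(a))$ in $\mathcal{C}$); by transitivity and the fact that the chain is totally ordered, the $\prec$-anchors form a prefix and the $\succ$-anchors a suffix, so $(a,f(a))$ can be inserted between them to form a valid chain. Comparability with the two endpoints, combined with $i_1 \le a \le i_u$, yields $a \neq i_1, i_u$ and $j_1 \le f(a) \le j_u$, so the insertion is strictly interior and leaves both endpoints — and hence the span — unchanged. The resulting chain has score larger by $1$, contradicting optimality of $\mathcal{C}$; therefore $(a,f(a)) \in \mathcal{C}$.
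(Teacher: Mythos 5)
Your proposal is correct and follows essentially the same route as the paper's proof: both are exchange arguments that use (i) the fact that under $G$ every anchor is homologous or clipping and hence touches $P_H$, (ii) the mutation-free block forcing $P_H$ onto the diagonal through $(a,f(a))$ so that monotonicity of the path makes $(a,f(a))$ insertable without violating chain order, and (iii) the telescoping linear-gap score, under which inserting an interior anchor raises the score by exactly $1$, contradicting optimality. Your write-up of the comparability/sorting step and of the multi-valued column $a+k-1$ is somewhat more explicit than the paper's terse treatment, but the underlying argument is identical.
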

\begin{proof}
    If $(a,f(a)) \in \mathcal{C}$, then the result is trivially true. Assume that $(a,f(a)) \notin \mathcal{C}$. We must show that adding it to $\mathcal{C}$ is both valid and increases the score of $\mathcal{C}$.

    Turning to the first part: if adding $(a,f(a))$ breaks the anchor monotonicity property of the chain, then there exists some anchor $(i,j)$ such that $(i,j)$ and $(a,f(a))$ intersect the homologous path and either $i \le a$ and $j \ge f(a)$ or $i \ge a$ and $j \le f(a)$. We consider the first case, $i \le a$ and $j \ge b$, noting that the proof for the second case is identical. Since $(a,f(a))$ is homologous, all of its points lie on $P_H$. Furthermore, since there no mutations occur in the $k$ positions $[a:a+k-1]$ on $S$, it follows that the points $(a+t,f(a)+t) \in P_H$ for $0 \le t \le k$. Since $P_H$ touches the anchor $(i,j)$, this means that there is a portion of $P_H$ that connects $(a+k,f(a) + k)$ to some point $(x,y) \in \{(i+t,j+t)\mid 0 \le t \le k - 1\}$. However, since $a \ge i$, it follows that $a + k > i + k - 1$, a contradiction. It follows that $(a,f(a))$ can be added to $\mathcal{C}$.

    Lastly, adding $(a,f(a))$ to $\mathcal{C}$ increases its score by $1$, since there is no change in the linear-gap cost, but there is an additional anchor. 
\end{proof}

The lemma above applies specifically to preserved homologous anchors; clipping anchors can be mutually exclusive. In particular, the homologous path can ``weave'' between the two clipping anchors with the same $x$ value, implying that both cannot be in the same chain. 

The lemma below establishes that if two chains share a first and last anchor and one is a subchain of the other, then its extension runtime is greater.

\begin{lemma}[Extension takes longer for sparser chains]\label{lemma:sparser_chains_take_longer}
    Let $\mathcal{C} = ((i_1,j_1), \dots, (i_u, j_u))$ and $\mathcal{C}' = ((i_1',j_1'), \dots, (i_u', j_u'))$ be two chains such that $(i_1,j_1) = (i_1',j_1')$ and $(i_u, j_u) = (i_u', j_u')$ and $\mathcal{C}' \subset \mathcal{C}$. Then the runtime of extension through $\mathcal{C}'$ is $\ge$ the runtime of extension through $\mathcal{C}$.
\end{lemma}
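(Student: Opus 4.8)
The plan is to treat the extension runtime as the sum of the DP-table areas of the extension boxes, and then reduce the statement to the removal of a single interior anchor. Concretely, I model $T_{Ext}(\mathcal{D}) = \sum_{\ell=1}^{|\mathcal{D}|-1} |Ext(\ell)|$ for any chain $\mathcal{D}$, since filling the gap between two consecutive anchors with classical quadratic dynamic programming costs a number of cells proportional to the area of the rectangular box $Ext(\ell)$. Because $\mathcal{C}' \subset \mathcal{C}$ and the two chains share their first and last anchors, $\mathcal{C}'$ is obtained from $\mathcal{C}$ by deleting interior anchors one at a time. So it suffices to prove that deleting a single interior anchor never decreases the total extension runtime, and then to induct on the number of deleted anchors.

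For the single-deletion step, suppose $a,b,c$ are three consecutive anchors of a chain and let $b$ be the one removed. Only the boxes incident to $b$ change: the two boxes $Ext(a,b)$ and $Ext(b,c)$ are replaced by the single box $Ext(a,c)$. I would first show, using chain monotonicity (so $i_a \le i_b \le i_c$ and $j_a \le j_b \le j_c$), that both $Ext(a,b)$ and $Ext(b,c)$ are subsets of $Ext(a,c)$, since the coordinate ranges defining each smaller box are contained in the corresponding ranges of the larger box. Next I would show that $Ext(a,b)$ and $Ext(b,c)$ are disjoint, because their $x$-ranges $\{i_a+k-1,\dots,i_b\}$ and $\{i_b+k-1,\dots,i_c\}$ do not meet once $k \ge 2$, which holds since $k = C\log_{\sigma} n$. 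Combining disjointness with containment gives $|Ext(a,b)| + |Ext(b,c)| = |Ext(a,b)\cup Ext(b,c)| \le |Ext(a,c)|$, so the total area does not shrink when $b$ is removed.

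The main subtlety, and the one place the argument could go wrong, is precisely the distinction between the union area and the sum of areas. Containment alone would only bound the union $|Ext(a,b)\cup Ext(b,c)| \le |Ext(a,c)|$, which is useless if the two boxes overlapped, since the runtime counts $|Ext(a,b)| + |Ext(b,c)|$. The disjointness step, resting on $k \ge 2$, is exactly what converts the geometric containment into the desired additive inequality. I would also dispose of the empty-box edge cases at the outset: whenever a box is empty its contribution is $0$ and the empty set is trivially a disjoint subset of $Ext(a,c)$, so the inequality degenerates harmlessly; one checks in particular that $Ext(a,c)$ being empty forces both $Ext(a,b)$ and $Ext(b,c)$ to be empty as well. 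Iterating the single-deletion inequality along the chain of removals taking $\mathcal{C}$ down to $\mathcal{C}'$ then yields $T_{Ext}(\mathcal{C}') \ge T_{Ext}(\mathcal{C})$, as claimed.
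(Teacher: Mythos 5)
Your proof is correct, and it reaches the same conclusion by a somewhat different route than the paper. The paper decomposes $T_{Ext}(\mathcal{C})$ segment-wise along consecutive anchors of $\mathcal{C}'$, defines the gap lengths $G_\ell(S;\mathcal{C})$, $G_\ell(S';\mathcal{C})$, and applies the algebraic inequality $\sum_\ell a_\ell b_\ell \le \bigl(\sum_\ell a_\ell\bigr)\bigl(\sum_\ell b_\ell\bigr)$ for non-negative reals together with the observation that the gaps of the dense chain sum to at most the single gap of the sparse chain; this handles an entire segment in one shot. You instead induct on single interior-anchor deletions and prove the two-box case geometrically: $Ext(a,b)$ and $Ext(b,c)$ are disjoint subsets of $Ext(a,c)$, so their areas add up to at most $|Ext(a,c)|$. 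The two arguments rest on the same underlying fact (your disjoint-boxes picture is precisely a geometric proof of $a_1b_1 + a_2b_2 \le (a_1+a_2)(b_1+b_2)$), but your decomposition is finer-grained and your handling of degenerate boxes is cleaner, since the set-theoretic containment argument is insensitive to how one counts cells, whereas the paper must introduce the truncated gap lengths $\max(\cdot,0)$ explicitly. You correctly identified the one place the argument could fail --- overlap of the two replaced boxes --- and disposed of it via $k \ge 2$; the empty-box edge cases are also handled correctly, including the observation that $Ext(a,c) = \emptyset$ forces both sub-boxes to be empty. One minor caveat: your model charges $|Ext(\ell)|$ per gap while the paper charges $G_\ell(S;\mathcal{C})\,G_\ell(S';\mathcal{C})$, which differ by an off-by-one in each factor; this is immaterial to the lemma and to the downstream asymptotics.
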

\begin{proof}
    The trivial cases are when $|\mathcal{C}| = |\mathcal{C}'| = 1$ or $|\mathcal{C}| = |\mathcal{C}'| = 2$, in which case they are the same chain and the result holds trivially. For the remainder of the proof, assume that $|\mathcal{C}| > 2$.

    Denote the runtime of extension in the subchain $(i_a,j_a)$ to $(i_b,j_b)$ of a chain $\mathcal{C}_0$ as $T_{Ext}((i_a,j_a):(i_b,j_b);\mathcal{C}_0)$.  Then we can express the extension runtime of $\mathcal{C}$ as $T_{Ext}(\mathcal{C}) = \sum_{\ell=1}^{u'-1}T_{Ext}((i_{\ell}',j_{\ell}'):(i_{\ell + 1}',j_{\ell + 1}'); \mathcal{C})$. In other words, the extension runtime of $\mathcal{C}$ can be decomposed as the sum of the extension runtimes of the subchains induced by $\mathcal{C}'$ since they share the same endpoints.

    Since $T_{Ext}(\mathcal{C}') = \sum_{\ell=1}^{u'-1}T_{Ext}((i_{\ell}',j_{\ell}'):(i_{\ell + 1}',j_{\ell + 1}'); \mathcal{C}')$, it suffices to show that $T_{Ext}((i_{\ell}',j_{\ell}'):(i_{\ell + 1}',j_{\ell + 1}'); \mathcal{C}) \le T_{Ext}((i_{\ell}',j_{\ell}'):(i_{\ell + 1}',j_{\ell + 1}'); \mathcal{C}')$ for each $1 \le \ell \le u' - 1$. 

    Without loss of generality, let $\ell = 1$. Denote the subchain of $S$ between $(i_1', j_1'), (i_2', j_2')$ as $((i_1,j_1),\dots (i_p,j_p))$ where $(i_p,j_p) = (i_2', j_2')$. Furthermore, let $G_\ell(S;\mathcal{C}) = \max (i_{\ell +1} - i_{\ell} - k + 1, 0), G_\ell(S';\mathcal{C}) = \max (j_{\ell +1} - j_{\ell} - k + 1, 0)$ be the gap size of the $\ell$-th anchor on $S$ and $S'$. The values $G_\ell(S;\mathcal{C}')$ and $G_\ell(S';\mathcal{C}')$ are defined similarly. Since $i_2' - i_1' = \sum_{\ell = 1}^{p-1} i_{\ell+1} - i_{\ell}$, it is clear that if $i_2' - i_1' - k + 1 \le 0$, then $G_\ell(S;\mathcal{C}) = 0$ for all $1 \le \ell \le p - 1$. Otherwise, it is also clear that $G_{1}(S;\mathcal{C}') = i_2' - i_1' - k + 1 \ge \sum_{\ell = 1}^{p-1} G_{\ell}(S;\mathcal{C})$ and similarly, $G_{1}(S';\mathcal{C}')  \ge G_1(S';\mathcal{C})$.

    We have that $T_{Ext}((i_1,j_1):(i_p,j_p);\mathcal{C}) = \sum_{\ell = 1}^{p-1} G_{\ell}(S;\mathcal{C})G_{\ell}(S';\mathcal{C})$. Since each gap is non-negative, this is $\le( \sum_{\ell = 1}^{p-1} G_{\ell}(S;\mathcal{C}))(\sum_{\ell = 1}^{p-1} G_{\ell}(S';\mathcal{C})) \le G_{1}(S;\mathcal{C}')G_{1}(S';\mathcal{C}')$ which is exactly $T_{Ext}((i_1',j_1'):(i_2',j_2');\mathcal{C}')$. Thus, we have shown that $T_{Ext}((i_1,j_1):(i_p,j_p);\mathcal{C}) \le T_{Ext}((i_1',j_1'):(i_2',j_2');\mathcal{C}')$, as desired.
    
\end{proof}

\begin{definition}
    Recall that a preserved anchor is some anchor $(i,f(i))$  for which there are no mutations at any position in the $k$-mer $S[i:i+k-1]$. Define $K$ to be the chain of all preserved anchors occurring at $(i,f(i))$ for $i \in \{p+1,p+1+k,\dots,p+1+(\lfloor \frac{S}{k} \rfloor - 1)k\}$.

    Define $Y_i^K$ to be the random variable denoting the number of uncovered bases between a preserved anchor occurring at position $i \in \{p+1,p+1+k,\dots,p+1+(\lfloor \frac{S}{k} \rfloor - 1)k\}$, if it exists, and the next preserved anchor in $K$. Otherwise, $Y_i^K = 0$. Formally, $Y_i^K = \ell$ if positions $[i,\dots, i+k-1]$ and $[i+\ell +k,\dots, i+\ell + 2k - 1]$ on $S$ are unmutated and there is no preserved anchor in between. If $(i,f(i))$ is not a preserved anchor, then $Y_i^K = 0$.
\end{definition}

With the tools developed in the previous two lemmas, we now prove that the extension through the convenient anchors $K$, with a particular start and end anchor chosen, takes longer than extending through $\mathcal{C} \supset K$.

\begin{lemma}[Working in $\bm{G}$]\label{lemma:extension_bound}
    Let $\mathcal{C}$ be any optimal chain for $S,S'$. Define $G_{start}(S), G_{start}(S')$,  to be the distance from the start of $P_H$ to the first anchor of $K$ in $S$ and $S'$, respectively, and if $K$ is empty, then $G_{start}(S) = m', G_{start}(S') = m$. $G_{end}(S)$ and $G_{end}(S')$ are defined similarly. 

    Then $T_{ext}(\mathcal{C}) \le (G_{start}(S) + k)(G_{start}(S')) + T_{ext}(K) + (G_{end}(S) + k)(G_{end}(S'))$
\end{lemma}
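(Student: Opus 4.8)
The plan is to bound the extension cost of $\mathcal{C}$ by comparing it against the sparse chain $K$ of preserved anchors, leaning on the two structural lemmas just proved. Write $(i_s,j_s),(i_e,j_e)$ for the first and last points of $P_H$. First I would dispose of the degenerate case $K=\emptyset$: here $G_{start}(S)=m'$ and $G_{start}(S')=m$, so the first boundary term alone is at least $(m'+k)m$, which dominates $T_{ext}(\mathcal{C})$ since by Lemma \ref{lemma:sparser_chains_take_longer} the extension through any chain with endpoints $(i_1,j_1),(i_u,j_u)$ is at most the single box $(i_u-i_1)(j_u-j_1)\le m'm$; the stated inequality is then immediate.

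Assume $K\neq\emptyset$. Let $K_{\mathcal{C}}$ be the subchain of $K$ whose anchors have $S$-coordinate in $[i_1,i_u]$; by Lemma \ref{lemma:K_subset_C} each such preserved anchor lies in $\mathcal{C}$, so $K_{\mathcal{C}}\subseteq\mathcal{C}$, and it is a contiguous block $(c_1,f(c_1)),\dots,(c_q,f(c_q))$ of $K$. Form the subchain $\mathcal{C}'$ consisting of $(i_1,j_1)$, then $K_{\mathcal{C}}$, then $(i_u,j_u)$; since $\mathcal{C}'\subseteq\mathcal{C}$ and they share both endpoints, Lemma \ref{lemma:sparser_chains_take_longer} gives $T_{ext}(\mathcal{C})\le T_{ext}(\mathcal{C}')$. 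I would then split $T_{ext}(\mathcal{C}')$ into a head (from $(i_1,j_1)$ to $(c_1,f(c_1))$), a middle (from $(c_1,f(c_1))$ to $(c_q,f(c_q))$), and a tail (from $(c_q,f(c_q))$ to $(i_u,j_u)$). The middle is exactly $T_{ext}(K_{\mathcal{C}})$, and because $K_{\mathcal{C}}$ is a contiguous subchain of $K$ whose consecutive-gap products form a subset of those summed in $T_{ext}(K)$, the middle is $\le T_{ext}(K)$, supplying the central term for free.

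It then remains to charge the head and tail to the boundary boxes. The head is at most $(c_1-i_1)(f(c_1)-j_1)$, so the whole game is to control $c_1$ and $f(c_1)$ by the first anchor of $K$. In the generic case where $(i_1,j_1)$ is homologous, monotonicity of $P_H$ together with optimality of $\mathcal{C}$ forces the global first $K$-anchor $(b_1,f(b_1))$ to satisfy $b_1\ge i_1$ and $f(b_1)\ge j_1$: otherwise $(b_1,f(b_1))$ could be prepended to $\mathcal{C}$ while raising its linear-gap score, as the penalty change is $\xi\cdot(m'+m)\ll 1$. Hence $c_1=b_1$, and using $i_1\ge i_s$, $j_1\ge j_s$ the head is $\le(b_1-i_s)(f(b_1)-j_s)=G_{start}(S)\,G_{start}(S')\le(G_{start}(S)+k)\,G_{start}(S')$. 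The tail is symmetric under reversal of both strings, using the last $K$-anchor and $G_{end}$.

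The hard part will be the case of a clipping first (or last) anchor. There a preserved anchor may sit just left of $i_1$ in $S$ while its image $f(b_1)$ lands inside $(j_1,j_1+k-1]$, so prepending it would violate chain monotonicity and one can have $b_1<i_1$ with $c_1\neq b_1$; the clean box bound can then fail. I would resolve this inside the $EC$ space: by injectivity of $f$ (Lemma \ref{sup_lemma:1}) and bounded contraction, only $O(1)$ preserved anchors can be squeezed into the width-$(k-1)$ $S'$-interval below $j_1$, the head's $S$-extent sits inside a single $K$-gap already charged to $T_{ext}(K)$, and the additive $+k$ slack in $G_{start}(S)+k$ together with $f(b_1)-j_1<k$ absorbs the residual geometry. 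Making this bookkeeping precise -- showing the clipping head folds into $T_{ext}(K)$ plus the boundary box with no uncharged surplus -- is the step that demands the most care.
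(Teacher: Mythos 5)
Your overall decomposition (degenerate case, then head/middle/tail via Lemmas \ref{lemma:K_subset_C} and \ref{lemma:sparser_chains_take_longer}) matches the paper's, and the $K=\emptyset$ case and the middle term $\le T_{ext}(K)$ are handled essentially identically. The gap is in the head and tail bounds, and you have correctly diagnosed where it is: by anchoring the head box at $(i_1,j_1)$ and then trying to locate the first preserved anchor \emph{relative to the first anchor of the chain}, you are forced into a case analysis on whether $(i_1,j_1)$ is homologous or clipping, and the clipping case is exactly where your argument stops. The sketch you offer for that case (injectivity of $f$, ``only $O(1)$ preserved anchors below $j_1$'', folding the surplus into $T_{ext}(K)$) is not a proof: if a preserved anchor $(b_1,f(b_1))$ has $b_1<i_1$ but $f(b_1)>j_1$, it is not in $\mathcal{C}$, it is not charged to $T_{ext}(K;\mathcal{C})$, and your head box from $(i_1,j_1)$ to $c_1$ can then extend past the first anchor of $K$, so it is no longer obviously dominated by $G_{start}(S)\cdot G_{start}(S')$ at all; the $+k$ slack alone does not absorb this, since the overshoot in the $S$-direction is a full $K$-gap, not $O(k)$.

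The paper's proof sidesteps this entirely with one observation you are missing: under $G$ every anchor of $\mathcal{C}$ is homologous or clipping, hence intersects $P_H$, hence $i_1\ge p+1-k$ and trivially $j_1\ge 0$. Therefore the extension region between $(i_1,j_1)$ and the first anchor $(i_p,j_p)$ of $K$ is \emph{contained} in the fixed box with corners $(p+1-k,0)$ and $(i_p,j_p)$, whose area is exactly $(i_p-(p+1)+k)\,j_p=(G_{start}(S)+k)\,G_{start}(S')$. Combined with Lemma \ref{lemma:sparser_chains_take_longer} applied to the two-anchor chain $((i_1,j_1),(i_p,j_p))$ versus the sub-chain of $\mathcal{C}$ between them, this gives the head term with no case split and no need to compare $b_1$ with $i_1$ or $f(b_1)$ with $j_1$. (It also explains why the asymmetric $+k$ appears only on the $S$-side.) If you replace your ``generic case'' and ``hard part'' paragraphs with this containment argument, the rest of your proof goes through.
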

\begin{proof}
    Let $\mathcal{C} = ((i_1, j_1), \dots, (i_u, j_u))$. Since we are working in $G$, all of these anchors are homologous or clipping. It follows that $i_1 \ge p - k + 1$, since otherwise $\{(i_1+t, j_1+t) \mid 0 \le t \le k - 1\} \cap P_H = \emptyset$ since $i_1 + k - 1 \le p - k + 1 + k - 1 = p < p + 1$.

    There are two cases to consider. The first is when $K$ is empty, i.e., there are no anchors occurring at positions $p+1+kt$ in $S[p+1:p+m']$ for $0 \le t$. In this case, $G_{start}(S) = m', G_{start}(S') = m$, and the expression on the right hand side evaluates to $2(m+k)(m'+k)$. Since $i_1 \ge p - k + 1, j_1 \ge 0$ and, similarly, $i_u \le p + m' + k, j_u \le m$, it follows that extending between $(i_1,j_1)$ and $(i_u, j_u)$ has runtime at most $(i_u - i_1)(j_u - j_1) \le (m + 2k)(m') \le 2(m+k)(m') = 2(G_{start}(S) + k)(G_{start}(S'))$ since $m,m' \gg k$ under $G$. This establishes the first case.

    Assume now that $K$ is non-empty. Let the first anchor in $K$ be $(i_p, j_p)$ and the last anchor be $(i_q, j_q)$. If $|K| = 1$, then they are the same. Define $T_{Ext}(K;\mathcal{C})$ to be the runtime of extension for anchors occurring between $(i_p, j_p)$ and $(i_q, j_q)$ in $\mathcal{C}$. This is well-defined, and $K \subset \mathcal{C}$ by Lemma \ref{lemma:K_subset_C}. Then by Lemma \ref{lemma:sparser_chains_take_longer}, $T_{Ext}(K;\mathcal{C}) \le T_{Ext}(K)$. By Lemma \ref{lemma:sparser_chains_take_longer}, we also have that the extension runtime of the chain $\mathcal{C}' = ((i_1, j_1), (i_p, j_p))$ is at least that of the sub-chain containing all anchors from $(i_1, j_1)$ up to and including $(i_p, j_p)$; note that if $(i_1,j_1) = (i_p, j_p)$, then $\mathcal{C}'$ is technically not a valid chain but then the extension runtimes mentioned are both $0$. Lastly, since $i_1 \ge p + 1 - k$, it follows that the extension region defined by the endpoints $(p+1-k, 0), (i_p, j_p)$ contains the region defined by endpoints $(i_1, j_1), (i_p, j_p)$, and so the extension runtime of the first box is at least $T_{Ext}(\mathcal{C}')$. The extension of the first region is exactly $(i_p - (p + 1) + k)(j_p) = (G_{start}(S) + k)(G_{start}(S'))$. By the same argument applied to the extension after $(i_q, j_q)$, we obtain the desired inequality.
\end{proof}

The only remaining task is to bound $\mathbb{E}(T_{Ext})$. This is done below by making use of the previous lemma and the law of total expectation with respect to the space $G$.

\begin{lemma}\label{lemma:extension_runtime}
    The expected runtime of extension  $\mathbb{E}[T_{Ext}]$ for any optimal chain $\mathcal{C}$ is $O(mkn^{C\alpha})$.
\end{lemma}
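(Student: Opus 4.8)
The plan is to start from the decomposition in Lemma~\ref{lemma:extension_bound}, which holds under $G$ and bounds the extension runtime of any optimal chain $\mathcal{C}$ by a ``start'' term $(G_{start}(S)+k)G_{start}(S')$, the extension runtime $T_{ext}(K)$ through the sparse chain $K$ of preserved anchors at nice positions, and a symmetric ``end'' term. I would then apply the law of total expectation with respect to $G$: since the extension runtime of any chain is trivially at most $O(mn)$ (the product of the total gap lengths on $S$ and $S'$) and $\Pr(G^c)\le 6/n$, the contribution of $G^c$ is $O(mn)\cdot\tfrac{6}{n}=O(m)$, which is absorbed into the target $O(mkn^{C\alpha})$. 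It therefore suffices to bound $\mathbb{E}[T_{ext}(\mathcal{C})\mid G]$, and because $\Pr(G)\ge 1-6/n$, bounding each of the three conditional expectations reduces to bounding the unconditional expectation of a quantity that dominates it under $G$.

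The heart of the argument is bounding $\mathbb{E}[T_{ext}(K)]$. Writing the extension runtime through $K$ as a sum of boxes between consecutive preserved anchors, the $S$-gap after the preserved anchor at nice position $i$ is $O(Y_i^K+k)$ by the definition of $Y_i^K$, and under the expansion bound of $EC$ the corresponding $S'$-gap is also $O(Y_i^K+k)$ (a stretch of $S$ of length $O(Y_i^K+k)$ comprises $O((Y_i^K+k)/k)$ $k$-mers, each gaining at most $\tfrac{1}{t_0}(\tfrac{2}{\beta}+1)k=O(k)$ insertions). Hence each box contributes $O((Y_i^K+k)^2)$ and $T_{ext}(K)\le \sum_{i \text{ nice}} O((Y_i^K+k)^2)$ under $G$. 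The key distributional fact, using that non-overlapping $k$-blocks are independent (Corollary~\ref{cor:hom_indepdence}) and preserved with probability $q=((1-\theta_i)(1-\theta_d)(1-\theta_s))^k\ge(1-\theta_T)^k=n^{-C\alpha}$ (Corollary~\ref{cor:kblock_prob}), is that $Y_i^K$ vanishes unless $i$ is itself preserved, and conditioned on that event $Y_i^K=jk$ with $j$ geometric of parameter $q$. Consequently
\[
\mathbb{E}\!\left[(Y_i^K+k)^2\right]=\Pr(i\text{ preserved})\cdot\mathbb{E}\!\left[(jk+k)^2\mid i\text{ preserved}\right]=q\cdot O\!\left(k^2/q^2\right)=O\!\left(k^2/q\right)=O\!\left(k^2 n^{C\alpha}\right),
\]
where the crucial cancellation is the factor $q$ from ``$i$ preserved'' against the $1/q^2$ from the second moment of the geometric gap. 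Summing over the $\lfloor m'/k\rfloor=O(m/k)$ nice positions gives $\mathbb{E}[T_{ext}(K)]=O(m/k)\cdot O(k^2 n^{C\alpha})=O(mkn^{C\alpha})$.

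For the start and end terms I would argue that they are lower-order. Under $G$ the expansion bound gives $G_{start}(S')=O(G_{start}(S)+k)$, so the start term is $O((G_{start}(S)+k)^2)$; and $G_{start}(S)=j_0 k$, where $j_0$, the number of non-preserved nice positions before the first preserved one, is geometric of parameter $q$, whence $\mathbb{E}[(G_{start}(S)+k)^2]=O(k^2/q^2)=O(k^2 n^{2C\alpha})$. The end term is handled symmetrically. Comparing with the target, $k^2 n^{2C\alpha}=o(mkn^{C\alpha})$ because $k=O(\log n)$ and $m=\Omega(n^{2C\alpha+\epsilon})$ forces $m\gg kn^{C\alpha}$ (Definition~\ref{sce_defs}), so both boundary terms are absorbed. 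Combining the three conditional bounds with $\Pr(G)\ge 1-6/n$ and the $O(m)$ contribution from $G^c$ yields $\mathbb{E}[T_{Ext}]=O(mkn^{C\alpha})$.

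I expect the main obstacle to be the second-moment computation for $T_{ext}(K)$: one must resist the temptation to bound $\mathbb{E}[(Y_i^K)^2]$ \emph{conditioned} on $i$ being preserved and then multiply by the number of nice positions, which would lose a factor of $n^{C\alpha}$ and give the too-weak bound $O(mkn^{2C\alpha})$. The correct accounting keeps the indicator that $i$ is preserved inside the expectation so that its probability $q$ cancels one power of $1/q$ from the geometric tail; getting this bookkeeping right, together with the deterministic-under-$G$ conversion from $S$-gaps to $S'$-gaps via the $EC$ expansion bound, is the delicate part.
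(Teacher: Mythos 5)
Your proposal is correct and follows essentially the same route as the paper's proof: the decomposition from Lemma~\ref{lemma:extension_bound}, the law of total expectation over $G$ with the trivial $O(mn)\cdot\frac{6}{n}$ bound on $G^c$, the conversion of conditional to unconditional expectations via $\Pr(G)\ge 1-6/n$, and the geometric-tail second-moment computations yielding $O(k^2/q)$ per preserved block and $O(k^2/q^2)$ for the boundary terms. The only blemish is that your displayed identity for $\mathbb{E}\bigl[(Y_i^K+k)^2\bigr]$ drops the $(1-q)k^2$ contribution from non-preserved nice positions, but that term totals $O(m'k)$ and is absorbed into $O(mkn^{C\alpha})$, so the argument stands.
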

\begin{proof}
    By linearity of expectation, $\mathbb{E}[T_{Ext}(\mathcal{C})] = \mathbb{E}[T_{Ext}(\mathcal{C}) \mid G]\Pr(G) + \mathbb{E}[T_{Ext}(\mathcal{C}) \mid G^c]\Pr(G^c)$. Using that $\Pr(G^c) \le \frac{6}{n}$, $\mathbb{E}(T_{Ext}(\mathcal{C}) \mid G^c)\Pr(G^c) \le mn(\frac{6}{n}) = O(m)$ since the runtime extension of $\mathcal{C}$ cannot exceed the runtime of a full naive alignment between $S$ and $S'$. It suffices to prove then that $\mathbb{E}[T_{Ext}(\mathcal{C}) \mid G]\Pr(G) = O(mkn^{C\alpha})$. 
    
    The general strategy to bound conditional expectations will be to first use the $G$ space to rewrite the expectations, then to bound the unconditional expectations that have convenient probability independence properties, and then to convert that to a bound on the conditional expectation.

    In Lemma \ref{lemma:extension_bound}, we showed that $\mathbb{E}(T_{Ext}(\mathcal{C}) \mid G) \le \mathbb{E}[(G_{start}(S) + k)(G_{start}(S')) \mid G] + \mathbb{E}[T_{ext}(K) \mid G] + \mathbb{E}[(G_{end}(S) + k)(G_{end}(S')) \mid G]$. We show that each term is $O(mkn^{C\alpha})$, from which it follows that the sum is also $O(mkn^{C\alpha})$.

    First, consider $\mathbb{E}[(G_{start}(S) + k)(G_{start}(S')) \mid G]$. Under $G$, $G_{start}(S') \le c(G_{start} + k)$ by Lemma \ref{lemma:EC_space}, where $c$ is the expansion constant; the number of points corresponding to the region in $S$ before the first $k$-mer without a mutation cannot expand more than $c$ times. The addition of $k$ is to ensure that $G_{start}(S) + k \ge k$, which is necessary since Lemma \ref{lemma:EC_space} applies to $k$-mers. Thus, under $G$, $(G_{start}(S) + k)(G_{start}(S')) \le c(G_{start}(S) + k)^2$. We now work with the unconditional expectation $\mathbb{E}[c(G_{start}(S) + k)^2] = c(\mathbb{E}[G_{start}(S)^2] + 2k\mathbb{E}[G_{start}(S)] + k^2)$. Note that $G_{start} = \ell k$ for $\ell \ge 0$. We can check that $kG_{start}(S) \le G_{start}(S)^2$. If $G_{start} = 0$, then both sides vanish, and otherwise $kG_{start}(S) = \ell k^2 \le (\ell k)^2 = G_{start}(S)^2$. Thus, $2k\mathbb{E}[G_{start}(S)] = O(\mathbb{E}[G_{start}(S)^2])$. We calculate this term now. Note that $\Pr(G_{start}(S) = \ell k) \le (1 - ((1 - \theta_i)(1 - \theta_d)(1-\theta_s))^{k})^{\ell} ((1 - \theta_i)(1 - \theta_d)(1-\theta_s))^{k}$. Writing $\theta_0  = (1 - \theta_i)(1 - \theta_d)(1-\theta_s) \ge 1 - \theta_T$, is the probability that no mutation occurs at a position. With this simpler notation, $\Pr(G_{start}(S) = \ell k) \le (1 - \theta_0^k)^\ell \theta_0^k$. It follows that 

    $$
    \mathbb{E}[G_{start}(S)^2] \le \theta_0^k \sum_{\ell = 1}^\infty (\ell k)^2 (1 - \theta_0^k)^\ell = O(\frac{k^2 \theta_0^k}{\theta_0^{3k}}) = O(n^{2C\alpha}\log^2 n)
    $$ since $\theta_0 \ge 1 - \theta_T$, and using the fact that for $|x| < 1$, $\sum_{i=1}^{\infty} i^2 x^i = \frac{x(x + 1)}{(1 - x)^3}$. Since $m' = \Omega(n^{2C\alpha + \epsilon})$ and $k = O(\log n)$, it follows that $n^{2C\alpha}\log^2 n \ll m'$, so $\mathbb{E}[G_{start}(S)^2] = O(m')$. Since $k^2 = O(m')$, it follows that $\mathbb{E}[c(G_{start}(S) + k)^2] = O(m')$. By linearity of expectation, $\mathbb{E}[(G_{start}(S) + k)(G_{start}(S')) \mid G] \le \frac{\mathbb{E}[c(G_{start}(S) + k)^2]}{\Pr(G)} \le \frac{m'}{1 - \frac{6}{n}} \le \frac{cm}{1 - \frac{6}{n}}$, where we used that $m' \le cm$ under $G$ by Lemma \ref{lemma:EC_space}. Thus, $\mathbb{E}[(G_{start}(S) + k)(G_{start}(S')) \mid G] = O(m)$, and by the same argument, $\mathbb{E}[(G_{end}(S) + k)(G_{end}(S')) \mid G] = O(m)$.

    We will now bound the term $\mathbb{E}[T_{ext}(K) \mid G]$. Under $G$, the corresponding distance in $S'$ from position $i$ to the start of the next $k$-mer without mutations is at most $c(Y_i^K + k)$ where $c$ is the expansion factor. The addition of $k$ comes from the fact that $Y_i^K$ can be less than $k$ and the $EC$ lemma bounds the expansion of $k$-mer regions of $S$ while generating $S'$. Thus, $\mathbb{E}[T_{ext}(K) \mid G] \le \sum_{i=p+1}^{p+m'}\mathbb{E}[cY_i^K(Y_i^K + k) \mid G] = \sum_{i=p+1}^{p+m'}c\mathbb{E}[(Y_i^K)^2 \mid G] + \sum_{i=p+1}^{p+m'}k\mathbb{E}[Y_i^K\mid G]$. We will bound both of these terms separately and show that they are both $O(mkn^{C\alpha})$.

    We first bound the unconditional version of the first term: $\sum_{i=p+1}^{p+m'}\mathbb{E}[(Y_i^K)^2]$. If $Y_i^K = \ell k$, then there are two regions of $k$ consecutive positions without any mutations, and there are $\ell k$ positions for which each $k$ block starting at position $p + 1 + kt$ in $S$ has some mutation. Thus, $\Pr(Y_i^K = \ell k) \le (1 - \theta_i)(1 - \theta_d)(1-\theta_s)^{2k} (1 - ((1 - \theta_i)(1 - \theta_d)(1-\theta_s))^{k})^{\ell}$. Thus, $\Pr(Y_i^K = \ell k) \le \theta_0^{2k}(1 - \theta_0^k)^{\ell}$. We get that $\sum_{i=p+1}^{p+m'}\mathbb{E}[(Y_i^K)^2] \le  \frac{m'}{k}\sum_{\ell=1}^\infty (\ell k)^2 \theta_0^{2k}(1 - \theta_0^k)^{\ell}$ since there are at most $\frac{m'}{k}$ non-zero random variables $Y_i^K$ for $i \in \{p+1, p+1+k, \dots,\}$. 

    Again, using the fact that for $|x| < 1$, $\sum_{i=1}^{\infty} i^2 x^i = \frac{x(x + 1)}{(1 - x)^3}$, we get that $\frac{m'}{k}\sum_{\ell=1}^\infty (\ell k)^2 \theta_0^{2k}(1 - \theta_0^k)^{\ell} = \frac{m'}{k}\theta_0^{2k}k^2\frac{(1-\theta_0^k)(2-\theta_0^k)}{\theta_0^{3k}} \le \frac{m'k}{\theta_0^k}$. Using that $\theta_0 \ge 1 - \theta_T$, we obtain $\sum_{i=p+1}^{p+m'}\mathbb{E}[(Y_i^K)^2] \le m'k(1-\theta_T)^{-k} = m'kn^{C\alpha}$. 

    Now, we will bound the second unconditional term: $\sum_{i=p+1}^{p+m'}\mathbb{E}[kY_i^K]$. If $Y_i^K = 0$, then $kY_i^K = (Y_i^K)^2 = 0$, and otherwise, $Y_i^K > k$. Thus, $kY_i^K \le (Y_i^K)^2$. It follows that $\sum_{i=p+1}^{p+m'}\mathbb{E}[kY_i^K] \le \sum_{i=p+1}^{p+m'}\mathbb{E}[(Y_i^K)^2] \le m'kn^{C\alpha}$.

    Combining the two inequalities and using linearity of expectation yields that $\mathbb{E}(T_{Ext} \mid G) \le \sum_{i=p+1}^{p+m'}\mathbb{E}[cY_i^K(Y_i^K + k) \mid G] \le \frac{m'kn^{C\alpha}}{\Pr(G)} \le \frac{m'kn^{C\alpha}}{1 - \frac{6}{n}} \le \frac{cmkn^{C\alpha}}{1 - \frac{6}{n}} = O(mkn^{C\alpha})$ since $m' \le cm$ for the expansion constant $c$ under $G$ by Lemma \ref{lemma:EC_space}. 

    Note that $mkn^{C\alpha} \gg m$, so we have shown that all three terms in the upper bound of $\mathbb{E}[T_{Ext}(\mathcal{C}) \mid G]$ are $O(mkn^{C\alpha})$. We conclude then that $\mathbb{E}[T_{Ext}(\mathcal{C}) \mid G] = O(mkn^{C\alpha})$, which completes the proof.
    
\end{proof}

Having calculated $\mathbb{E}(T_{Chain})$ and $\mathbb{E}(T_{Ext})$ for an optimal chain, we immediately obtain the full runtime result below.

\begin{theorem}
    The expected runtime $\mathbb{E}[T_{SCE}]$ of seed chain extend for any optimal chain $C$ under the constraints given in Def.~\ref{sce_defs} is $O(mn^{C\alpha}\log n)$.
\end{theorem}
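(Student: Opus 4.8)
The plan is to split the total runtime of seed-chain-extend into its three stages and bound each in expectation, then recombine by linearity. Writing $T_{SCE} = T_{seed} + T_{Chain} + T_{Ext}$, I would first note that seeding the query is cheap: since the reference $S$ is assumed already seeded, we only scan the $m$ positions of $S'$ and look up each $k$-mer, so $T_{seed} = O(m)$. The two genuinely nontrivial contributions are the chaining time and the extension time, and these are exactly the quantities controlled in Lemmas \ref{lemma:chaining_runtime} and \ref{lemma:extension_runtime}, so the theorem reduces to assembling those two bounds.

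For chaining, I would invoke Lemma \ref{lemma:chaining_runtime}, which already delivers $\mathbb{E}[T_{Chain}] = O(mn^{C\alpha}\log n)$; its proof bounds the anchor count $N$ under the good space $G$ (at most $m$ homologous anchors plus $O(km)$ clipping anchors, since each path point can lie in at most $k$ clipping anchors) and falls back on the worst-case bound $N \le mn$ on $G^c$, which contributes only $O(m\log n)$ after weighting by $\Pr(G^c) \le 6/n$. For extension, I would invoke Lemma \ref{lemma:extension_runtime}, which gives $\mathbb{E}[T_{Ext}] = O(mkn^{C\alpha})$. The one bookkeeping step I must not skip is to recall from Definition \ref{sce_defs} that $k = C\log_\sigma n = O(\log n)$, so this extension bound is likewise $O(mn^{C\alpha}\log n)$.

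Finally, I would combine the three pieces by linearity of expectation, $\mathbb{E}[T_{SCE}] = O(m) + \mathbb{E}[T_{Chain}] + \mathbb{E}[T_{Ext}] = O(mn^{C\alpha}\log n)$, observing that the $mn^{C\alpha}\log n$ term dominates the $O(m)$ seeding cost since $n^{C\alpha} \ge 1$.

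The combination step itself is routine; the real difficulty has already been absorbed into the two component lemmas. I expect the extension bound to be the hard part of the whole argument: it relies on the fact that extension through a sparser subchain takes at least as long (Lemma \ref{lemma:sparser_chains_take_longer}), that the mutation-free ``nice'' anchors $K$ form a subchain of any optimal chain (Lemma \ref{lemma:K_subset_C}), and then a second-moment computation on the geometric gap lengths $Y_i^K$ combined with the expansion bound from $EC$ to pass from the unconditional expectation to the conditional one on $G$. The chaining bound is easier but still turns on the new clipping-anchor accounting, which is precisely where the indel analysis departs from the substitution-only case.
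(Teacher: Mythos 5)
Your proposal is correct and matches the paper's own proof essentially verbatim: the same decomposition $T_{SCE} = O(m) + T_{Chain} + T_{Ext}$, the same invocation of Lemmas \ref{lemma:chaining_runtime} and \ref{lemma:extension_runtime}, and the same recombination by linearity of expectation (with the useful explicit note that $k = O(\log n)$ converts the $O(mkn^{C\alpha})$ extension bound into $O(mn^{C\alpha}\log n)$). Nothing further is needed.
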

\begin{proof}
    The runtime $T_{SCE}$ of any chain $\mathcal{C}$ is $T_{Chain} + T_{Ext}$, the runtime of seeding the query $S'$, chaining and the runtime of extension. Seeding $S'$ is fast and  takes $O(m)$ time. 

    Since $\mathcal{C}$ is an optimal chain, by Lemma \ref{lemma:chaining_runtime}, $\mathbb{E}[T_{Chain}] = O(mn^{C\alpha}\log n)$, and by Lemma \ref{lemma:extension_runtime}, $\mathbb{E}[T_{Ext}] = O(mn^{C\alpha}\log n)$. Thus, $\mathbb{E}[T_{SCE}] = O(m) + O(\mathbb{E}[T_{Chain}]) + O(\mathbb{E}[T_{Ext}]) = O(mn^{C\alpha}\log n)$.
\end{proof}

\section{Old recoverability and clipping anchors}
\label{sec:old-recoverability}

The main difference between the analysis presented in this paper, where indels are present, and the prequel, which deals with the substitution-only model, is the presence of clipping anchors. The purpose of this section is to briefly describe some of the issues surfaced by clipping anchors, and the motivation of our recoverability definition to solve these issues. An overview of the issues will be given first, and proofs will follow.

Recall that a clipping anchor is an anchor that contains a point on the homologous path but does not entirely ``contain'' the path between the start and end of the anchor. Formally, if the anchor begins at $(i,j) \in |S| \times |S'|$, then it is clipping if $\{(i+t,j+t) \mid 0 \le t \le k-1 \} = \{(x,y) \in P_H \mid i \le x \le i+k-1 \land j \le y \le j+k-1\}$. The primary difficulty from clipping anchors is that for large enough $n$, they are on the same order as the number of homologous anchors. In the setup shown in the paper, any optimal chain is all homologous and clipping anchors. Because of this, the expected number of points missed due to clipping anchors is at least $O(\mathbb{E}(N_C))$. However, $\mathbb{E}(N_C) = O(\frac{m}{k}n^{-C\alpha})$. 

The initial recoverability definition we used was a simple extension of the one in the prequel: all points on the homologous path that lie in an anchor or an extension region are considered ``recovered'', as shown below.

\begin{definition}[Initial definition of recoverability]
    Given a chain $\mathcal{C} = ((i_1,j_1),\dots,(i_u,j_u))$, we define the union of all possible alignments for the chain $\mathcal{C}$, $Align(\mathcal{C})$, as:
    $$Align(\mathcal{C}) = \bigcup_{\ell=1}^u \{ (i_\ell,j_\ell),\dots, (i_\ell +k-1, j_\ell + k - 1)  \} \,\ \cup \,\ \bigcup_{\ell=1}^{u-1} Ext(\ell).$$ Where $Ext(\ell) = \{i_\ell + k - 1,\dots,i_{\ell+1}\} \times \{j_\ell + k - 1,\dots,j_{\ell+1}\}$. If $i_\ell + k - 1> i_{\ell + 1}$ or $j_\ell + k - 1 > j_{\ell + 1}$, then $Ext(\ell) = \emptyset$.

    The recoverability of the chain, $R(\mathcal{C})$, is defined to be:
    $$R(\mathcal{C}) = \frac{|Align(\mathcal{C}) \cap P_H|}{|P_H|}$$
    
\end{definition}

Using this definition, we can bound the expected recoverability of any optimal chain as follows: $\mathbb{E}(R) = \mathbb{E}(R \mid G)\Pr(G) + \mathbb{E}(R \mid G^c)\Pr(G^c) \le \mathbb{E}(R \mid G) + \Pr(G^c) \le \mathbb{E}(R \mid G) + \frac{6}{n}$. In the $G$ space, any optimal chain will contain all homologous anchors and, in the worst case, all clipping anchors. When this occurs, $\mathbb{E}(R \mid G) \le 1 - \mathbb{E}(\frac{N_C}{k|P_H|})$. In other words, the number of missed points is at least the number of clipping anchors over $k$ since there is at least one missed point on $P_H$ for each clipping anchor and at most $k$ clipping anchors can share any point. This is a vast undercount but suffices to show the difficulty of clipping anchors.

We will now lower bound $\mathbb{E}(\frac{N_C}{k|P_H|} \mid G)$. Under $G$, $|P_H| = cm$ for some $c > 0$ due to bounded expansion and contraction (Lemma \ref{lemma:EC_space}). Furthermore, $\mathbb{E}(N_C \mid G) \ge \frac{1}{\sigma}\theta_i \theta_d (1 - \rho_i')\frac{m}{k}(1 - \theta_T)^{k-1} = \frac{\theta_i \theta_d}{\sigma(1-\theta_T)} (1 - \rho_i')\frac{m}{k}(1 - \theta_T)^{k} $. We can rewrite this as $c'\frac{m}{k}(1-\theta_T)^{k}$ after absorbing constants into $c'$. This bound will also be shown rigorously, but the idea is that a clipping anchor can occur from having $k - 1$ positions with no mutations and then the last position in the anchor has an insertion and deletion. Since there are many more ways for clipping anchors to occur, this is clearly a lower bound.

Returning to the recoverability bound, we have that $\mathbb{E}(\frac{N_C}{k|P_H|} \mid G) \ge \frac{c'\frac{m}{k}(1-\theta_T)^{k}}{ckm} = \frac{c'(1 - \theta_T)^k}{k^2c} = O(\frac{1}{k^2}n^{-C\alpha})$. Thus, $\mathbb{E}(R) \le 1 - O(\frac{1}{k^2}n^{-C\alpha}) + \frac{6}{n}$

Note that since $m = \Omega(n^{2C\alpha + \epsilon})$, we have $m^{-1/2} = O(n^{-C\alpha - \epsilon/2}) = o(n^{-C\alpha})$, and so $\mathbb{E}(R \mid G) = 1 - O(\frac{n^{-C\alpha}}{k^2}) = 1 - \Omega(\frac{1}{\sqrt{m}})$. From this, we get that $\mathbb{E}(R) = 1 - \Omega(\frac{1}{\sqrt{m}})$. Thus, if we do not handle clipping anchors explicitly in the definition of recoverability, the expected recoverability converges more slowly than $1 - O(\frac{1}{\sqrt{m}})$.

Below we prove the statements used in the informal analysis above. Recall that $G = EC\land F1 \land F2$


\begin{lemma}[Working in $\bm{G}$]
    The expected number of clipping anchors is at least $O(mk(1 - \theta_T)^k)$
\end{lemma}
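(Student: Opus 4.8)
The plan is to lower-bound $\mathbb{E}(N_C \mid G)$ by linearity of expectation over an explicitly constructed family of clipping configurations. Writing $N_C = \sum_{(a,b)} \mathbf{1}[(a,b)\text{ is a clipping anchor}]$, it suffices to exhibit $\Theta(mk)$ \emph{distinct} candidate anchor positions, each of which is clipping with conditional probability $\Omega((1-\theta_T)^k)$; summing then gives $\mathbb{E}(N_C \mid G) = \Omega(mk(1-\theta_T)^k)$, which is the stated bound. Because $\theta_i,\theta_d,\theta_s$ are fixed constants, a single cheap ``kink'' — one indel at a prescribed position inside an otherwise mutation-free length-$k$ window — already carries probability $\Theta((1-\theta_T)^k)$ by Corollary~\ref{cor:kblock_prob}, so the entire difficulty is \emph{organizing} these kinks so that roughly $k$ overlapping windows around each one are simultaneously (i) exact length-$k$ matches (hence genuine anchors) and (ii) clipping, i.e. touching $P_H$ without lying on it.

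First I would fix a candidate kink location $q \in [p+1,p+m']$ and consider the $k$ windows $[q-k+1:q], \dots, [q:q+k-1]$ that contain $q$. For each such window the target event is: every position is mutation-free except at $q$, where a single match-preserving indel (an insertion of one letter together with a deletion, a partial no-op) occurs, and the boundary letters coincide so the window's diagonal is still an exact match. By the canonical construction of $P_H$ (Def.~\ref{def:canonical-alignment}), such an indel forces $P_H$ to leave the window's diagonal at $q$, so any window whose diagonal matches is necessarily clipping rather than homologous; and since we work in $G = F1 \land F2 \land EC$ there are no spurious anchors, so it cannot be spurious either. I would bound each window's probability below by a constant multiple of $(1-\theta_T)^k$ using Corollary~\ref{cor:kblock_prob} for the mutation-free portion, the factors $\theta_i(1-\rho_i')$ and $\theta_d$ for the indel, and a $1/\sigma$ factor per boundary coincidence.

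Next I would verify distinctness and aggregate. Windows with distinct starting coordinates are distinct anchors, so the $k$ windows around a fixed $q$ contribute $k$ distinct candidates, and kinks at different $q$ (spaced so their window families are disjoint, at the cost of only a constant factor) contribute disjoint families. Since there are $\Theta(m)$ admissible $q$, this gives $\Theta(mk)$ distinct candidates. Summing the per-window lower bounds, converting the unconditional probabilities to conditional ones through $\Pr(A \cap G) \ge \Pr(A) - \Pr(G^c)$ with $\Pr(G^c) \le 6/n$, and using $|P_H| = \Theta(m)$ under $EC$ (Lemma~\ref{lemma:EC_space}), would then yield $\mathbb{E}(N_C \mid G) = \Omega(mk(1-\theta_T)^k)$.

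The main obstacle is the second step: making each of the $\Theta(k)$ overlapping windows around a single kink match \emph{and} clip at probability $\Omega((1-\theta_T)^k)$. The naive execution fails, because a window whose kink sits $d$ positions from its boundary has an off-path tail of length $k-d$ that must also match; charging those to independent $1/\sigma$ coincidences costs $\sigma^{-(k-d)}$, and since $(1-\theta_T)\sigma > 1$ the sum over $d$ telescopes so that only $O(1)$ windows per kink actually reach probability $\Omega((1-\theta_T)^k)$, which would recover merely $\Omega(m(1-\theta_T)^k)$. The avenue I would pursue to repair this is to seat the kink at a \emph{duplication}-type indel (an inserted letter equal to its neighbor in $S$), so that the off-path tails match for the structural reason that $S$ has repeated letters rather than by independent chance; the remaining — and genuinely delicate — task is to show that a single such cheap event seeds $\Theta(k)$ clip-inducing matching windows while the no-mutation budget stays at $\Theta(k)$ positions. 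Establishing this match-and-clip multiplicity is where essentially all the work lies; the probability estimate and the linearity-of-expectation bookkeeping are otherwise routine.
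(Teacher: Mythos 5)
There is a genuine gap here, and you have in fact located it yourself: your route tries to extract the factor of $k$ from the \emph{multiplicity of anchors seeded by one kink}, which forces you to make $\Theta(k)$ overlapping windows match simultaneously --- and that is both too expensive and, more importantly, unnecessary. Your counting is also internally inconsistent: once the kink locations $q$ are spaced so that their window families are disjoint, there are only $\Theta(m/k)$ admissible $q$, so you obtain $\Theta(m)$ distinct candidate anchors, not $\Theta(mk)$, which is consistent with your own admission that the naive execution recovers only $\Omega(m(1-\theta_T)^k)$. The duplication-indel repair, where you say essentially all the work lies, is never needed. The missing idea is that expectation is linear, so overlapping windows may each be charged separately; disjointness is only required among the \emph{events} used to lower-bound a \emph{single} window's clipping probability. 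Fix one window $w$ and sum over the $k$ disjoint events ``exact no-op kink at $q \in w$ (inserted letter equal to the deleted one, cost $\theta_i\theta_d(1-\rho_i')/\sigma$) and no other mutation inside $w$'': each has probability $\Theta((1-\theta_T)^k)$, so $\Pr(w \text{ is clipping}) = \Omega\bigl(k(1-\theta_T)^k\bigr)$. Summing over the $\Theta(m)$ window starts then gives $\Omega(mk(1-\theta_T)^k)$ with no simultaneity requirement and no $\sigma^{-(k-d)}$ tail penalties --- an exact no-op leaves the string unchanged, so \emph{every} otherwise mutation-free window containing it is an exact match for free, and the staircase of $P_H$ at the kink makes it clipping rather than homologous (your appeal to $F1$ to rule out spuriousness is harmless but unnecessary, since the anchor touches $P_H$ by construction).

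The paper's actual proof is a variant of this same accounting, with the boost placed in the per-anchor probability rather than in the anchor count: it partitions $S[p+1:p+m']$ into $m'/k$ \emph{disjoint} $k$-blocks and, within each block, sums over the $\binom{k}{2}$ disjoint configurations ``exact no-op kinks at two prescribed positions, no other mutations in the block,'' each of probability $\bigl(\theta_i\theta_d(1-\rho_i')/\sigma\bigr)^2(1-\theta_T)^{k-2} = \Theta((1-\theta_T)^k)$; the two kinks leave the substrings of $S$ and $S'$ unchanged, so the block's diagonal is a genuine anchor, and the path's kinks make it clipping. Hence each block's anchor is clipping with probability $\Omega\bigl(k^2(1-\theta_T)^k\bigr)$, and summing over the $m'/k$ blocks yields $\Omega\bigl(m'k(1-\theta_T)^k\bigr) = \Omega\bigl(mk(1-\theta_T)^k\bigr)$ via $m' = \Theta(m)$ under $EC$. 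Either bookkeeping works; what does not work is demanding that the $k$ windows around a single kink all be anchors at once (that would require a mutation-free span of length $\approx 2k$, costing $(1-\theta_T)^{2k}$), which is the dead end your proposal runs into.
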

\begin{proof}
    Partition $S[p+1:p+m']$ into $\frac{m'}{k}$ blocks of length $k$. For each $k$-mer, there can be a clipping anchor produced by a simple ``kink'', i.e., an insertion at position $i$ of $S[i]$ followed by a deletion of $S[i]$. If there are two ``kinks'', the result is an anchor since the substrings on $S$ and $S'$ are left unchanged and that anchor clearly clips $P_H$ due to the kinks.

    There are $\binom{k}{2}$ ways to select the locations for the two ``kinks'' and the probability that they occur is $(\theta_i \theta_d (1 - \rho_i')\frac{1}{\sigma})^2 (1 - \theta_T)^{k-2} = c(1 - \theta_T)^k$ after absorbing constants and writing $(1 - \theta_T)^{k-2} = \frac{(1 - \theta_T)^{k}}{(1 - \theta_T)^2}$.
    
    Thus, the expected number of clipping anchors is at least $c\frac{m'}{k}\binom{k}{2}(1 - \theta_T)^{k} = O(m'k(1 - \theta_T)^k)$ after summing over all $\frac{m'}{k}$ $k$-mers in $S[p+1:p+m']$. Using the fact that $m' = cm$ under $G$ gives the result.
\end{proof}

\begin{lemma}[Working in $\bm{G}$]
    The expected number of homologous anchors is at most $O(m(1 - \theta_T)^k)$.
\end{lemma}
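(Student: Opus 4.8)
The plan is to bound $\mathbb{E}(N_H)$ directly by linearity of expectation, writing $N_H$ as a sum of anchor indicators and then controlling separately (i) how many starting positions can contribute a homologous anchor and (ii) the probability that any fixed position does. The whole point of the computation is to extract an $O(m)$ rather than $O(mk)$ prefactor, so the delicate part is the counting, not the probability.

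First I would reindex the sum by the starting $x$-coordinate $i \in S$ alone, arguing that at most $O(1)$ homologous anchors begin at each $i$. By the homologous-path construction (Def.~\ref{def:canonical-alignment}), the only point of $P_H$ with $x$-coordinate $i$ that continues with a diagonal step into $x=i+1$ is the top (maximal-$y$) correspondence point $(i,f(i))$; any lower point $(i,y)$ with $y<f(i)$ lies on a vertical insertion segment and is followed by another vertical step, so no length-$k$ diagonal can start there. Hence distinct homologous anchors have distinct starting $x$-coordinates (boundary and deleted-position cases contribute only lower-order corrections), and $N_H \le \#\{\, i \in [p+1-k,\ p+m'] : \text{a homologous anchor starts at } x\text{-coordinate } i \,\}$, which is a sum of $O(m')$ indicators.

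The key step is the probability bound, which rests on showing that a homologous anchor forces a \emph{mutation-free} run, not merely an indel-free one. Concretely, the event ``a homologous anchor starts at $(i,f(i))$'' is contained in the event ``no mutation occurs at positions $i+1,\dots,i+k-1$.'' For the anchor to satisfy $A=B$ in Def.~\ref{anchor-types}, i.e.\ to coincide with the diagonal inside its box, positions $i+1,\dots,i+k-1$ must have no insertion and no deletion, since either would push $P_H$ off the diagonal. Moreover, because the anchor is a genuine $k$-mer match, each diagonal point must satisfy $S[i+t]=S'[f(i)+t]$; a substitution at position $i+t$ replaces the letter and breaks the match, so no substitution can occur at those positions either. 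This forces a mutation-free stretch of length $k-1$, an event of probability $\theta_0^{\,k-1}$ with $\theta_0 = (1-\theta_i)(1-\theta_d)(1-\theta_s) \ge 1-\theta_T$. This is exactly the contrast with clipping anchors: a clipping anchor can be produced by no-op ``kinks'' while still carrying mutations, and a single point of $P_H$ can sit inside up to $k$ distinct clipping anchors, which is where the previous lemma picks up its extra factor of $k$; each homologous anchor, by contrast, is pinned to one mutation-free diagonal.

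Combining the two parts gives $\mathbb{E}(N_H) \le (m'+k)\,\theta_0^{\,k-1} = O\!\bigl(m'\,\theta_0^{\,k}\bigr) = O\!\bigl(m(1-\theta_T)^k\bigr)$, where the final equality uses $m' \le cm$ under $EC$ (Lemma~\ref{lemma:EC_space}) and writes the exponential base as $1-\theta_T$ following the same informal convention this motivational section uses for the clipping-anchor bound. I expect the main obstacle to be precisely the two claims in the third paragraph taken together: separating the two independent requirements (path-diagonality ruling out indels, and the match condition ruling out substitutions) so that one may conclude ``no mutation,'' and verifying that the count is genuinely indexed by distinct starting coordinates so that no spurious factor of $k$ creeps in. Once those are nailed down, the conclusion is a one-line expectation of a sum of identically distributed indicators.
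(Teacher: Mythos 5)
Your proposal is correct and follows essentially the same route as the paper's proof: bound the count by one potential homologous anchor per starting position in $S[p+1:p+m']$, observe that such an anchor forces a mutation-free run of length $k-1$ (probability $\theta_0^{k-1}$), and sum the $O(m')$ identically distributed indicators. Your version is simply more careful about the two points the paper glosses over---why each $x$-coordinate contributes at most one homologous anchor, and the substitution of $(1-\theta_T)^k$ for $\theta_0^k$ in the final big-$O$, which you correctly flag as the section's informal convention.
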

\begin{proof}
    For a homologous anchor to occur, there must be a sequence of at least $k-1$ positions in $S[p+1:p+m']$ for which there are no mutations at all. This occurs with probability at most $(1 - \theta_T)^{k-1}$. An overestimate gives that every point in $S[p+1:p+m']$ contributes a homologous anchor, showing that $\mathbb{E}(N_H) \le m'(1 - \theta_T)^{k-1} = \frac{m'}{1-\theta_T}(1-\theta_T)^k = O(m(1-\theta_T)^k)$ under $G$.
\end{proof}

Since $mk(1 - \theta_T)^k \gg m(1 - \theta_T)^k$, as $n \to \infty$, clipping anchors are, in fact, the dominant type of anchor.

\end{document}